\def\RR{\hbox{I\kern-.2em\hbox{R}}}
\newcommand{\qed}{\hbox to 0pt{}\hfill$\rlap{$\sqcap$}\sqcup$ \vspace{3mm}}
\numberwithin{equation}{section}
\newtheorem{theorem}{Theorem}
\newtheorem{lemma}[theorem]{Lemma}
\tikzstyle{rect} = [draw, rectangle, fill=blue!20, text width=6em, text centered, minimum height=2em]
\tikzstyle{elli} = [draw, ellipse, fill=red!20, minimum height=2em]
\tikzstyle{circ} = [draw, circle, fill=white!20, minimum width=8pt, inner sep=5pt]
\tikzstyle{diam} = [draw, diamond, fill=white!20, text width=6em, text badly centered, inner sep=0pt]
\tikzstyle{line} = [draw, -latex']
\renewcommand{\@cite}[2]{\textcolor{blue}{[}\textcolor{blue}{#1}\textcolor{blue}{]}}
\date{}
\begin{document}
	
		\title{Comparative Analysis of Stochastic and Predictable Models in the HIV Epidemic Across Genders}

	\author[1]{\small Nuzhat Nuari Khan Rivu\thanks{Email: nuzhatnuarikhan-rivu@uiowa.edu }}
	\author[2]{\small Md Kamrujjaman\thanks{Corresponding author Email: kamrujjaman@du.ac.bd}}
	\author[3]{\small  Shohel Ahmed\thanks{Email: shohel2@ualberta.ca }}

\affil[1]{\footnotesize Department of Mathematics, University of Iowa, Iowa City, IA 52242, USA}
\affil[2]{\footnotesize Department of Mathematics, University of Dhaka, Dhaka 1000, Bangladesh}
\affil[3]{\footnotesize Department of Mathematical and Statistical Sciences, University of Alberta, Edmonton, AB, Canada}

\maketitle
\vspace{-1.0cm}
\noindent\rule{6.35in}{0.02in}\\
\noindent {\bf Abstract}\\
This study conducts a comparative analysis of stochastic and deterministic models to better understand the dynamics of the HIV epidemic across genders. By incorporating gender-specific transmission probabilities and treatment uptake rates, the research addresses gaps in existing models that often overlook these critical factors. The introduction of gender-specific treatment, where only one gender receives treatment, allows for a detailed examination of its effects on both male and female populations. Two compartmental models, divided by gender, are analyzed in parallel to identify the parameters that most significantly impact the control of infected populations and the number of treated females. Stochastic methods, including the Euler, Runge-Kutta, and Non-Standard Finite Difference (SNSFD) approaches, demonstrate that stochastic models provide a more accurate and realistic portrayal of HIV transmission and progression compared to deterministic models. Key findings reveal that the stochastic Runge-Kutta method is particularly effective in capturing the epidemic's complex dynamics, such as subtle fluctuations in transmission and population changes. The study also emphasizes the crucial role of transmission probabilities and treatment rates in shaping the epidemic’s trajectory, highlighting their importance for optimizing public health interventions. The research concludes that advanced stochastic modeling is essential for improving public health policies and responses, especially in resource-constrained settings.
\\

\noindent{\it \footnotesize Keywords}: {\small Stochastic Modeling; HIV Epidemic; Antiretroviral Therapy (ART); Gender-Specific Transmission.}\\
\noindent
\noindent{\it \footnotesize AMS Subject Classification 2020}: 92-10, 92C42, 92C60, 92D30, 92D45. \\
\noindent\rule{6.35in}{0.02in}

\clearpage
\section*{Highlights}
\begin{enumerate}
	\item Incorporates gender-specific transmission probabilities and treatment uptake to address gaps in traditional HIV epidemic models. 
	\item Conducts a parallel analysis of gender-based compartmental models, identifying key parameters influencing the control of infected populations (both male and female) and the number of treated females.
	 \item Highlights the significant impact of varying transmission probabilities and treatment rates on epidemic behavior, underscoring the critical role of antiretroviral therapy (ART).
	\item Introduces gender-specific treatment, where only one gender receives treatment, and examines its effects on both genders. 
	\item Emphasizes the necessity of advanced stochastic modeling in epidemiology to enhance public health policy and response strategies.
\end{enumerate}

\clearpage
\section{Introduction}
\label{sec:introduction}
Human immunodeficiency virus (HIV) continues to be a critical global health issue, affecting approximately 42.3 million people globally, with 1.3 million new infections and nearly 1 million AIDS-related deaths recorded in 2023 alone \cite{who2024}. The origins of the HIV/AIDS pandemic can be traced back to June 1981, when the Centers for Disease Control and Prevention (CDC) published a report documenting the first cases of a rare lung infection, Pneumocystis carinii pneumonia, in five homosexual men in Los Angeles, later identified as the first known instances of AIDS \cite{cdcgov}. Initially perceived as confined to high-risk groups, such as homosexual men and intravenous drug users, HIV quickly transcended these boundaries, affecting diverse populations worldwide, with sub-Saharan Africa being particularly devastated by the disease, accounting for nearly two-thirds of global infections \cite{unaids_africa, buve2002spread}. The spread of HIV has been shaped by a complex interplay of social, economic, and cultural factors, which influenced both transmission patterns and the effectiveness of public health responses. Early efforts to combat the epidemic were often hampered by insufficient knowledge, a lack of resources, and deeply ingrained stigma and discrimination, which not only marginalized those most affected but also delayed the global response in terms of education, prevention, and treatment efforts \cite{herek2002,mohammad2024wiener,hassan2023mathematical}. International initiatives like the President’s Emergency Plan for AIDS Relief (PEPFAR) and the Global Fund to Fight AIDS, Tuberculosis, and Malaria have been instrumental in mobilizing resources and implementing large-scale interventions, leading to a substantial reduction in new HIV infections and AIDS-related deaths globally \cite{pepfar2023, globalfund2023}. Despite these advances, challenges remain, particularly in addressing the social and economic inequities that continue to fuel the epidemic, ensuring sustained access to treatment, and working toward global health targets such as ending AIDS as a public health threat by 2030 \cite{unaids2022}. Continued global cooperation, scientific innovation, and efforts to combat stigma and discrimination will be critical in achieving these goals, as HIV/AIDS remains one of the most significant public health challenges of our time.

The introduction of antiretroviral therapy (ART) in the mid-1990s marked a major turning point in reducing HIV-related mortality and transmission rates, significantly improving life expectancy and quality of life for people living with HIV \cite{zolopa2010evolution}. Clinical data reveal that infection is established early \cite{chun1998early}, and Archin et al. \cite{archin2012early} demonstrated that infected cells are primarily generated during the primary infection stage. Early initiation of ART can significantly reduce the number of infected cells, suggesting the possibility of limiting or even eradicating the virus. Over the past 20 years, many epidemiological compartment models have been developed to study the dynamics of HIV infection within the host and its transmission among populations. These models assist in decision-making for public health policies by simulating various scenarios and interventions. Early mathematical models were deterministic, offering simplified insights into HIV transmission and progression \cite{nowak2000virus, pankavich2016latent, kim2006latent, rong2009viral, perelson1993dynamics}. The inclusion of ART and its impact on transmission dynamics has also been extensively explored in the literature \cite{kim2006latent, hattaf2012optimal, ogunlaran2016management, ahmed2020dynamics, ahmed2021optimal}. However, these models often fall short of capturing the inherent randomness and variability of real-world epidemics. More recent advancements have introduced stochastic models that incorporate random variations, providing a more complex understanding of disease spread. Studies by Baleanu et al. \cite{baleanu2019competitive} and Kimbir et al. \cite{kimbir2012mathematical} have highlighted the utility of stochastic modeling in predicting epidemic outcomes and shaping public health strategies. This study builds on previous work by comparing the effectiveness of deterministic and stochastic models in capturing the complexities of the HIV epidemic, particularly across genders. Stochastic models consider the uncertainty and unpredictability inherent in the spread of disease, offering a more comprehensive understanding of potential outcomes and the efficacy of different control measures \cite{minnis2005effectiveness,mohammad2025stochastic,akhi2024seasonal}.

While significant progress has been made in modeling the HIV epidemic, there remains a gap in comparative analyses between stochastic and deterministic models specifically tailored to gender differences \cite{raza2019reliable}. Our study undertakes a comprehensive analysis to assess the reliability of a stochastic HIV model within a two-sex population, considering the impact of counseling and antiretroviral therapy (ART). Additionally, we examine scenarios where only females receive ART, an aspect often neglected by existing models, which tend to overlook gender-specific transmission probabilities and treatment uptake. Moreover, the need for more robust numerical methods to enhance the accuracy of stochastic models is evident, particularly in reflecting the real-world uncertainties and variabilities in HIV transmission and treatment efficacy. The primary objective of this study is to conduct a comparative analysis of stochastic and deterministic models of the HIV epidemic, focusing on gender differences, and demonstrating that stochastic models provide a more realistic depiction of disease dynamics by accounting for inherent randomness in transmission and treatment processes. This comparison leverages the threshold number $H^*$ to explore the conditions under which the disease can be controlled or persists within a population. We critique traditional numerical methods such as stochastic Euler and stochastic Runge–Kutta, highlighting their limitations with large time step sizes, and introduce the Stochastic Non-Standard Finite Difference (SNSFD) scheme \cite{baleanu2019competitive}. This novel scheme successfully preserves critical properties like positivity, boundedness, and dynamical consistency, providing a robust framework for analyzing the dynamics of the HIV epidemic \cite{ahmed2021optimal}.

The paper is organized as follows: Section \ref{sec:model} presents the deterministic HIV model, outlining the state variables, parameters, and differential equations that describe the progression of the disease. In section \ref{sec:steady_states}, we examine the model's behavior at both the disease-free and endemic equilibria. Following this, section \ref{sec:stochasticity} introduces stochastic elements, incorporating randomness into the model through stochastic differential equations. The subsequent section details the computational methods and numerical techniques used, including the stochastic Euler, stochastic Runge-Kutta, and Stochastic Non-Standard Finite Difference (SNSFD) methods. A comparative analysis is then provided, with graphical comparisons of the deterministic and stochastic models, showcasing the additional variability captured by the stochastic approach. In section \ref{sec:results}, the results and discussions emphasize the advantages of stochastic modeling in accurately depicting the dynamics of the HIV epidemic and the effects of antiretroviral therapy (ART). The paper concludes by highlighting the importance of stochastic modeling in informing public health strategies and suggesting areas for future research.

\section{Predictable HIV Model}\label{sec:model}
We propose a Predictable HIV model that incorporates both sexes, while accounting for the effects of counseling and antiretroviral therapy (ART). The state variables in the HIV model represent the different compartments within the population for modeling the progression and treatment of HIV: \(S_m\) for susceptible males, \(I_m\) for infected males, \(T_m\) for treated males, \(S_f\) for susceptible females, \(I_f\) for infected females, and \(T_f\) for treated females. We describe the  dynamics using the following differential equations::
\begin{equation}
\left\{
\begin{aligned}
\frac{dS_{m}}{dt} &= bN_m - P_{m}S_{m}(t) - \mu S_{m}(t) \\
\frac{dI_{m}}{dt} &= P_{m}S_{m}(t) - (\mu + \sigma_0 + \delta_m)I_{m}(t) \\
\frac{dT_{m}}{dt} &= \delta_m I_{m}(t) - (\mu + \sigma)T_{m}(t) \\
\frac{dS_{f}}{dt} &= bN_f - P_{f}S_{f}(t) - \mu S_{f}(t) \\
\frac{dI_{f}}{dt} &= P_{f}S_{f}(t) - (\mu + \sigma_0 + \delta_f)I_{f}(t) \\
\frac{dT_{f}}{dt} &= \delta_f I_{f}(t) - (\mu + \sigma)T_{f}(t)
\end{aligned}
\right.
\tag{1}
\label{Model_4}
\end{equation}

\noindent where $N_m$ and $N_f$ represent the overall number of males and females, respectively, i.e.
\begin{equation}
    N_m(t)=S_m(t)+I_m(t)+T_m(t)  \tag{2} 
\end{equation}
\begin{equation}
    N_f(t)=S_f(t)+I_f(t)+T_f(t)  \tag{3} 
\end{equation}

The \(P_m(t)\) and \(P_f(t)\) are frequency rates of infections of both males and females, respectively, and are as follows:\\
\begin{equation}
P_m(t) = \frac{c_m(t)\gamma_f I_{f}(t) - c_m^*(t)\gamma_f^*T_{f}(t)}{N_f} \tag{4}
\end{equation}
\begin{equation}
P_f(t) = \frac{c_f(t)\gamma_m I_{m}(t) - c_f^*(t)\gamma_m^*T_{m}(t)}{N_m} \tag{5}
\end{equation}

This paper describes several parameters that are crucial for modeling the progression and treatment of HIV. The parameters include \(\delta_m\) and \(\delta_f\), which represent the proportion of infected males and females receiving antiretroviral therapy (ART), respectively. The transmission probabilities \(\gamma_m\) and \(\gamma_f\) denote the likelihood of HIV transmission by an infected male and female, respectively. Additionally, \(\gamma^*_m\) and \(\gamma^*_f\) represent the transmission probabilities by males and females undergoing ART. The average number of sexual contacts between males and females, represented by \(c_m\) and \(c_f\), and those receiving ART, represented by \(c^*_m\) and \(c^*_f\), are also considered. The parameters \(\mu\), \(\sigma_0\), and \(\sigma\) denote the death rates of both genders, the mortality rate of infected individuals without ART, and the mortality rate of those receiving ART, in that sequence. The birth rate \(b\) and the life duration after infection \(T\) are included as well. The randomness in the system is captured by parameters \(\delta_1\), \(\delta_2\), and \(\delta_3\), which introduce stochastic elements into the model. These parameters are significant as they allow the model to account for the inherent variability and uncertainty in disease transmission and treatment processes, providing a more accurate depiction of the epidemic's dynamics. The impact of ART on reducing transmission rates and improving survival rates highlights the importance of these parameters in public health interventions. A detailed description of the state variable and the different parameters for the model \ref{Model_4} is listed in table \ref{table:state} and table \ref{table:parameter}, respectively. A schematic representation of the model \ref{Model_4} is given in figure \ref{fig:model}.

\begin{figure}[H]
	\centering
	\begin{tikzpicture}[
		compartment/.style={draw, rectangle, minimum height=1.2cm, minimum width=2cm, align=center, rounded corners, drop shadow},
		label/.style={align=center},
		arrow/.style={-Stealth, thick, draw=black},
		in_arrow/.style={-Stealth, thick, draw=black, dashed}
		]
		
		\node[compartment, fill=yellow!60] (Hm1) {$S_m(t)$}; 
		\node[compartment, fill=red!50, right=3cm of Hm1] (Hm2) {$I_m(t)$}; 
		\node[compartment, fill=green!50, right=3cm of Hm2] (Hm3) {$T_m(t)$}; 
		
		\node[label, below=1.2cm of Hm1] (mu1) {$\mu$};
		\node[label, below=1.2cm of Hm2] (mu_a0) {$\mu + \sigma_0$};
		\node[label, below=1.2cm of Hm3] (mu_a) {$\mu + \sigma$};
		
		\node[label, left=1.3cm of Hm1] (in_rate) {};
		
		\draw[arrow] (Hm1) -- (Hm2) node[midway, above, sloped] {\large $P_m(t)$};
		\draw[arrow] (Hm2) -- (Hm3) node[midway, above, sloped] {\large $\delta_m(t)$};
		
		\draw[arrow, bend right] (Hm1) to (mu1);
		\draw[arrow, bend right] (Hm2) to (mu_a0);
		\draw[arrow, bend right] (Hm3) to (mu_a);
		
		\draw[in_arrow] (in_rate) -- (Hm1) node[midway, above, sloped] {};
		
		 \begin{scope}[on background layer]
			\filldraw[fill=yellow!20, draw=yellow!60] (Hm1) circle[radius=1.5cm];
			\filldraw[fill=red!20, draw=red!50] (Hm2) circle[radius=1.5cm];
			\filldraw[fill=green!20, draw=green!50] (Hm3) circle[radius=1.5cm];
		\end{scope}
	
	\end{tikzpicture}
\end{figure}
\vspace{-12mm}
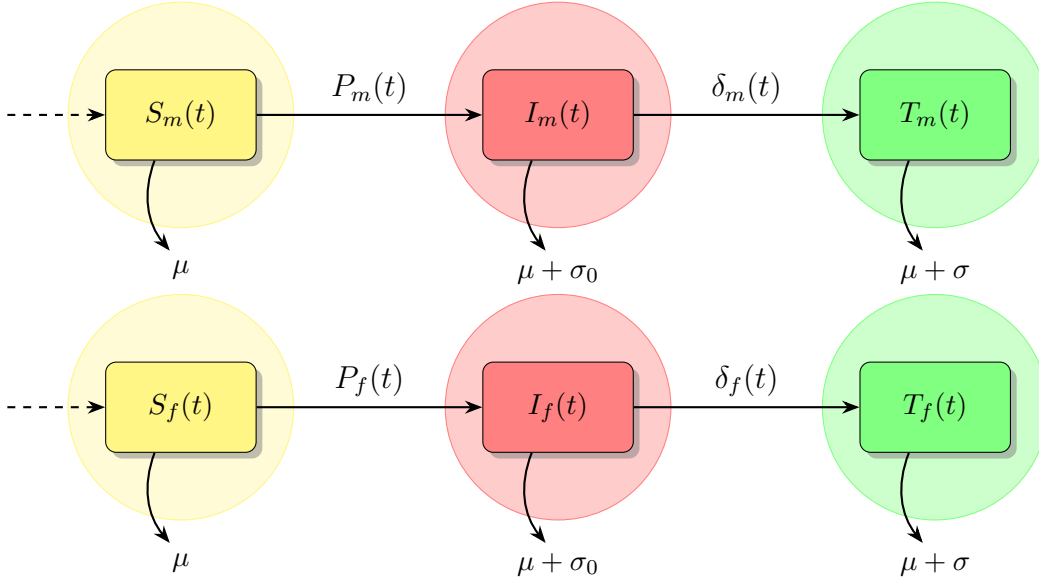
\begin{figure}[H]
\centering
\begin{tikzpicture}[
    compartment/.style={draw, rectangle, minimum height=1.2cm, minimum width=2cm, align=center, rounded corners, drop shadow},
    label/.style={align=center},
    arrow/.style={-Stealth, thick, draw=black},
    in_arrow/.style={-Stealth, thick, draw=black, dashed}
]

    \node[compartment, fill=yellow!60] (Hf1) {$S_f(t)$};
    \node[compartment, fill=red!50, right=3cm of Hf1] (Hf2) {$I_f(t)$};
    \node[compartment, fill=green!50, right=3cm of Hf2] (Hf3) {$T_f(t)$};

    \node[label, below=1.2cm of Hf1] (mu1) {$\mu$};
    \node[label, below=1.2cm of Hf2] (mu_a0) {$\mu + \sigma_0$};
    \node[label, below=1.2cm of Hf3] (mu_a) {$\mu + \sigma$};

    \node[label, left=1.3cm of Hf1] (in_rate) {};

    \draw[arrow] (Hf1) -- (Hf2) node[midway, above, sloped] {\large $P_f(t)$};
    \draw[arrow] (Hf2) -- (Hf3) node[midway, above, sloped] {\large $\delta_f(t)$};

    \draw[arrow, bend right] (Hf1) to (mu1);
    \draw[arrow, bend right] (Hf2) to (mu_a0);
    \draw[arrow, bend right] (Hf3) to (mu_a);

    \draw[in_arrow] (in_rate) -- (Hf1) node[midway, above, sloped] {};

    \begin{scope}[on background layer]
        \filldraw[fill=yellow!20, draw=yellow!60] (Hf1) circle[radius=1.5cm];
        \filldraw[fill=red!20, draw=red!50] (Hf2) circle[radius=1.5cm];
        \filldraw[fill=green!20, draw=green!50] (Hf3) circle[radius=1.5cm];
    \end{scope}

\end{tikzpicture}
\caption{Dynamic compartmental diagram representing the progression of HIV within a two-sex population.}
\label{fig:model}
\end{figure}

\begin{table}[ht]
\centering
\caption{The state variables for Model (\ref{fig:model}).}
\label{tab:5.1}
\begin{tabular}{@{}ll@{}}
State variable & Description                           \\ \midrule
$S_m$               & The number of susceptible males       \\
$I_m$               & The number of infected males        \\
$T_m$            & The number of infected males who recovered by getting ART \\
$S_f$             & The number of susceptible females   \\
$I_f$               & The number of infected females \\
$T_f$               & The number of infected females who recovered by getting ART  \\ \bottomrule
\end{tabular}
\label{table:state}
\end{table}

\begin{table}[htbp]
\centering
\caption{Description of parameters in Model (\ref{fig:model}).}
\label{tab:5.2}
\resizebox{\textwidth}{!}{%
\begin{tabular}{@{}clccc@{}}
\toprule
\textbf{Parameter} & \textbf{Description} & \multicolumn{2}{c}{\textbf{Values}} & \textbf{Source} \\
\cmidrule(lr){3-4}
& & \textbf{DFE} & \textbf{EE} & \\ 
\midrule
$\delta_m$ & The portion of infected males getting ART & 0.15 & 0.10 & \cite{kimbir2012mathematical} \\
$\delta_f$ & The portion of infected females getting ART & 0.90 & 0.35 & \cite{kimbir2012mathematical} \\
$\gamma_m$ & The probability of transmission by a male who is infected & 0.30 & 0.30 & \cite{raza2019competitive} \\
$\gamma_f$ & The probability of transmission by a female who is infected & 0.15 & 0.15 & \cite{raza2019competitive} \\
$\gamma^*_m$ & The probability of transmission by a male who is getting ART & 0.25 & 0.25 & Assumed \\
$\gamma^*_f$ & The probability of transmission by a female who is getting ART & 0.12 & 0.12 & Assumed \\
$c_m$ & The average number of sexual cohorts between males who are infected and females & 3 & 8 & Estimated \\
$c_f$ & The average number of sexual cohorts between females who are infected and males & 3 & 8 & Estimated \\
$c^*_m$ & The average number of sexual cohorts between males who are receiving ART and females & 1 & 1 & Assumed \\
$c^*_f$ & The average number of sexual cohorts between females who are receiving ART and males & 1 & 1 & Assumed \\
$T$ & The duration of life for males and females after infection & 10 & 10 & \cite{raza2019competitive} \\
$K$ & The value of ART for genders & 1 & 1 & \cite{raza2019competitive} \\ 
$b$ & The birth rate of both genders & 0.5 & 0.5 & Estimated \\
$\mu$ & The death rate of both genders & 0.5 & 0.5 & Estimated \\
$\sigma_0$ & The mortality rate of infected genders who do not receive ART & 0.20 & 0.20 & \cite{raza2019competitive} \\
$\sigma$ & The mortality rate of infected genders who receive ART & 0.13 & 0.13 & \cite{raza2019competitive} \\
$\delta_1$ & The randomness of equation - of the system & 0.4 & 0.4 & Estimated \\
$\delta_2$ & The randomness of equation - of the system & 0.3 & 0.3 & Estimated \\
$\delta_3$ & The randomness of equation - of the system & 0.2 & 0.2 & Estimated \\
\bottomrule
\end{tabular}
}
\label{table:parameter}
\end{table}

\noindent The HIV model (\ref{Model_4}) is shown below in its normalised and simplified form \cite{raza2019competitive}:
\begin{flalign}
\frac{di_{m}(t)}{dt} &= (c_{m}\gamma_f i_{f}(t) - c_{m}^*\gamma_f^*t_{f}(t))(1 - i_{m}(t) - t_{m}(t))- (b + \delta_m)i_{m}(t)+ \sigma t_{m}(t)i_{m}(t) && \nonumber \\
&\quad + \sigma_0 i_{m}^2(t), &\tag{6} \label{eq:15} \\
\frac{di_{f}(t)}{dt} &= (c_{f}\gamma_m i_{m}(t) - c_{f}^*\gamma_m^* t_{m}(t))(1 - i_{f}(t) - t_{f}(t))- (b + \delta_f)i_{f}(t)+ \sigma t_{f}(t)i_{f}(t) && \nonumber \\
&\quad + \sigma_0 i_{f}^2(t), &\tag{7} \\
\frac{dt_{f}(t)}{dt} &= \delta_f i_{f}(t) - (b + \sigma)t_{f}(t) + \sigma_0 i_{f}(t)t_{f}(t) + \sigma t_{f}^2(t). &\tag{8} \label{eq:18}
\end{flalign}
where,
\[
s_{m}(t) = \frac{S_{m}(t)}{N_m}, \quad i_{m}(t) = \frac{I_{m}(t)}{N_m}, \quad t_{m}(t) = \frac{T_{m}(t)}{N_m},
\]
\[
s_{f}(t) = \frac{S_{f}(t)}{N_f}, \quad i_{f}(t) = \frac{I_{f}(t)}{N_f}, \quad t_{f}(t) = \frac{T_{f}(t)}{N_f},
\]
with conditions,
\[
s_{m}(t) + i_{m}(t) + t_{m}(t) = 1 \quad \text{and} \quad s_{f}(t) + i_{f}(t) + t_{f}(t) = 1.
\]

\subsection*{Basic Reproduction Number ($H^*$):}
To calculate the basic reproduction number ($H^*$), first of all we need to find the infection rates (\(\mathcal{F}\)) and transition rates (\(\mathcal{V}\)) \cite{dietz1993estimation}. \\
\textbf{Infection Rates (\(\mathcal{F}\)):}
\begin{align*}
	\mathcal{F}_m &= c_m \gamma_f \frac{I_f}{N_f} S_m \\
	\mathcal{F}_f &= c_f \gamma_m \frac{I_m}{N_m} S_f 
\end{align*}

\textbf{Transition Rates (\(\mathcal{V}\)):}
\begin{align*}
	\mathcal{V}_m &= (\mu + \sigma_0 + \delta_m) I_m \\
	\mathcal{V}_f &= (\mu + \sigma_0 + \delta_f) I_f 
\end{align*}

Then, we need to construct the next generation matrix ($K$). For that, we get,\\
\textbf{New Infections:}
\[
\mathbf{F} = \begin{bmatrix}
	0 & c_m \gamma_f \\
	c_f \gamma_m & 0 \\
\end{bmatrix}
\]

\textbf{Transitions:}
\[
\mathbf{V} = \begin{bmatrix}
	\mu + \sigma_0 + \delta_m & 0 \\
	0 & \mu + \sigma_0 + \delta_f \\
\end{bmatrix}
\]

Then,

\[
\mathbf{V}^{-1} = \begin{bmatrix}
	\frac{1}{\mu + \sigma_0 + \delta_m} & 0 \\
	0 & \frac{1}{\mu + \sigma_0 + \delta_f} \\
\end{bmatrix}
\]

So, the \textbf{Next Generation Matrix \(\mathbf{K}:\)}
\[
\mathbf{K} = \mathbf{F} \mathbf{V}^{-1} = \begin{bmatrix}
	0 & c_m \gamma_f \\
	c_f \gamma_m & 0 \\
\end{bmatrix}
\begin{bmatrix}
	\frac{1}{\mu + \sigma_0 + \delta_m} & 0 \\
	0 & \frac{1}{\mu + \sigma_0 + \delta_f} \\
\end{bmatrix}
= \begin{bmatrix}
	0 & \frac{c_m \gamma_f}{\mu + \sigma_0 + \delta_f} \\
	\frac{c_f \gamma_m}{\mu + \sigma_0 + \delta_m} & 0 \\
\end{bmatrix}
\]

\noindent Now, to calculate the basic reproduction number, $H^*$ is the spectral radius (dominant eigenvalue) of the next-generation matrix \(\mathbf{K}\).

The eigenvalues \(\lambda\) of \(\mathbf{K}\) are determined by solving:
\[
\text{det}(\mathbf{K} - \lambda \mathbf{I}) = 0
\]

For matrix \(\mathbf{K}\):
\[
\begin{vmatrix}
	-\lambda & \frac{c_m \gamma_f}{\mu + \sigma_0 + \delta_f} \\
	\frac{c_f \gamma_m}{\mu + \sigma_0 + \delta_m} & -\lambda \\
\end{vmatrix}
= 0
\]

This simplifies to:
\[
\lambda^2 = \frac{c_m \gamma_f c_f \gamma_m}{(\mu + \sigma_0 + \delta_f)(\mu + \sigma_0 + \delta_m)}
\]

Therefore, the Basic Reproduction Number ($H^*$) is:
\[
H^* = \sqrt{\frac{c_m \gamma_f c_f \gamma_m}{(\mu + \sigma_0 + \delta_f)(\mu + \sigma_0 + \delta_m)}}
\]

\subsection*{Calculation and Significance of \(H^*\) for DFE and EE:}
From table \ref{tab:5.2}, we have the following parameter values:\\
\noindent For Disease-Free Equilibrium (DFE):
\[
c_m = 3, \, c_f = 3, \, \gamma_m = 0.30, \, \gamma_f = 0.15, \, \mu = 0.5, \, \sigma_0 = 0.20, \, \delta_m = 0.15, \, \delta_f = 0.90,
\]
which gives $H^*= 0.545$.\\
\noindent For Endemic Equilibrium (EE):
\[
c_m = 8, \, c_f = 8, \, \gamma_m = 0.30, \, \gamma_f = 0.15, \, \mu = 0.5, \, \sigma_0 = 0.20, \, \delta_m = 0.10, \, \delta_f = 0.35,
\]
which gives $H^*= 1.85$.

\noindent \textbf{Situational Explanation:}
\begin{itemize}
	\item \textbf{Disease-Free Equilibrium (DFE):} The \( H^*\) value for DFE is approximately 0.545. This means that, on average, each infected individual will infect less than one other person. When \( H^* < 1 \), the infection cannot spread in the population, and the disease will eventually die out. This represents a scenario where the disease is controlled, and new infections are not sufficient to sustain an outbreak.
	
\item	\textbf{Endemic Equilibrium (EE):} The \( H^* \) value for EE is approximately 1.85. This means that, on average, each infected individual will infect more than one other person. When \( H^* > 1 \), the infection can spread in the population, leading to a sustained outbreak or endemic situation. This represents a scenario where the disease is prevalent in the population, with continuous transmission occurring.
\end{itemize}

\noindent These values indicate that the disease dynamics can vary significantly based on the underlying parameters and the equilibrium state of the population \cite{van2008further}. Control measures, such as increasing treatment rates or reducing transmission probabilities, are crucial in moving the population from an endemic to a disease-free state.

\section{The HIV model (when $\delta_f > 0, \delta_m=0$)}
\label{sec:steady_states}
Our approach exclusively focuses on infected females who are undergoing antiretroviral therapy (ART). For this, we place $\delta_f > 0, \delta_m=0$ in the given equations (\ref{eq:15}-\ref{eq:18}) as follows:
\begin{flalign}
&\frac{di_{m}(t)}{dt} = c_{m}\gamma_{f}i_{f}(t)(1 - i_{m}(t)) - b i_{m}(t) + \sigma_{0}i_{m}^{2}(t) \tag{9} \label{eq:19} &\\
&\frac{di_{f}(t)}{dt} = c_{f}\gamma_{m}i_{m}(t)(1 - i_{f}(t) - t_{f}(t)) - (b + \delta_{f})i_{f}(t) + \sigma i_{f}(t)t_{f}(t) + \sigma_{0} i_{f}^{2}(t) \tag{10} &\\
&\frac{dt_{f}(t)}{dt} = \delta_{f} i_{f}(t) - (b + \sigma)t_{f}(t)+ \sigma_{0}i_{f}(t) t_{f}(t) + \sigma t_{f}^{2}(t) \tag{11} \label{eq:21} &
\end{flalign}

\noindent Here, we have only explored the equations in a deterministic environment. In the next section, we have introduced stochasticity to the equations where only the females are receiving ART ($\delta_f > 0$).

\section{The HIV Model with Stochasticity (when $\delta_f > 0, \delta_m=0$)}
\label{sec:stochasticity}
The following is a list of the stochastic HIV model equations that correlate to the deterministic model equations (\ref{eq:19}-\ref{eq:21}) for the HIV epidemic \cite{raza2019reliable,arif2019reliable}:
\begin{flalign}
&i_{m}(t) = (c_{m}\gamma_{f}i_{f}(t)(1 - i_{m}(t)) - b i_{m}(t) 
+ \sigma_{0}i_{m}^{2}(t))\,dt + \delta_{1}dB_{1}(t)i_{m}(t) \tag{12} \label{eq:22} &
\end{flalign}
\begin{flalign}
&i_{f}(t) = (c_{f}\gamma_{m}i_{m}(t)(1 - i_{f}(t) - t_{f}(t)) - (b + \delta_{f})i_{f}(t) + \sigma i_{f}(t)t_{f}(t) + \sigma_{0}i_{f}^{2}(t))\,dt \notag & \\
&\quad + \delta_{2}dB_{2}(t)i_{f}(t) \tag{13}&
\end{flalign}
\begin{flalign}
&t_{f}(t) = (\delta_{f}i_{f}(t) - (b + \sigma)t_{f}(t) + \sigma_{0}i_{f}(t)t_{f}(t) + \sigma t_{f}^{2}(t))\,dt + \delta_{3}dB_{3}(t)t_{f}(t) \tag{14}&
\end{flalign}
The Brownian motion is denoted by \( B_{k}(t) \), \( (k = 1, 2, 3) \), and \( \delta_{1} \), \( \delta_{2} \), and \( \delta_{3} \) are the randomness of each equation (\ref{eq:19}-\ref{eq:21}), respectively. The estimated values of these parameters are given in table \ref{tab:5.2}.

\section{Computational Methods}\label{sec:methods}
For computational results and visualization, we used the Stochastic Euler Method, Stochastic Runge-Kutta Method, and Stochastic Non-Standard Finite Difference Method. Detailed discussions and the discretization process for these methods, 
applied to solve the proposed problems, are provided in Appendix \ref{appen}.

\section{Comparative analysis: Predictable vs Stochastic}\label{sec:graphical}
\subsection{Deterministic}
In this subsection, we have adjusted three parameters: $\gamma_f$, $\gamma_m$, and $\delta_f$ for Endemic Equilibrium. We are examining how these changes affect three population types: $i_m$, $i_f$, and $t_f$.

In figure \ref{fig:sub11}, we can see that, for lower values of the probability of transmission by a female who is infected ($\gamma_f$), the number of infected males would be less. In figure \ref{fig:sub12}, we observe the same result. Also, it is crystal clear from the analysis that the number infected females with ART ($t_f$) would be less if the probability of transmission by a female who is infected ($\gamma_f$) is less (see figure \ref{fig:sub13}).

\begin{figure}[H]
	\centering
	\subfigure[$i_m$ vs $t$.]{
		\includegraphics[width=0.3\linewidth]{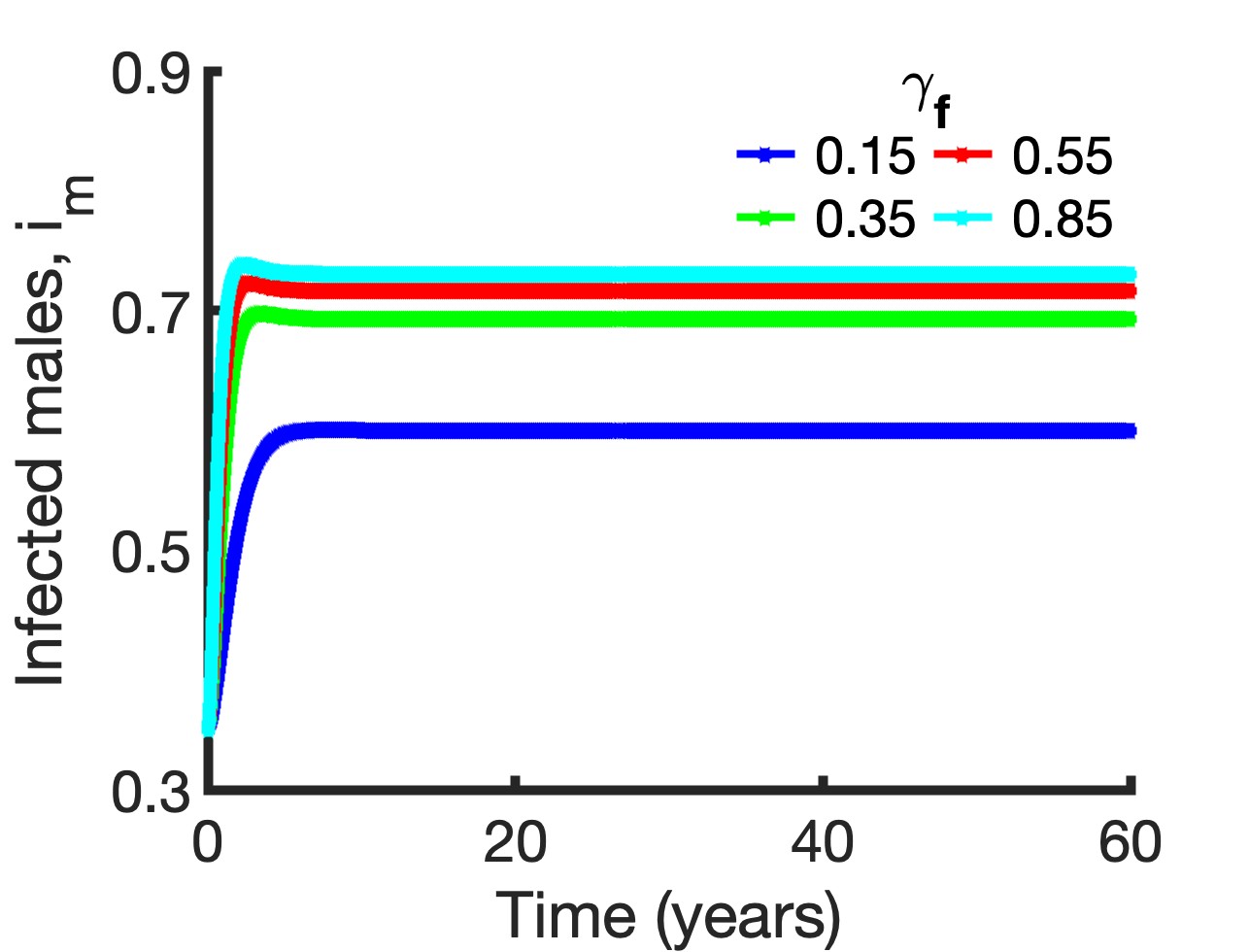}
		\label{fig:sub11}
	}
	\subfigure[$i_f$ vs $t$.]{
		\includegraphics[width=0.3\linewidth]{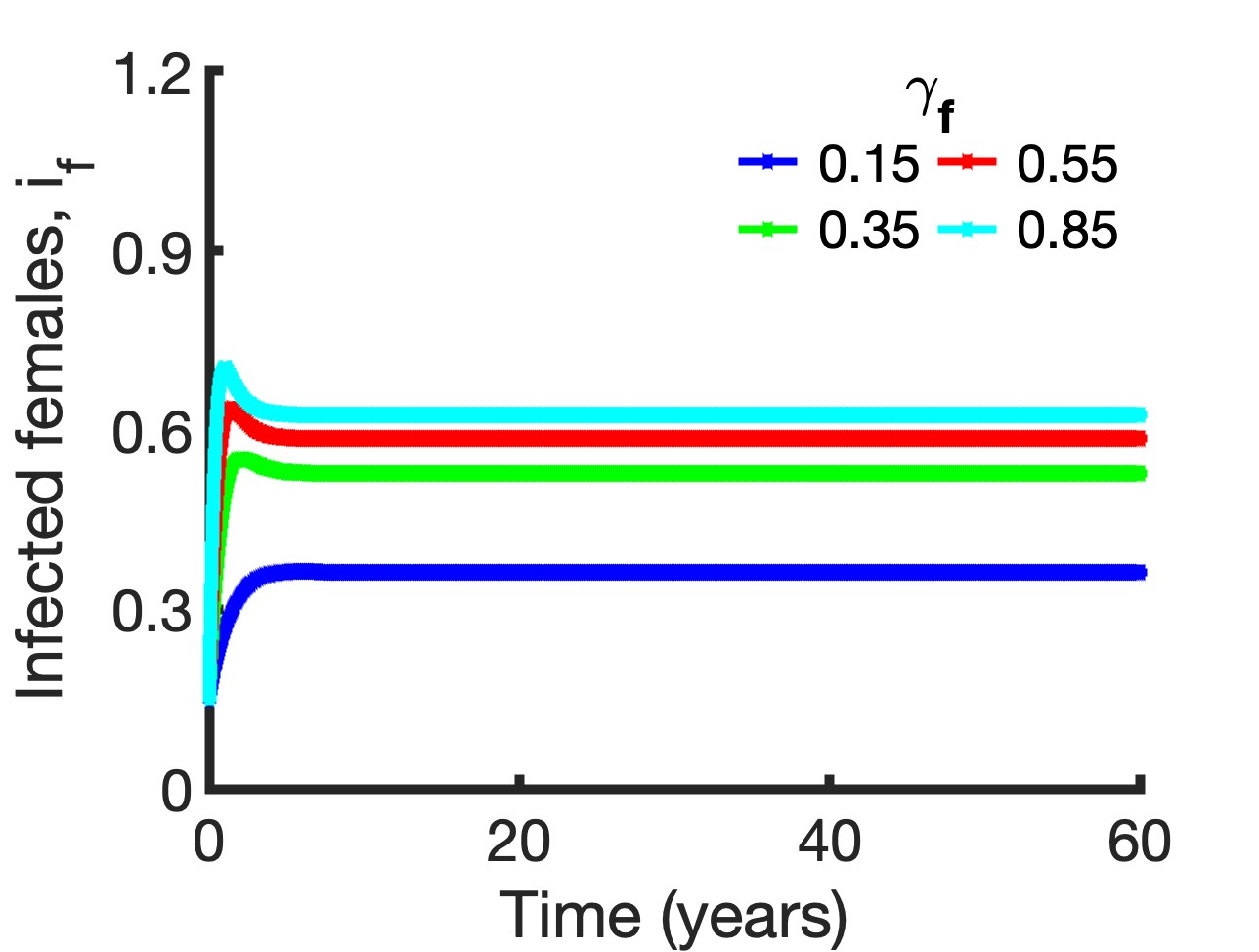}
		\label{fig:sub12}
	}
	\subfigure[$t_f$ vs $t$.]{
		\includegraphics[width=0.3\linewidth]{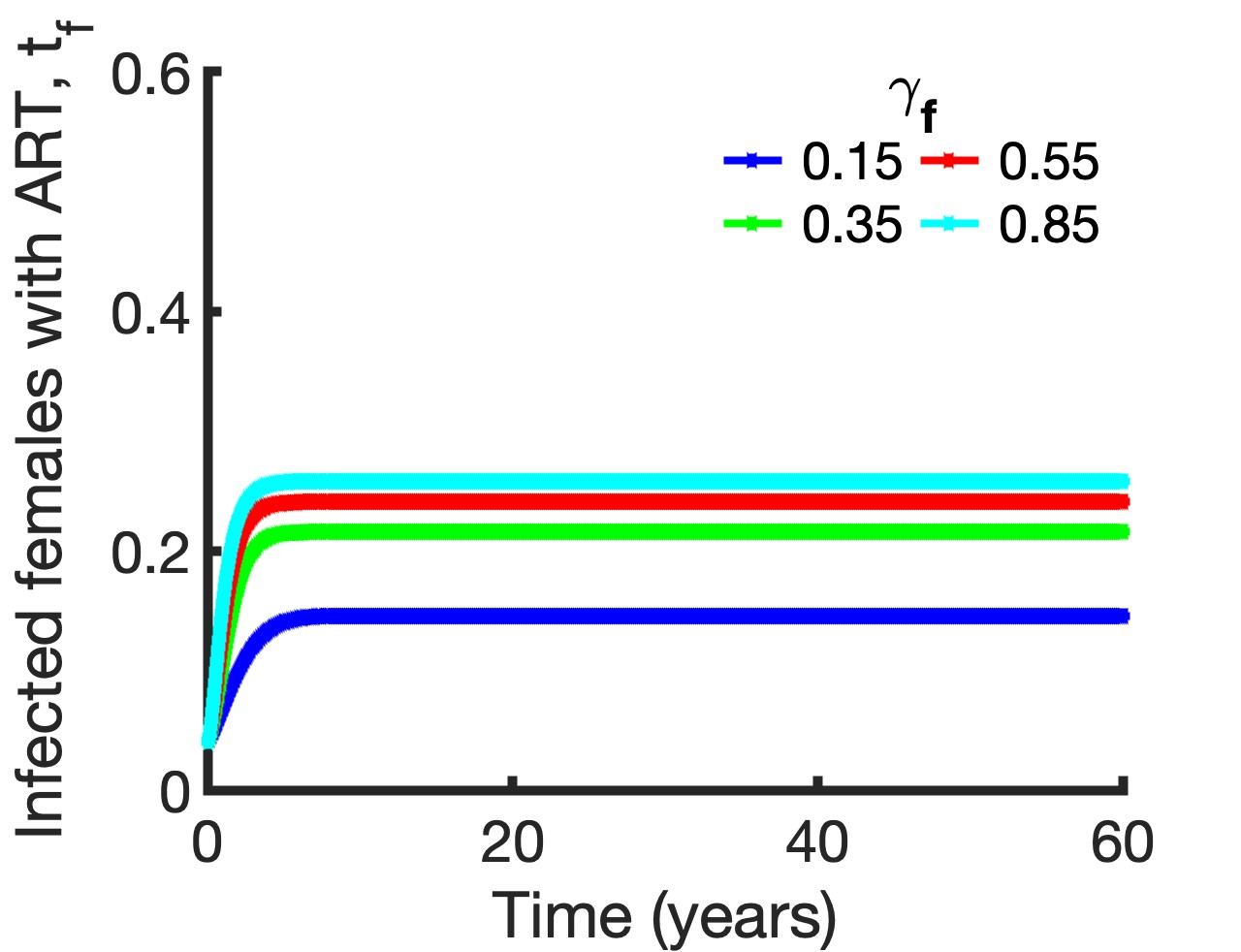}
		\label{fig:sub13}
	}
	\caption{Endemic Equilibrium (EE) of the deterministic model for varying values of the probability of transmission by a female who is infected  ($\gamma_f$).}
	\label{fig:EE_Deterministic_gamma_f}
\end{figure}

\noindent Additionally, we explored how the transmission probability by an infected male ($\gamma_m$) impacts three population groups. Figures \ref{fig:sub21}, \ref{fig:sub22} and \ref{fig:sub23} illustrate that lower values of $\gamma_m$ correspond to reduced numbers of $i_m$, $i_f$, and $t_f$ respectively.

\begin{figure}[H]
	\centering
	\subfigure[$i_m$ vs $t$.]{\includegraphics[width=0.3\linewidth]{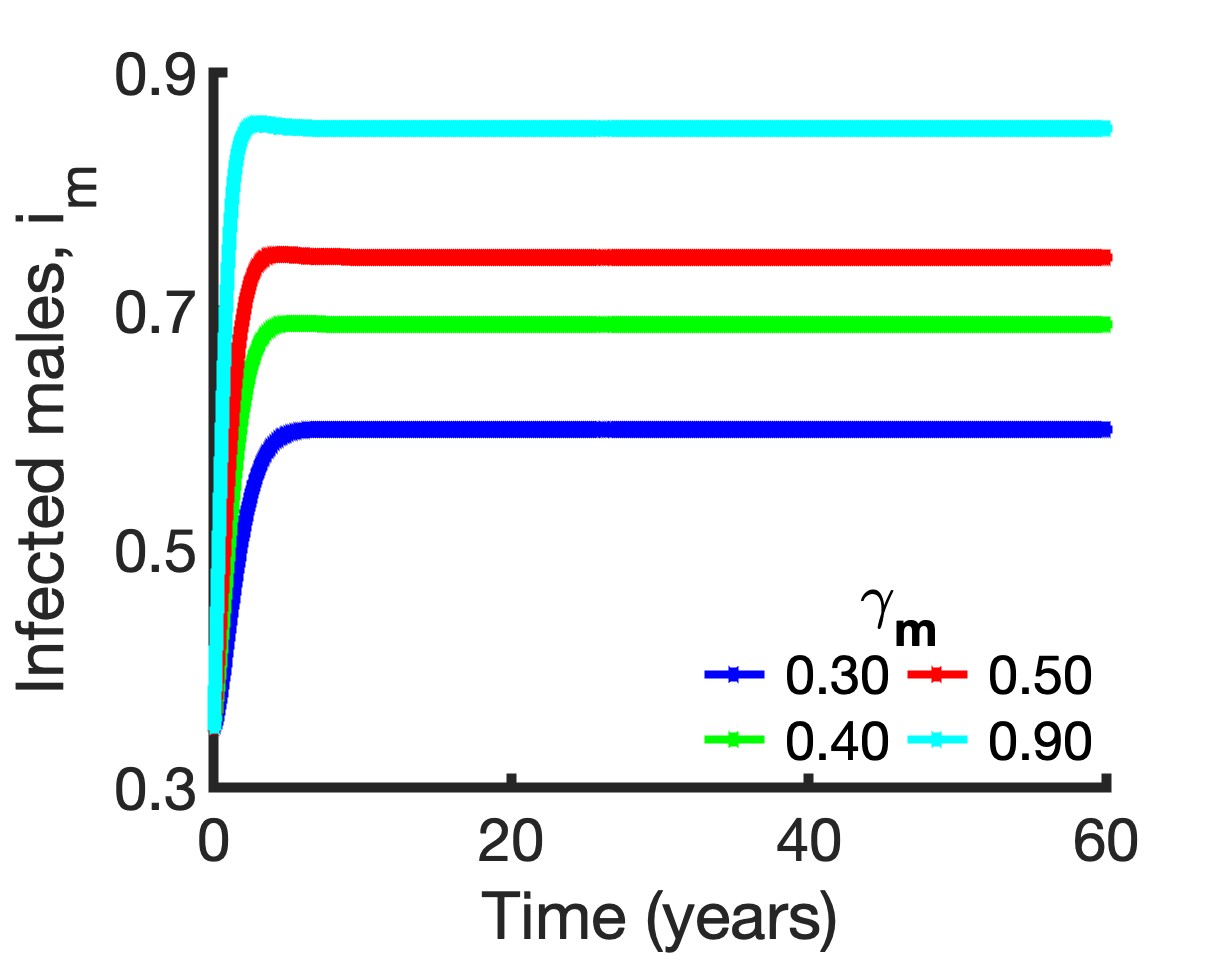}\label{fig:sub21}}
	\subfigure[$i_f$ vs $t$.]{\includegraphics[width=0.3\linewidth]{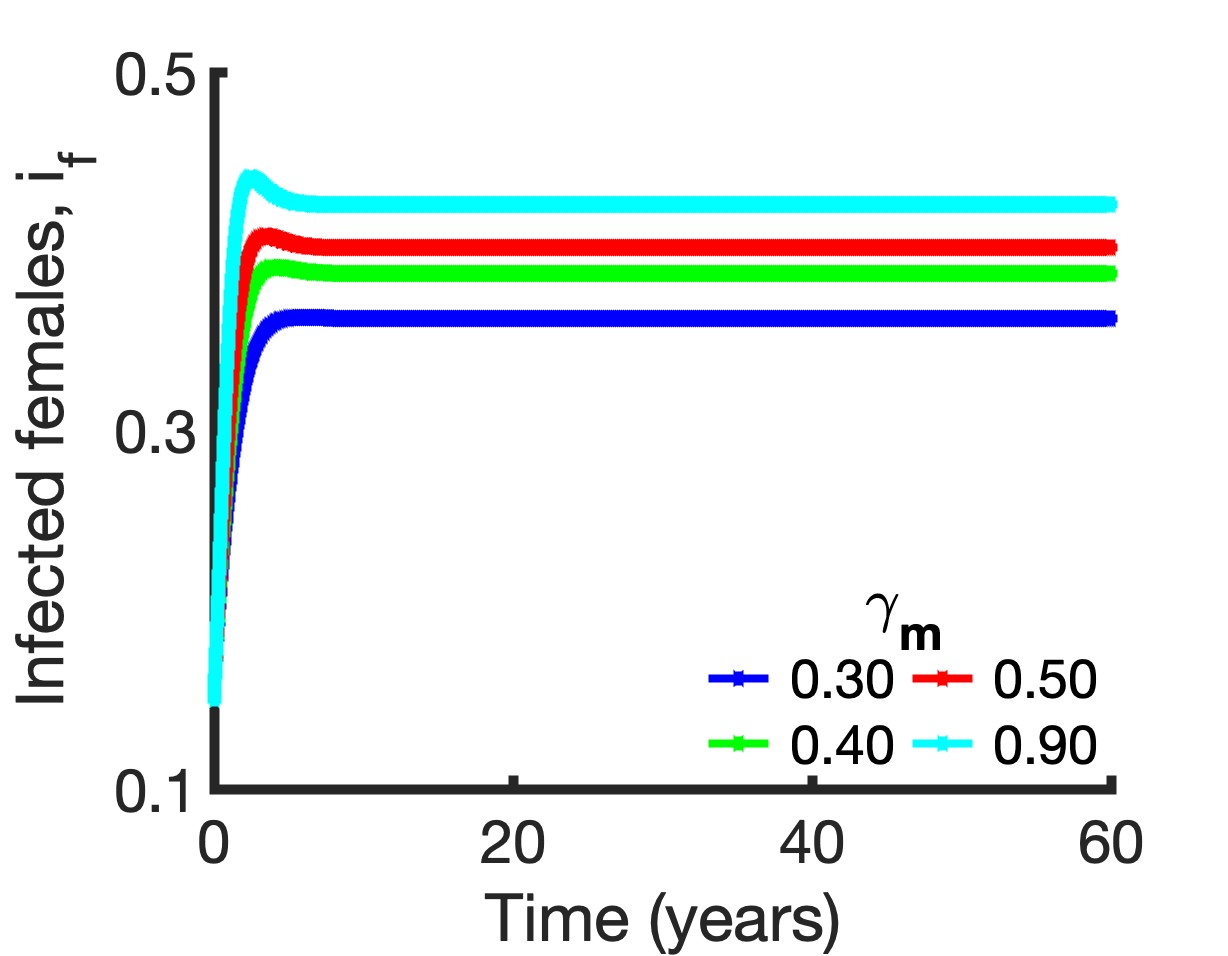}\label{fig:sub22}}
	\subfigure[$t_f$ vs $t$.]{\includegraphics[width=0.3\linewidth]{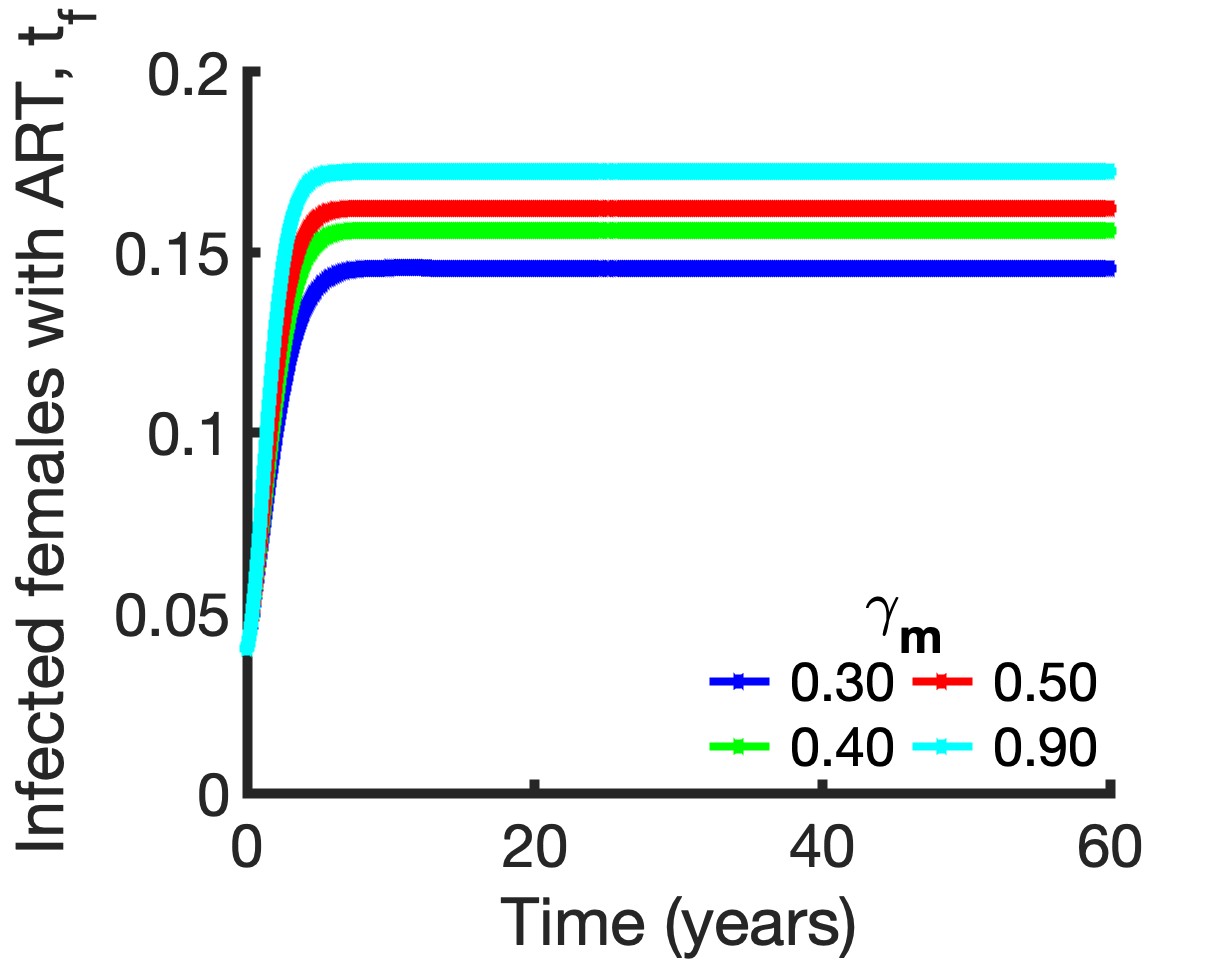}\label{fig:sub23}}
	\caption{Endemic Equilibrium (EE) of the deterministic model for varying values of the probability of transmission by a male who is infected  ($\gamma_m$).}
	\label{fig:EE_Deterministic_gamma_m}
\end{figure}

Furthermore, we analyzed the impact of antiretroviral treatment (ART) provided to infected females ($\delta_f$) on three demographic groups. Increasing the treatment rate results in fewer infected males ($i_m$) and females ($i_f$) (see figures \ref{fig:sub31} and \ref{fig:sub32}). Naturally, as the proportion of infected females receiving ART ($\delta_f$) increases, the number of females treated with ART ($t_f$) also rises which is clearly visible in figure \ref{fig:sub33}.

\begin{figure}[H]
	\centering
	\subfigure[$i_m$ vs $t$.]{\includegraphics[width=0.3\linewidth]{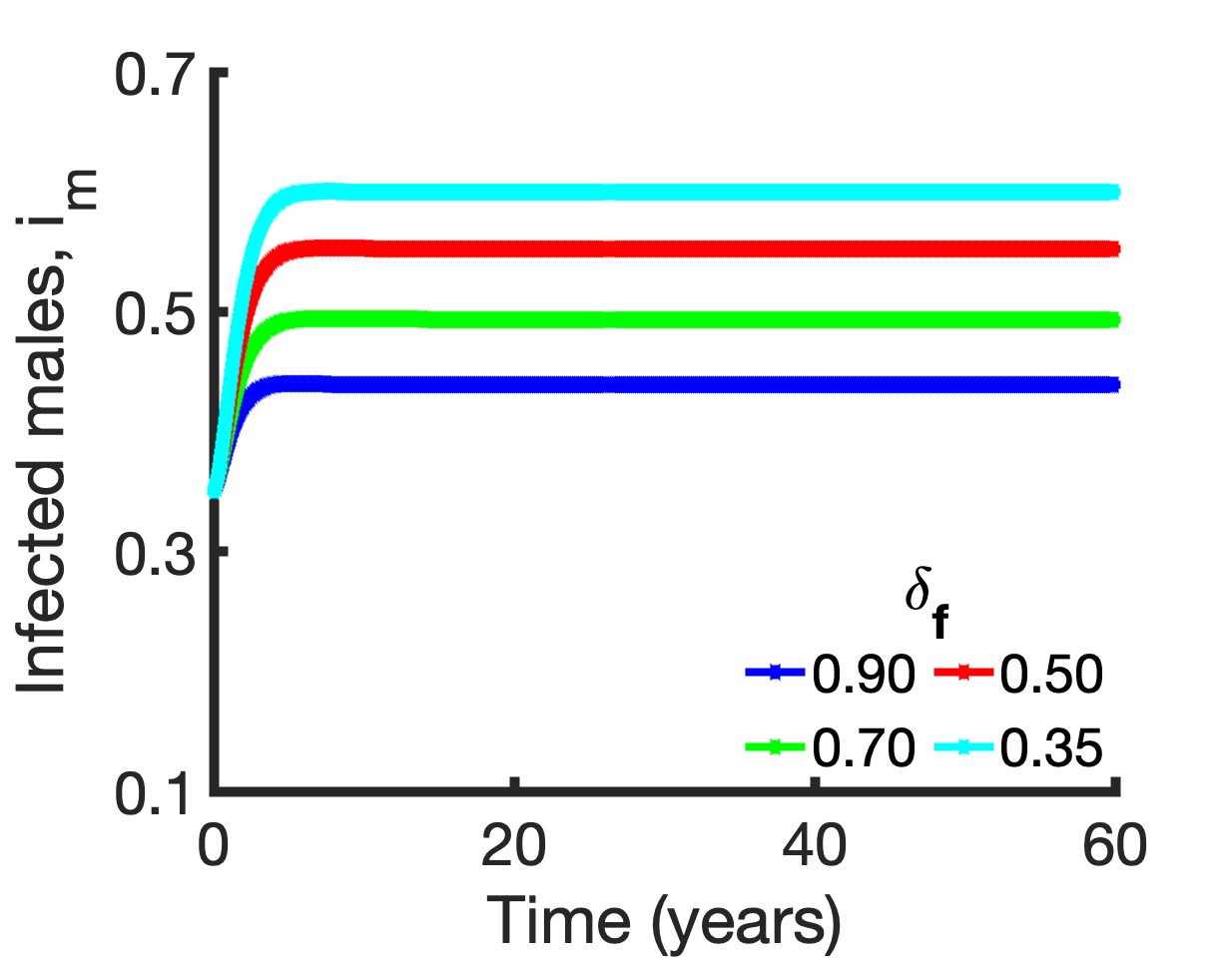}\label{fig:sub31}}
	\subfigure[$i_f$ vs $t$.]{\includegraphics[width=0.3\linewidth]{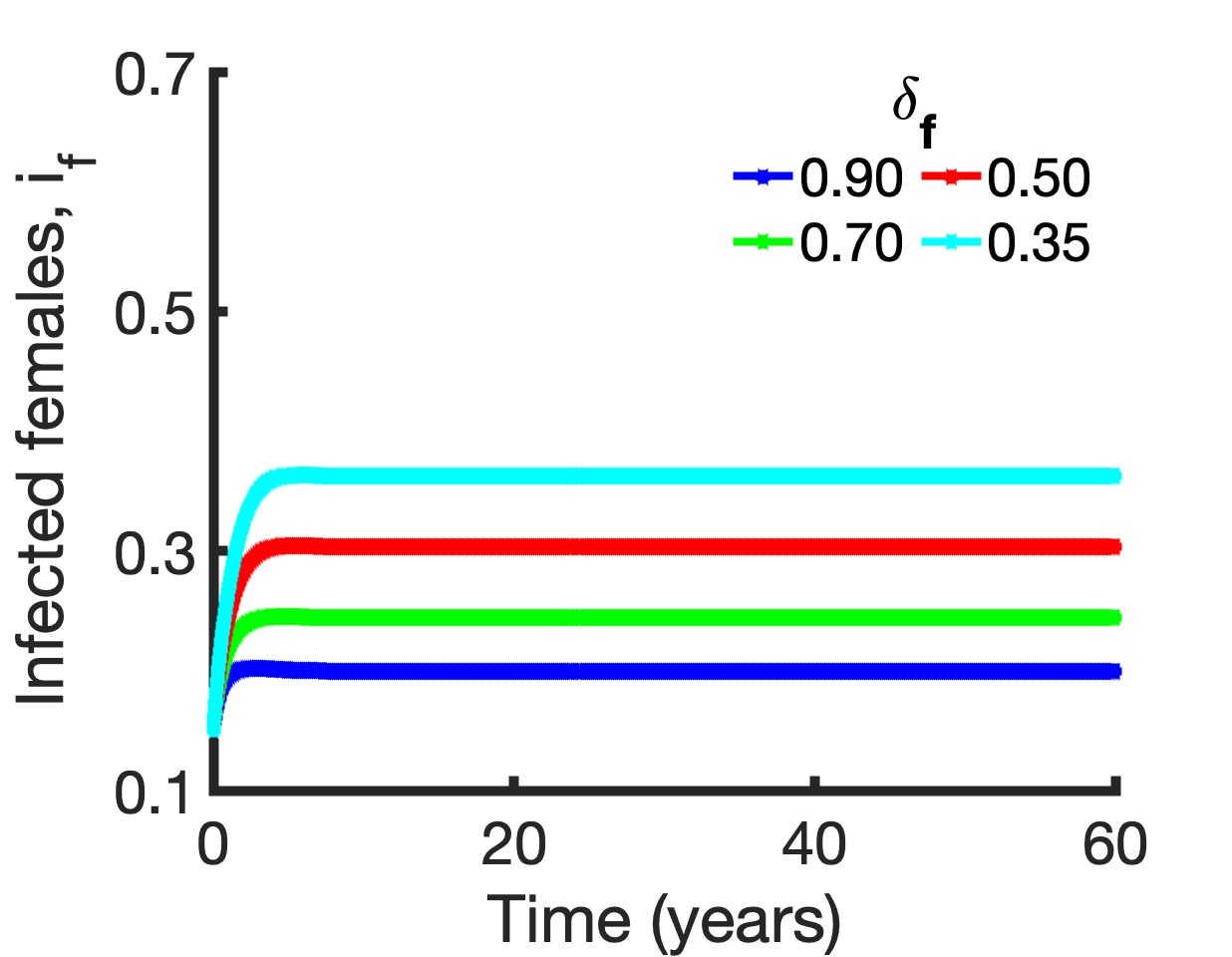}\label{fig:sub32}}
	\subfigure[$t_f$ vs $t$.]{\includegraphics[width=0.3\linewidth]{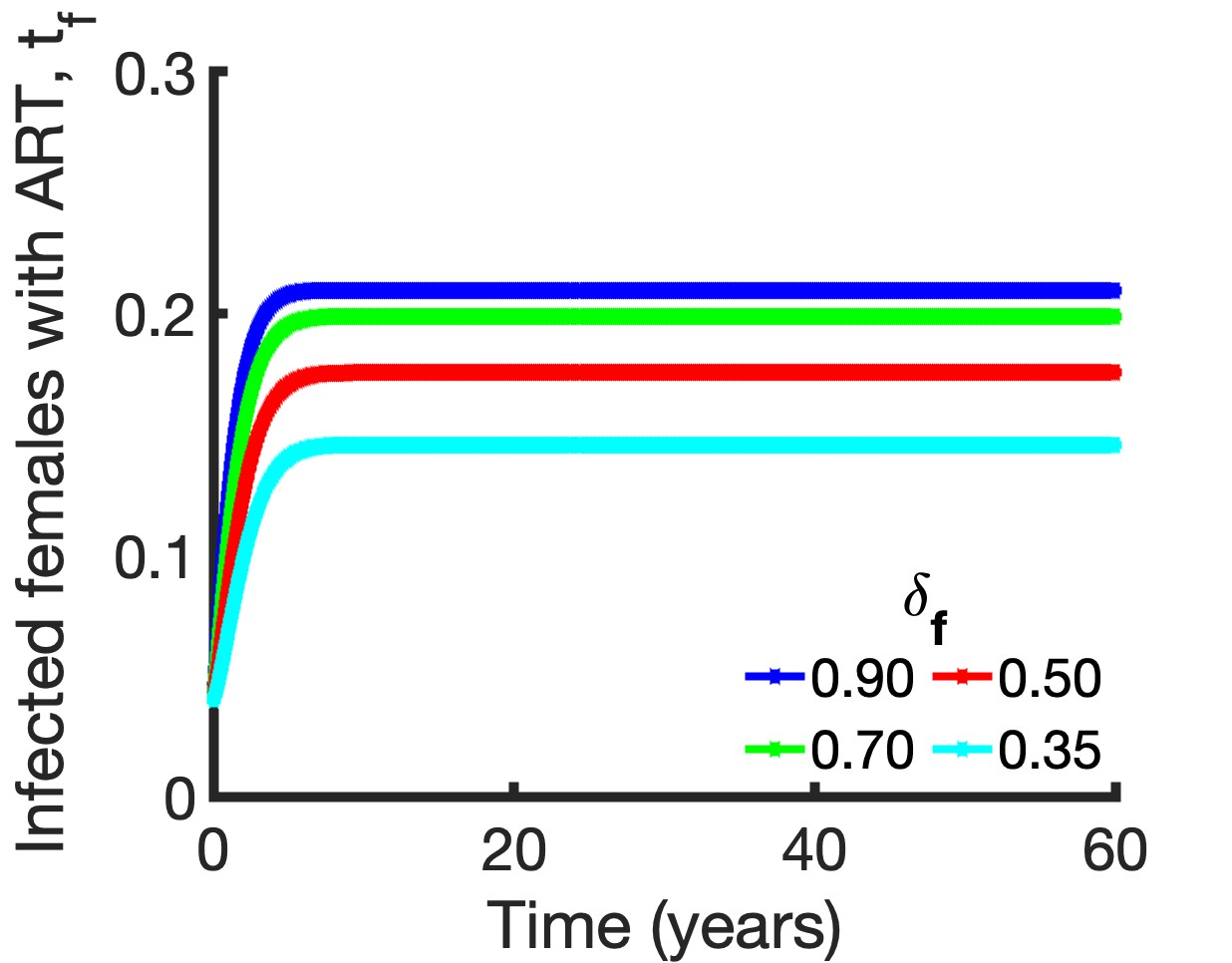}\label{fig:sub33}}
 \caption{Endemic Equilibrium (EE) of the deterministic model for varying values of the portion of the infected females getting ART ($\delta_f$).}
 \label{fig:EE_Deterministic_delta_f}
\end{figure}

In this subsection, we have adjusted three parameters: $\gamma_f$, $\gamma_m$, and $\delta_f$ for Disease-Free Equilibrium. We are examining how these changes affect three population types: $i_m$, $i_f$, and $t_f$.\\
\begin{figure}[H]
	\centering
	\subfigure[$i_m$ vs $t$.]{\includegraphics[width=0.3\linewidth]{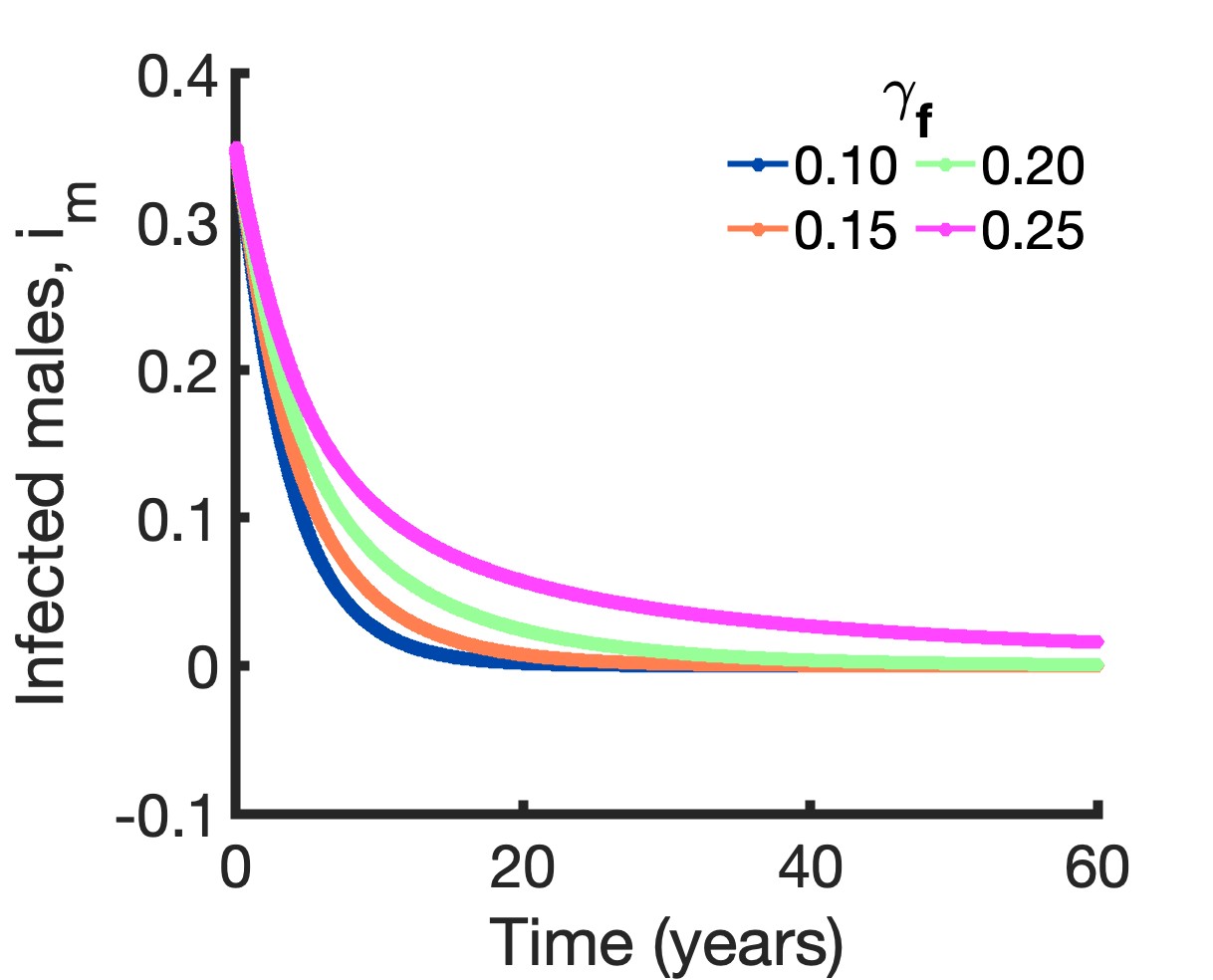}\label{fig:sub114}}
	\subfigure[$i_f$ vs $t$.]{\includegraphics[width=0.3\linewidth]{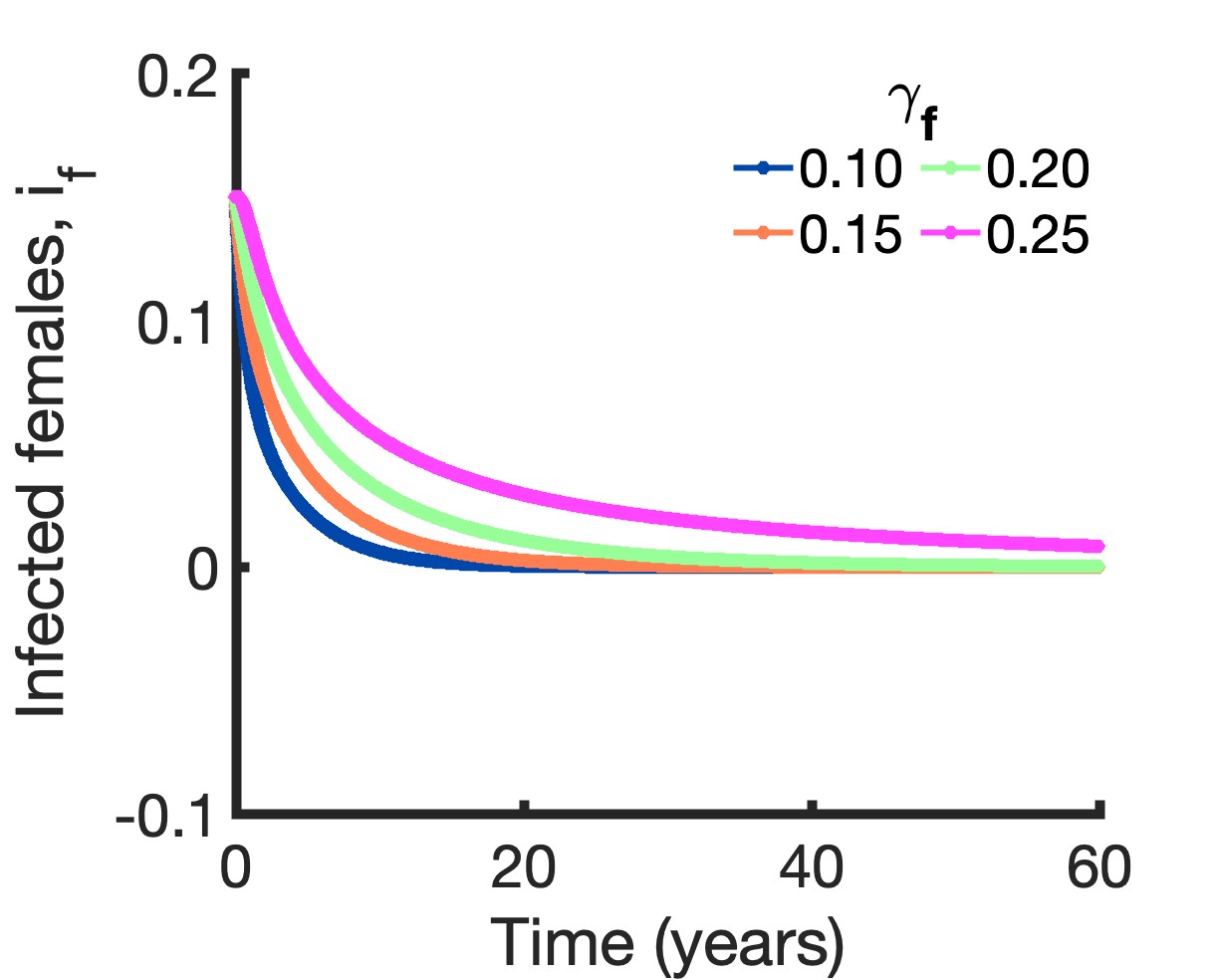}\label{fig:sub124}}
	\subfigure[$t_f$ vs $t$.]{\includegraphics[width=0.3\linewidth]{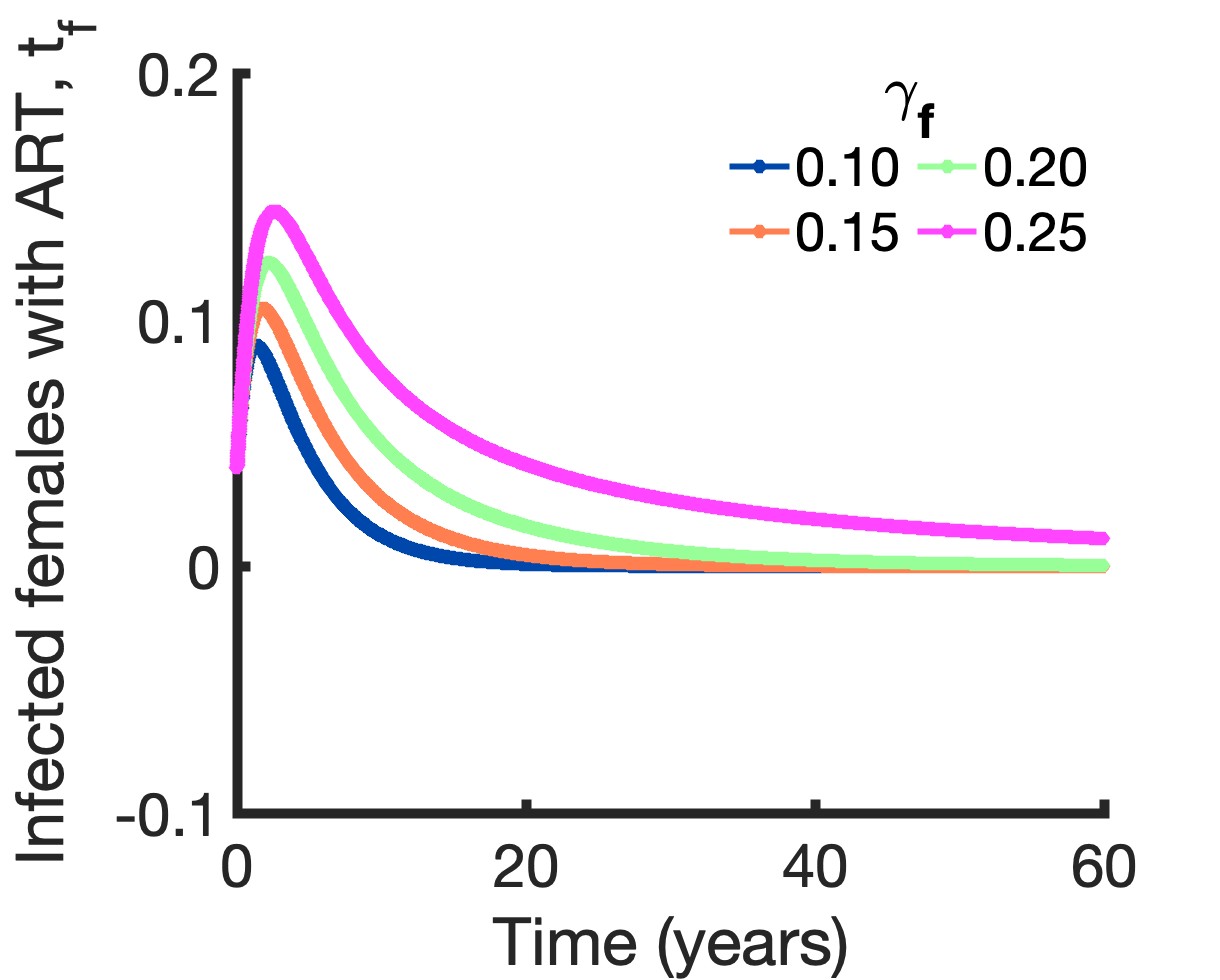}\label{fig:sub134}}
  \caption{Disease-Free Equilibrium (DFE) of the deterministic model for varying values of the probability of transmission by a female who is infected  ($\gamma_f$).}
\label{fig:DFE_Deterministic_gamma_f}
\end{figure}

\noindent In figures \ref{fig:sub114}, \ref{fig:sub124} and \ref{fig:sub134}, it is evident that a lower transmission probability from an infected female ($\gamma_f$) corresponds to reduced numbers of infected males ($i_m$), infected females ($i_f$), and infected females receiving antiretroviral therapy (ART) ($t_f$) respectively.

\noindent According to figure \ref{fig:sub214}, the deterministic model shows that a lower probability of transmission by an infected male ($\gamma_m$) results in a significant reduction in the number of infected males ($i_m$). This indicates that controlling the transmission rate among males can effectively decrease the overall male infection rate in the population. Figure \ref{fig:sub224} illustrates that a decreased probability of transmission by an infected male ($\gamma_m$) also leads to a reduction in the number of infected females ($i_f$). This suggests that the transmission dynamics between genders are interconnected, and reducing male-to-female transmission can help in lowering the female infection rate. As shown in figure \ref{fig:sub234}, the number of infected females receiving antiretroviral therapy (ART) ($t_f$) decreases with a lower transmission probability from an infected male ($\gamma_m$). This highlights that interventions aimed at reducing male transmission rates can have a positive impact on reducing the number of females needing ART, thereby potentially lowering the overall treatment burden.

\begin{figure}[H]
	\centering
	\subfigure[$i_m$ vs $t$.]{\includegraphics[width=0.3\linewidth]{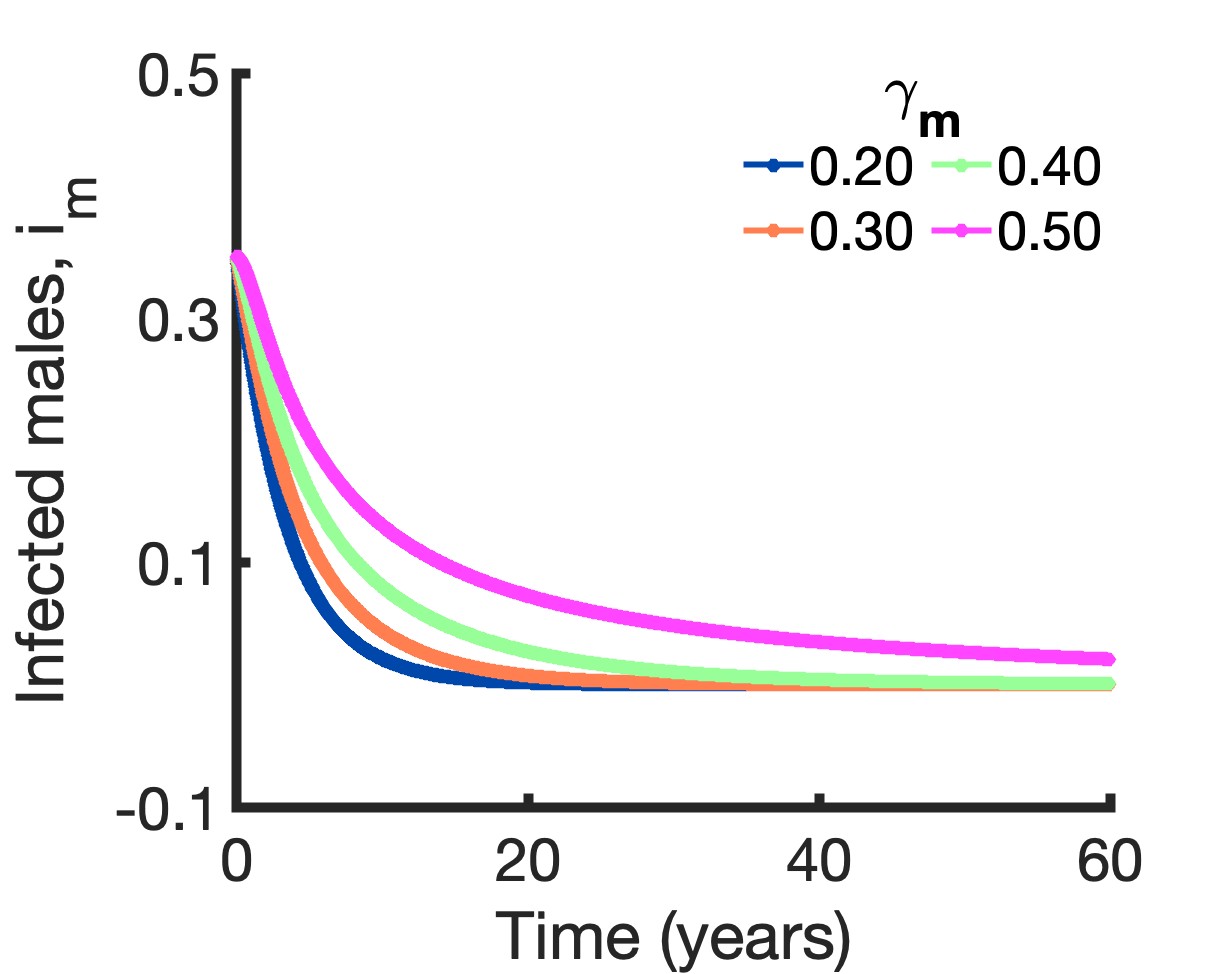}\label{fig:sub214}}
	\subfigure[$i_f$ vs $t$.]{\includegraphics[width=0.3\linewidth]{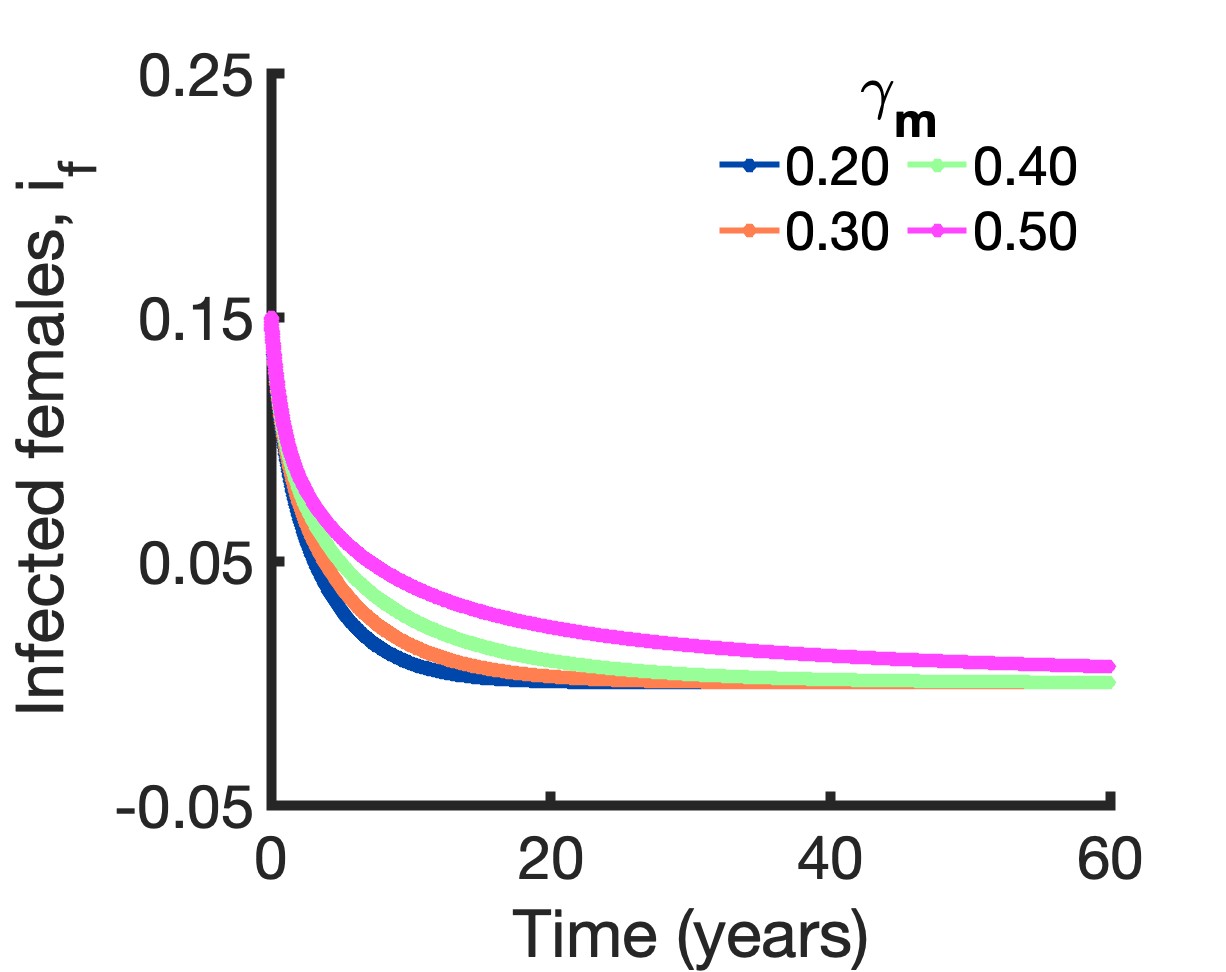}\label{fig:sub224}}
	\subfigure[$t_f$ vs $t$.]{\includegraphics[width=0.3\linewidth]{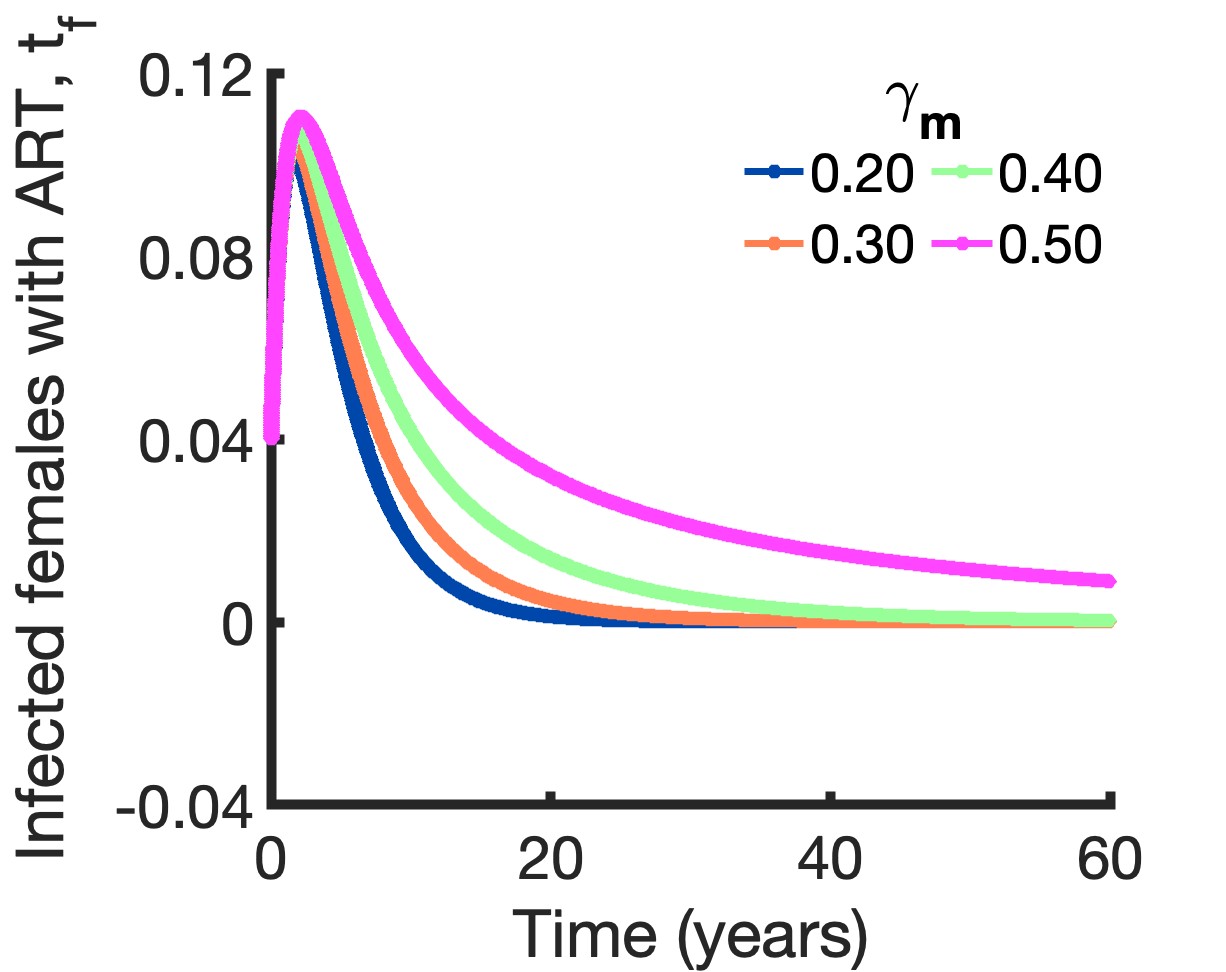}\label{fig:sub234}}
	\caption{Disease-Free Equilibrium (DFE) of the deterministic model for varying values of the probability of transmission by a male who is infected ($\gamma_m$).}
	\label{fig:DFE_Deterministic_gamma_m}
\end{figure}

\noindent Since we have already covered the disease-free equilibrium, it stands to reason that for certain parameter values, the proportion of infected males and females will approach zero. Therefore, the number of infected males and females would increase as the proportion of infected females receiving ART decreased (see figures \ref{fig:sub314} and \ref{fig:sub324}). It stands to reason from figure \ref{fig:sub334} that the number of infected females receiving ART will rise in direct proportion to the number of females receiving ART.

\begin{figure}[H]
	\centering
	\subfigure[$i_m$ vs $t$.]{\includegraphics[width=0.3\linewidth]{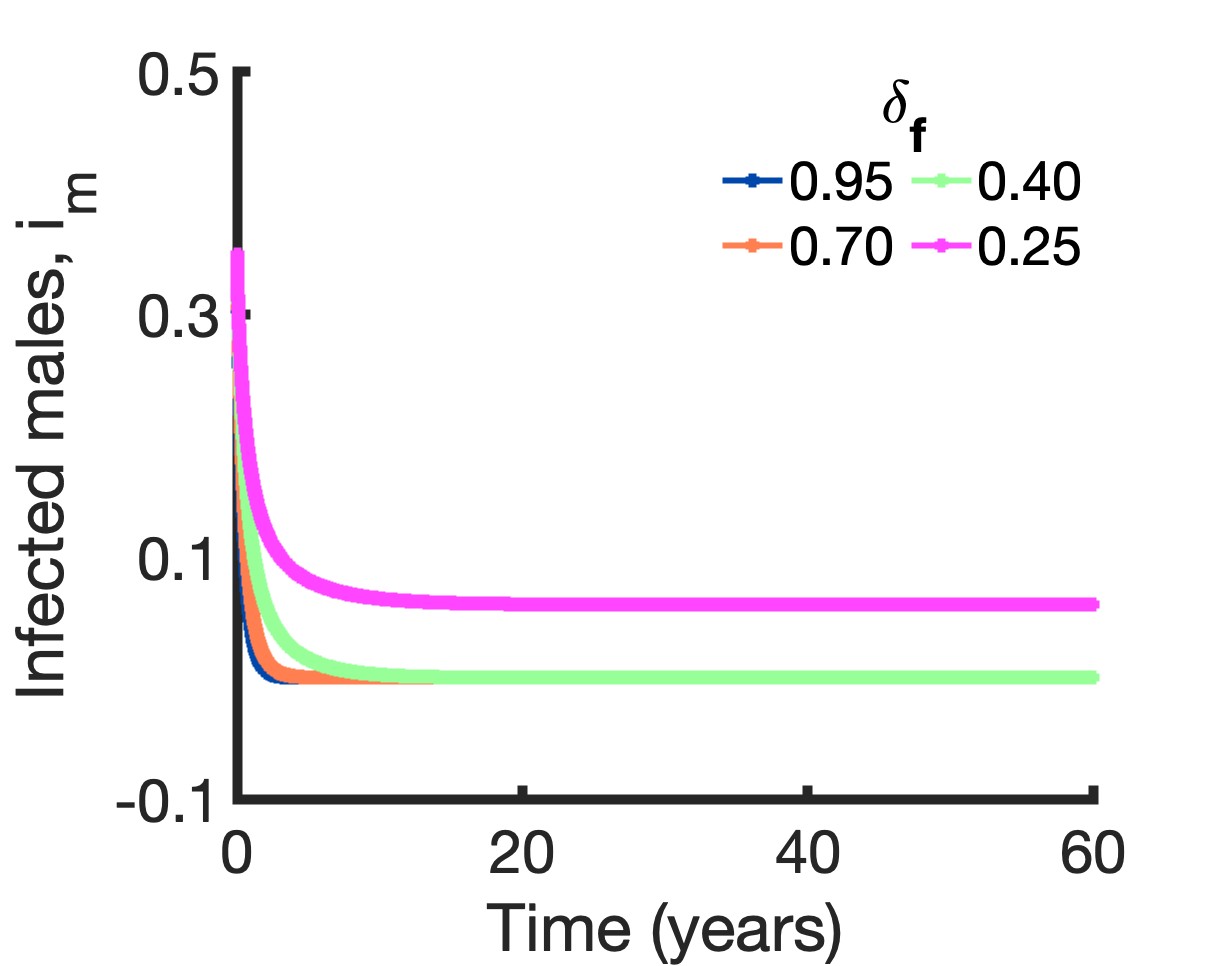}
	\label{fig:sub314}}
	\subfigure[$i_f$ vs $t$.]{\includegraphics[width=0.3\linewidth]{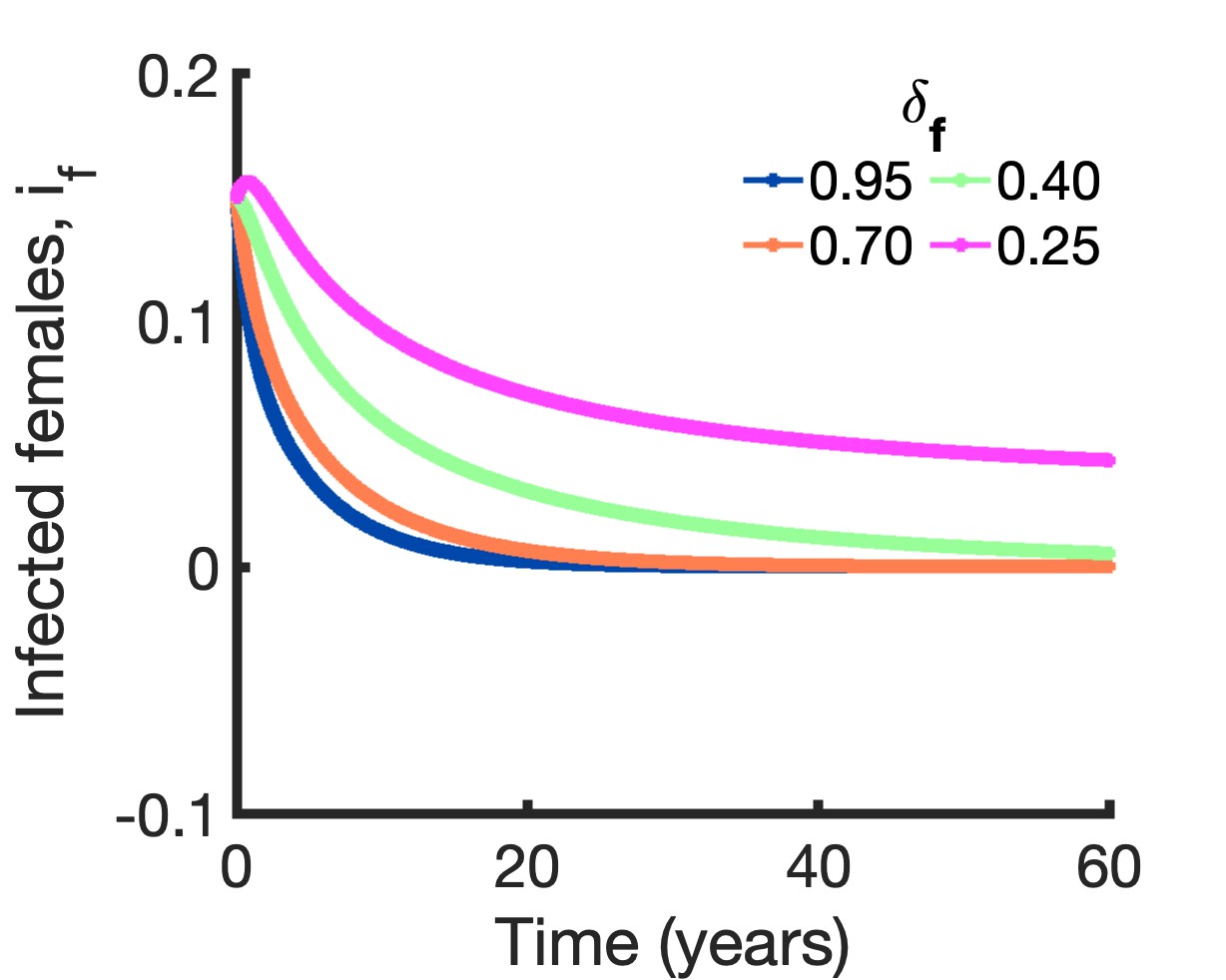}\label{fig:sub324}}
	\subfigure[$t_f$ vs $t$.]{\includegraphics[width=0.3\linewidth]{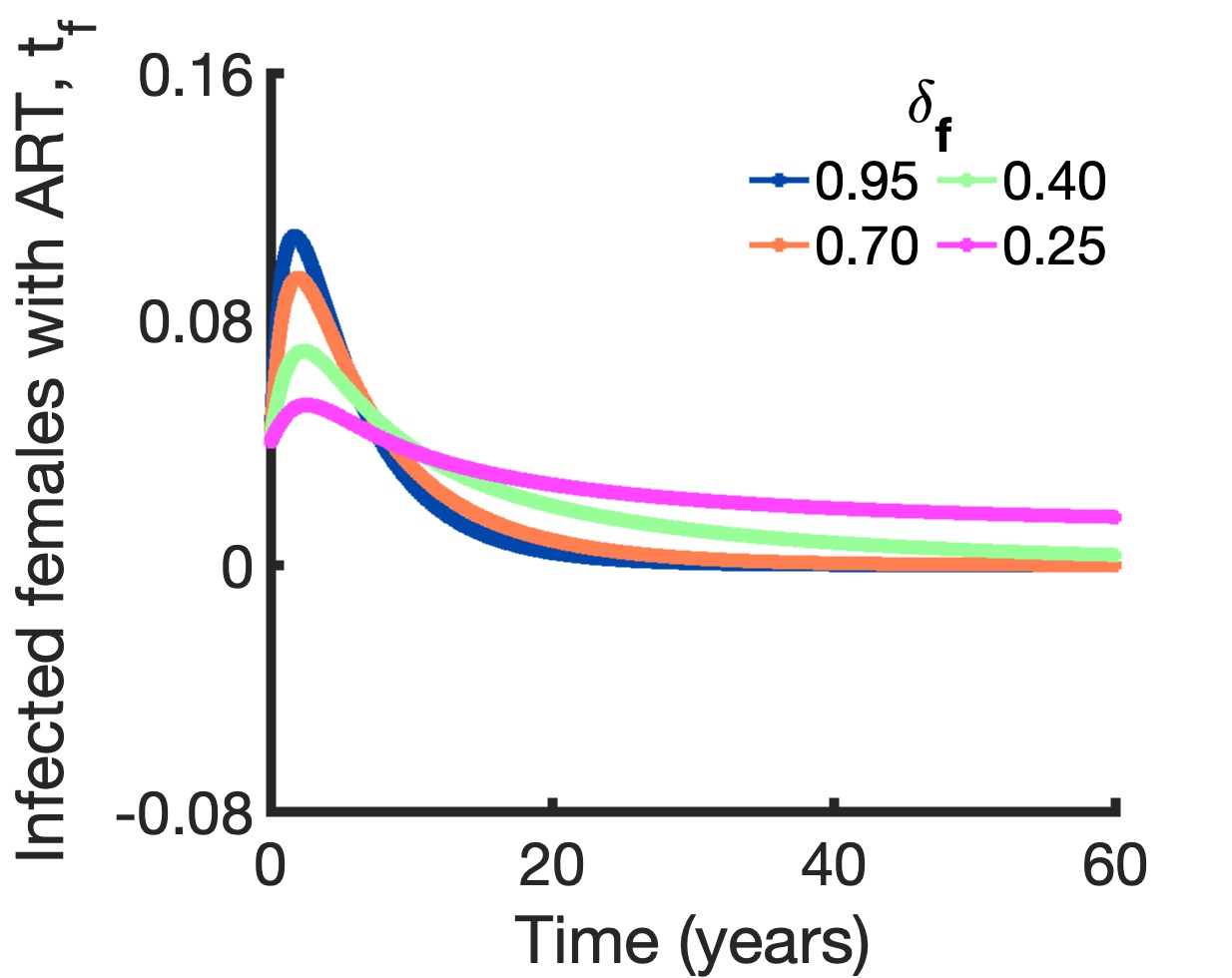}\label{fig:sub334}}
 \caption{Disease-Free Equilibrium (DFE) of the deterministic model for varying values of the portion of the infected females getting ART ($\delta_f$).}
 \label{fig:DFE_Deterministic_delta_f}
\end{figure}

\subsection{Stochastic Euler}
In this subsection, we introduced stochastic elements and employed the Euler method for numerical solutions. Similar to the previous subsection, we analyzed how the parameters $\gamma_f$, $\gamma_m$, and $\delta_f$ influence the three demographic groups ($i_m$, $i_f$, $t_f$). Figures \ref{fig:sub111}, \ref{fig:sub112}, \ref{fig:sub113}, \ref{fig:sub211}, \ref{fig:sub212}, \ref{fig:sub213}, \ref{fig:sub311}, \ref{fig:sub312}, and \ref{fig:sub313} depict similar parameter effects as observed in the deterministic model. The primary distinction here is the variability in the graphs, attributable to the stochastic nature of the Euler method.

\begin{figure}[H]
	\centering
	\subfigure[$i_m$ vs $t$.]{\includegraphics[width=0.3\linewidth]{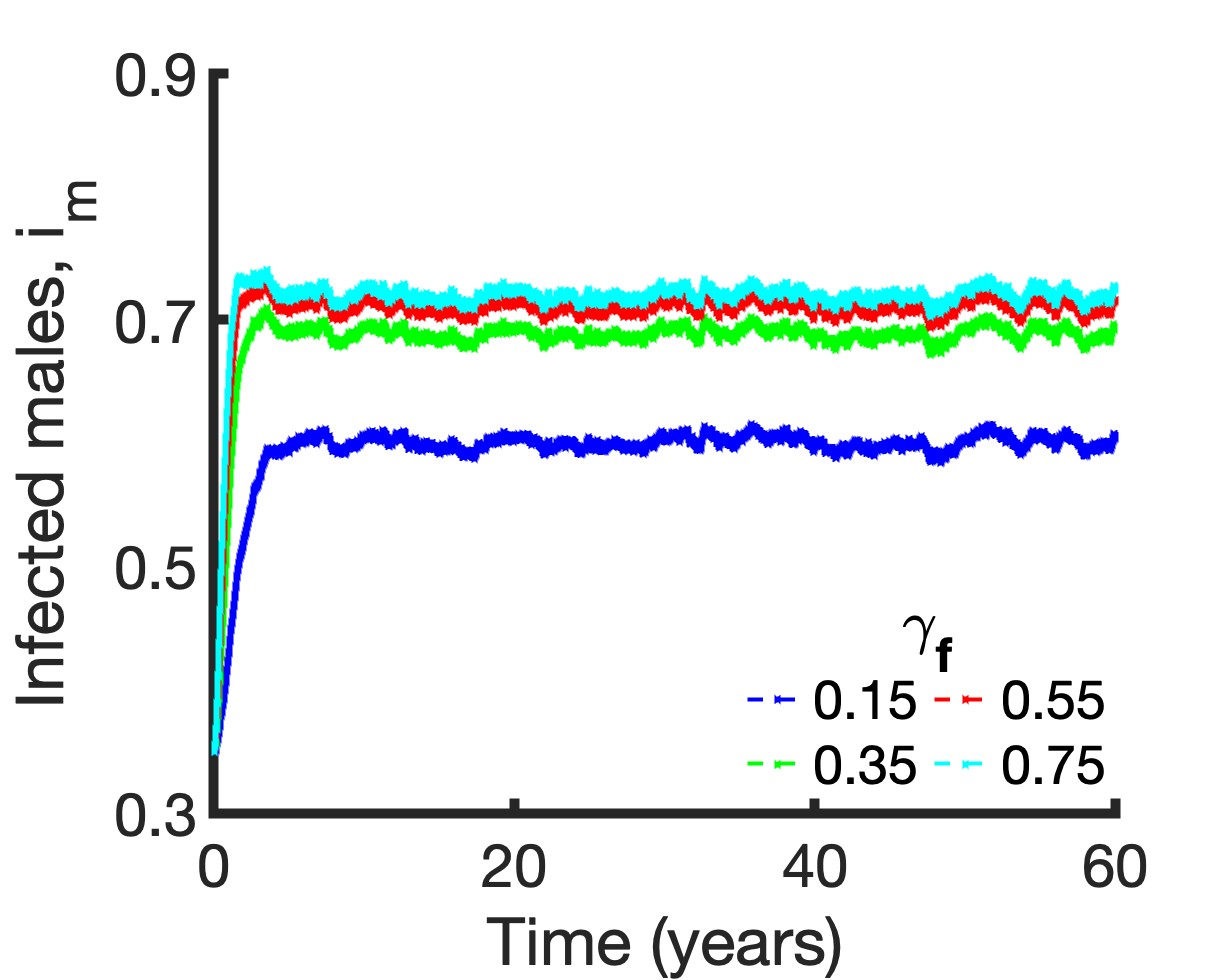}\label{fig:sub111}}
	\subfigure[$i_f$ vs $t$.]{\includegraphics[width=0.3\linewidth]{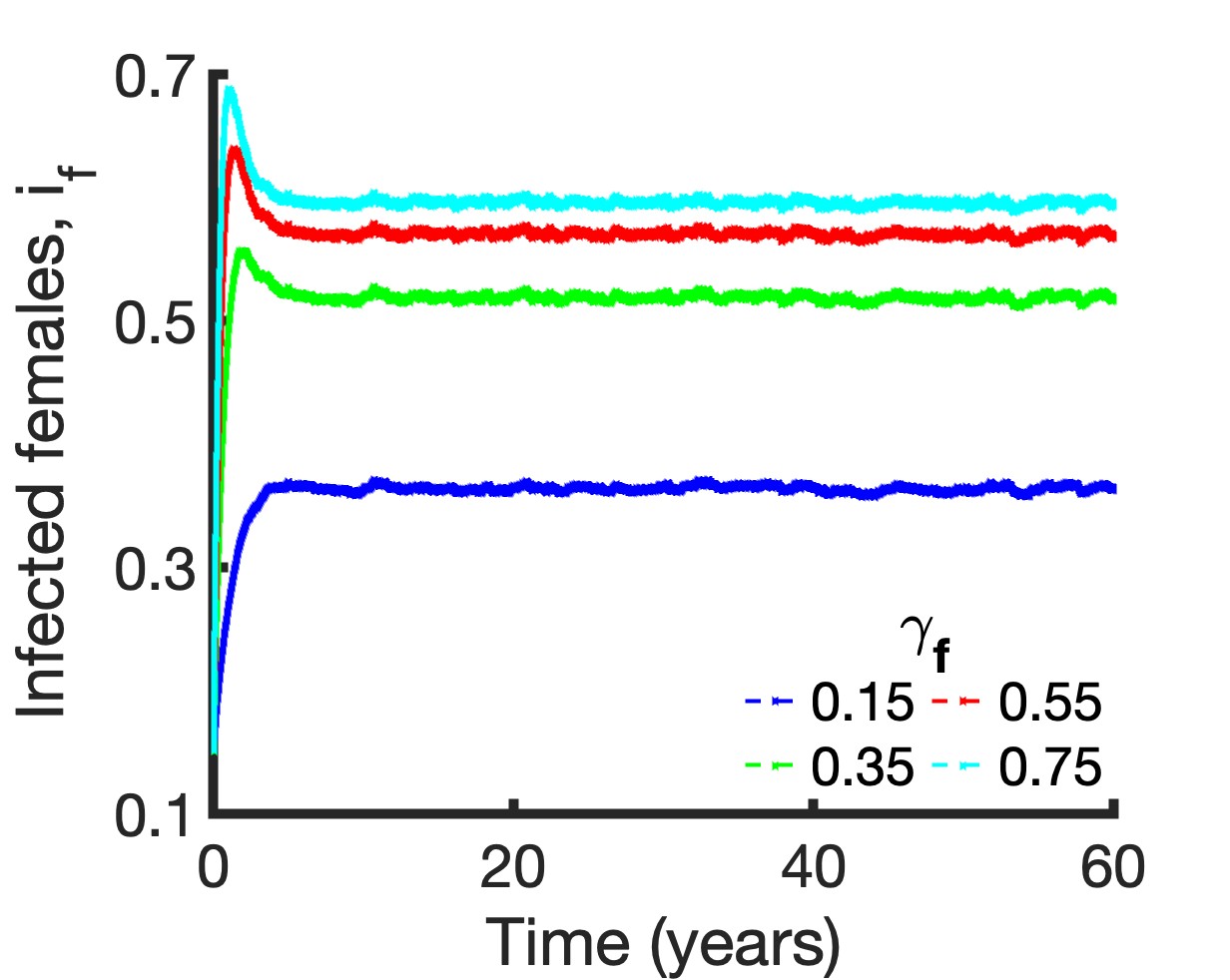}\label{fig:sub112}}
	\subfigure[$t_f$ vs $t$.]{\includegraphics[width=0.3\linewidth]{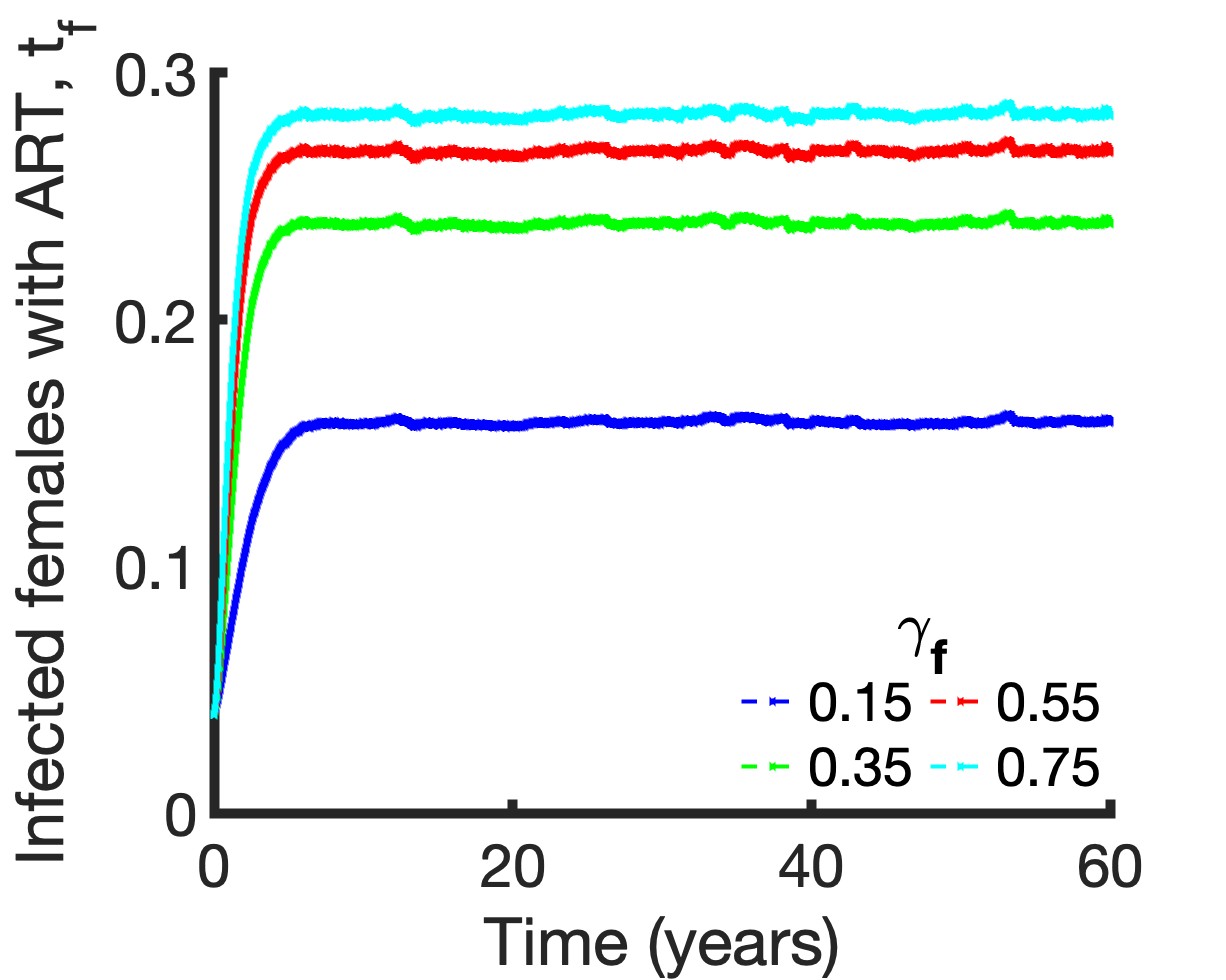}\label{fig:sub113}}
	\caption{Endemic Equilibrium (EE) of the stochastic Euler model for varying values of the probability of transmission by a female who is infected  ($\gamma_f$).}
	\label{fig:EE_Euler_gamma_f}
\end{figure}

\begin{figure}[H]
	\centering
	\subfigure[$i_m$ vs $t$.]{\includegraphics[width=0.3\linewidth]{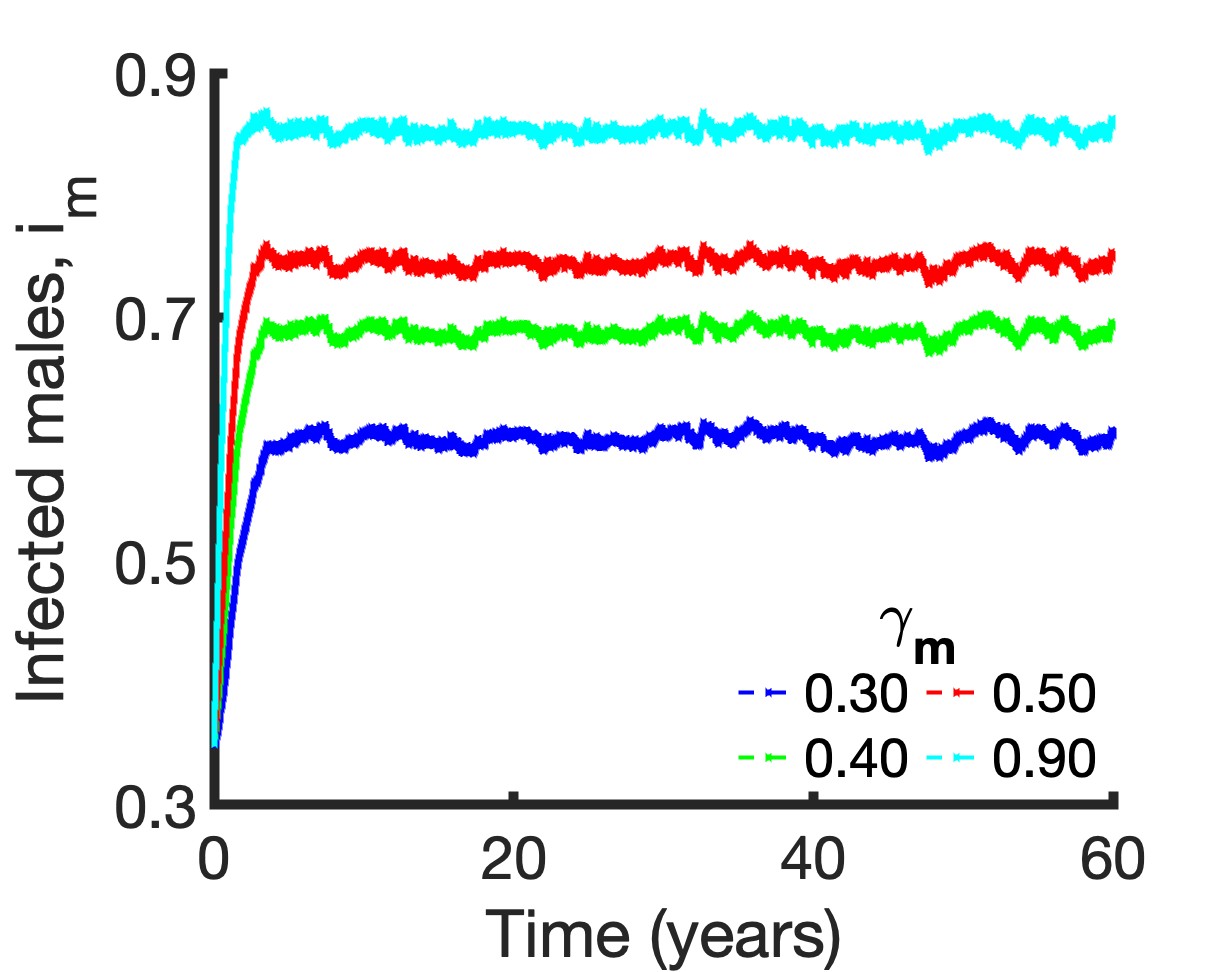}\label{fig:sub211}}
	\subfigure[$i_f$ vs $t$.]{\includegraphics[width=0.3\linewidth]{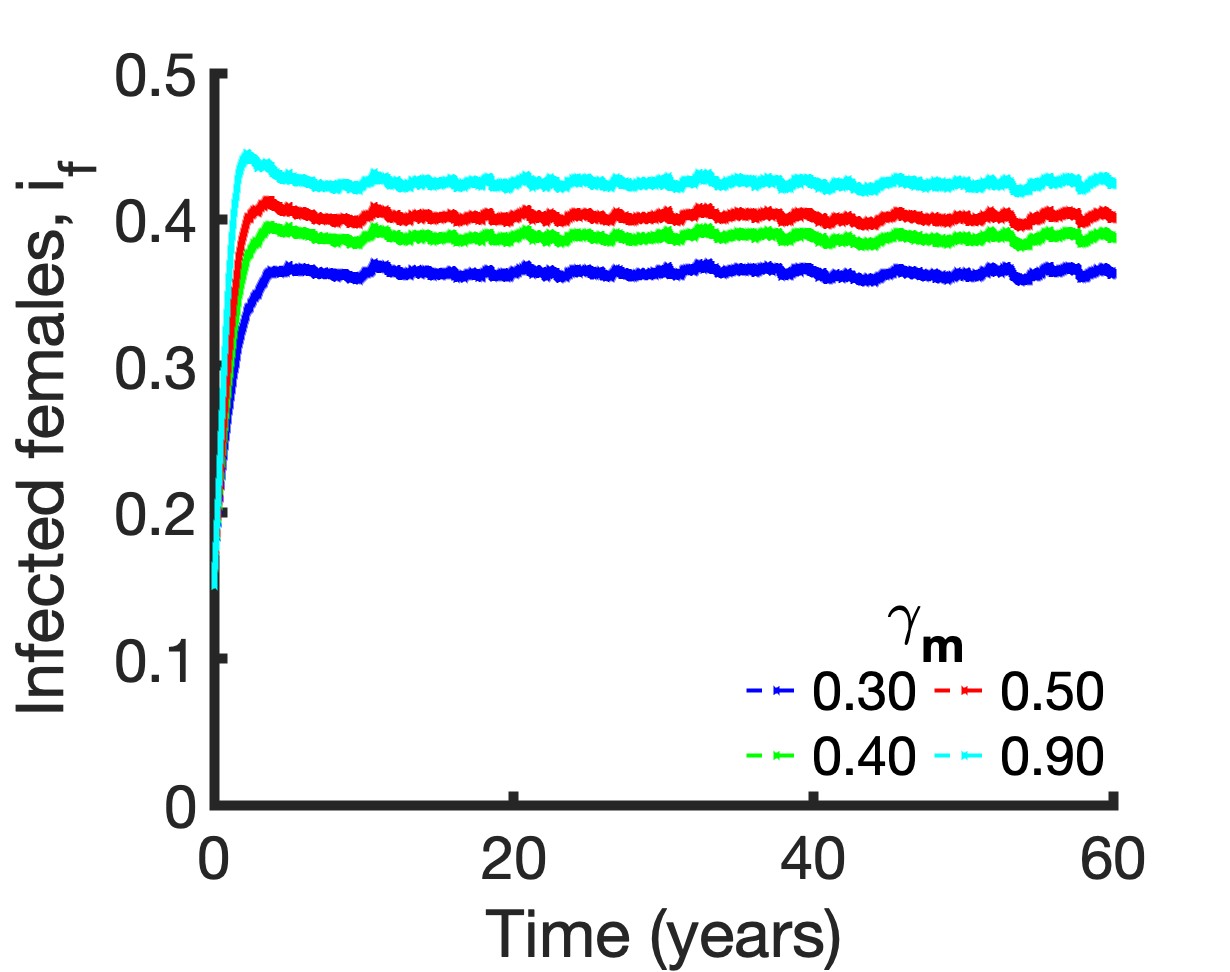}\label{fig:sub212}}
	\subfigure[$t_f$ vs $t$.]{\includegraphics[width=0.3\linewidth]{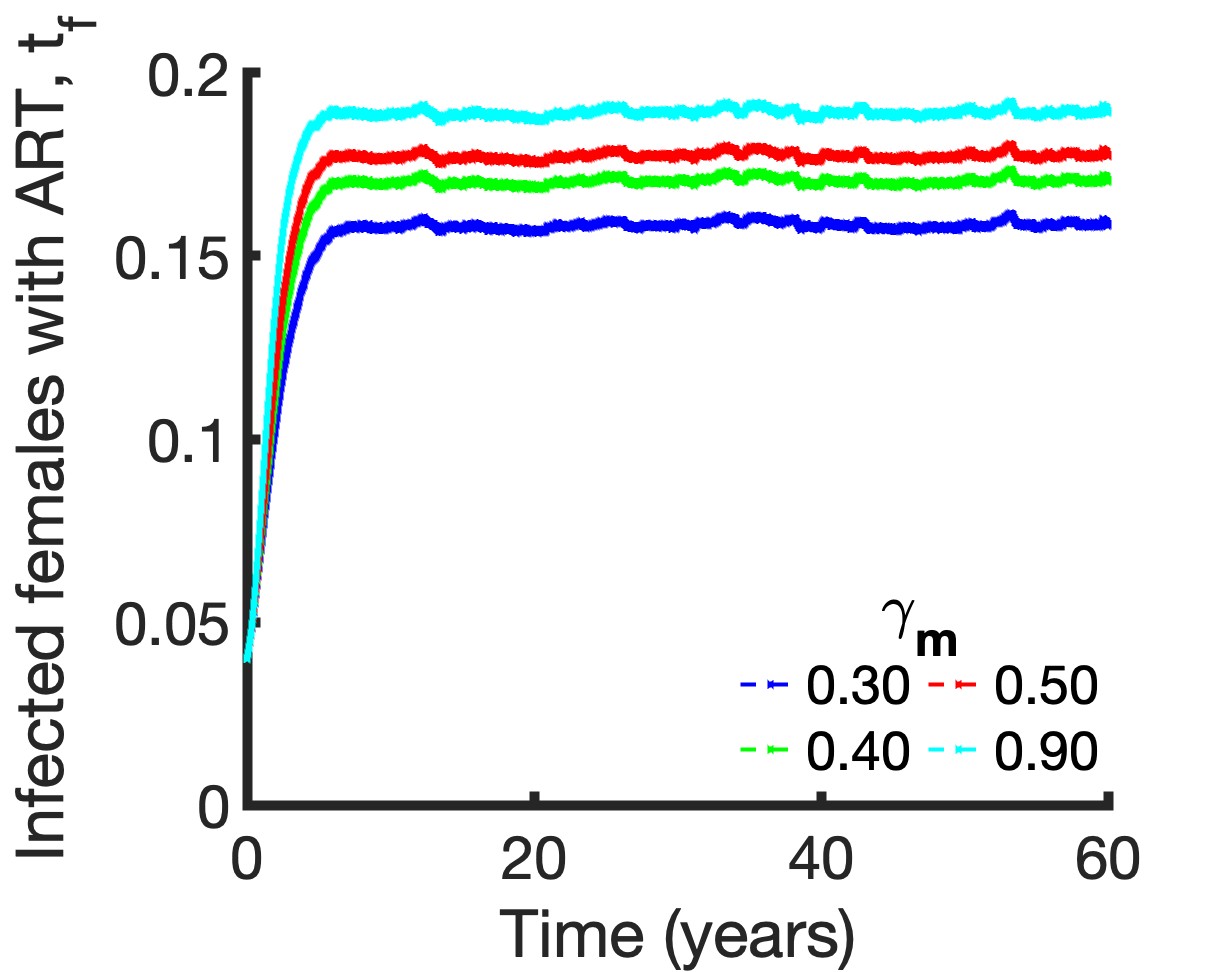}\label{fig:sub213}}
	\caption{Endemic Equilibrium (EE) of the stochastic Euler model for varying values of the probability of transmission by a male who is infected  ($\gamma_m$).}
	\label{fig:EE_Euler_gamma_m}
\end{figure}

\begin{figure}[H]
	\centering
	\subfigure[$i_m$ vs $t$.]{\includegraphics[width=0.3\linewidth]{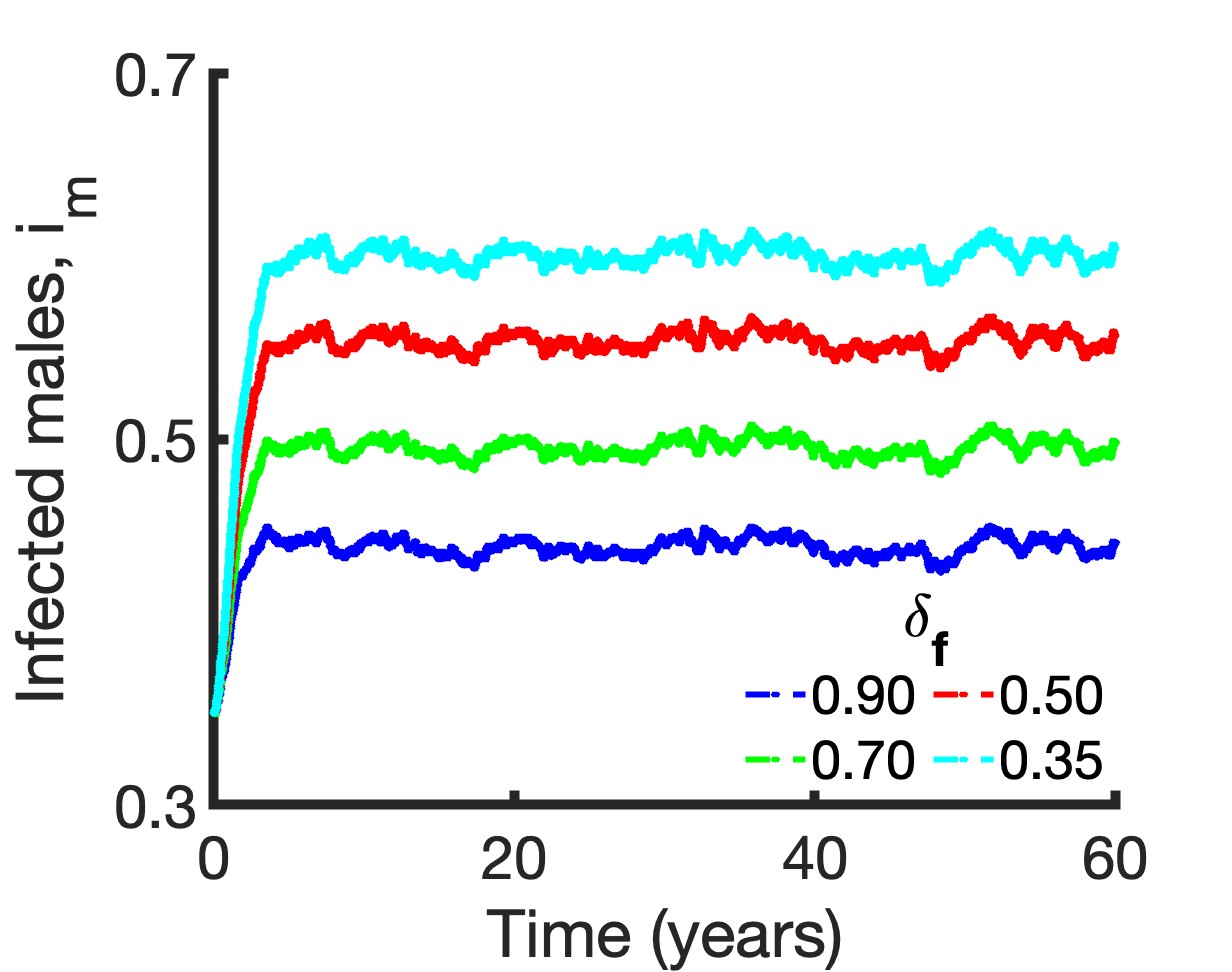}\label{fig:sub311}}
	\subfigure[$i_f$ vs $t$.]{\includegraphics[width=0.3\linewidth]{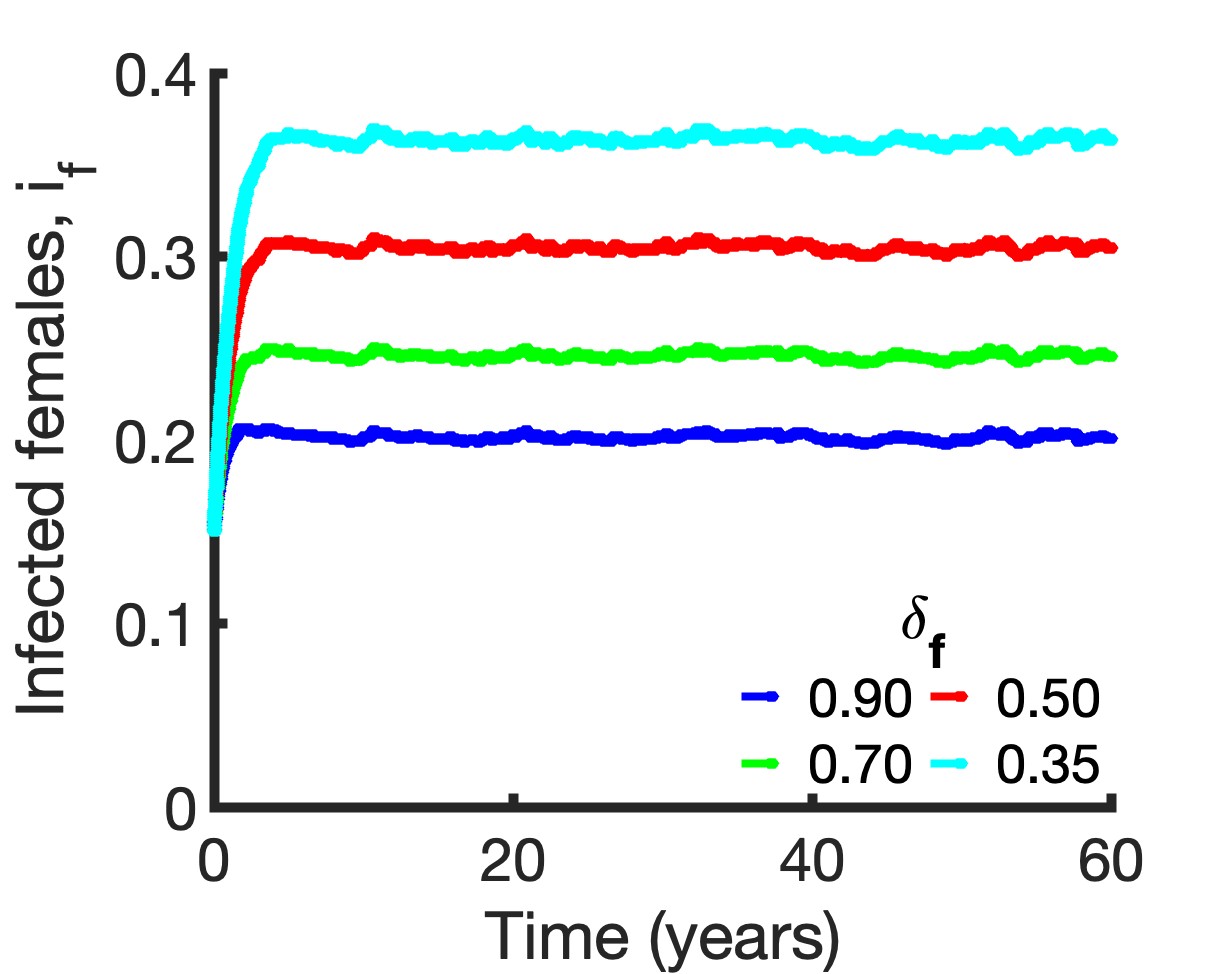}\label{fig:sub312}}
	\subfigure[$t_f$ vs $t$.]{\includegraphics[width=0.3\linewidth]{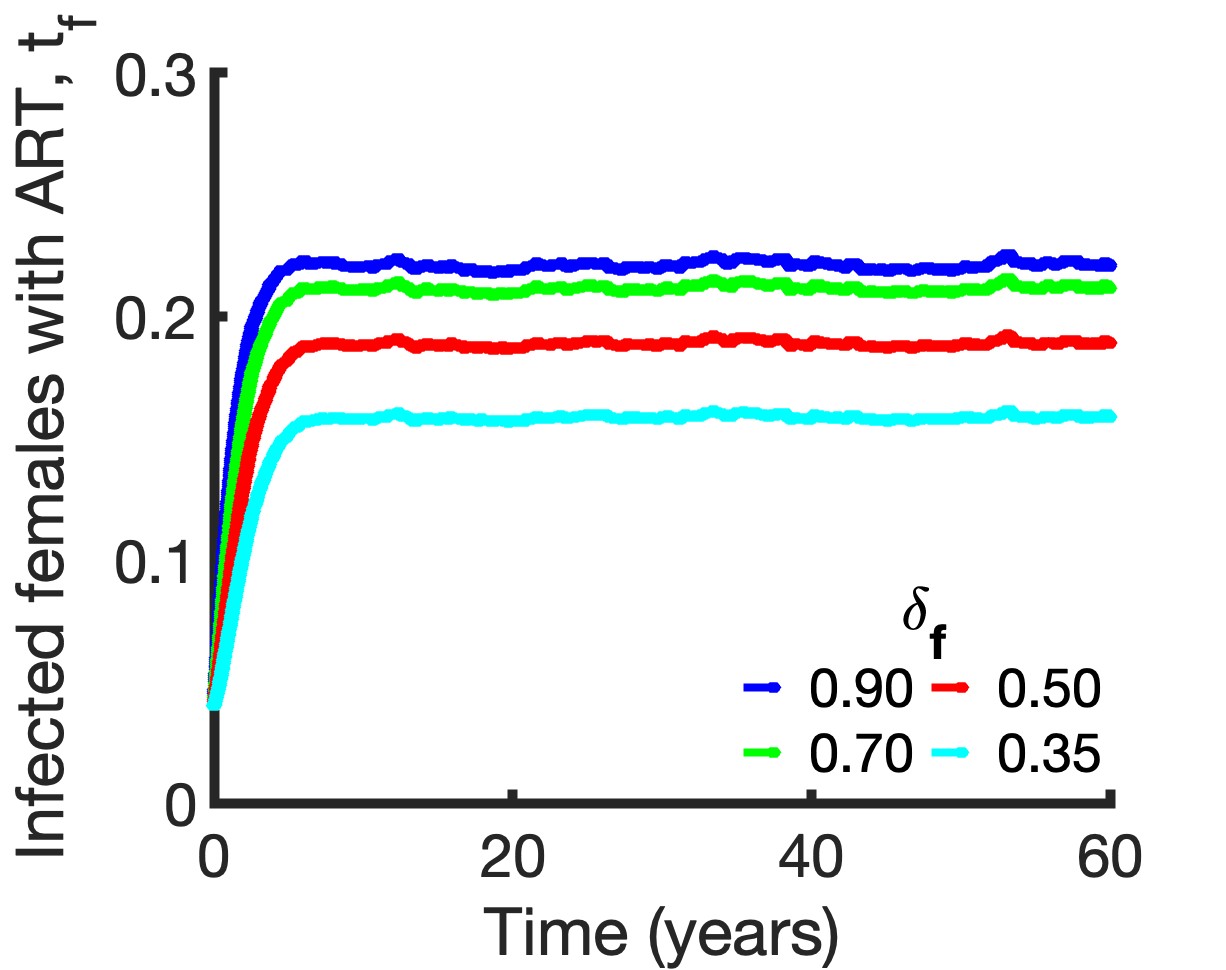}\label{fig:sub313}}
	\caption{Endemic Equilibrium (EE) of the stochastic Euler model for varying values of the portion of the infected females getting ART ($\delta_f$).}
	\label{fig:EE_Euler_delta_f}
\end{figure}

According to figure \ref{fig:sub1111}, the Disease-Free Equilibrium (DFE) of the stochastic Euler model for varying values of the probability of transmission by a female who is infected ($\gamma_f$) shows that as $\gamma_f$ decreases, the number of infected males ($i_m$) decreases. Figure \ref{fig:sub1121} similarly illustrates that a lower $\gamma_f$ leads to a reduction in the number of infected females ($i_f$). Figure \ref{fig:sub1131} demonstrates that the number of infected females receiving antiretroviral therapy (ART) ($t_f$) also decreases with a lower $\gamma_f$, highlighting the interconnected dynamics of gender-specific transmission probabilities.

\begin{figure}[H]
	\centering
	\subfigure[$i_m$ vs $t$.]{\includegraphics[width=0.4\linewidth]{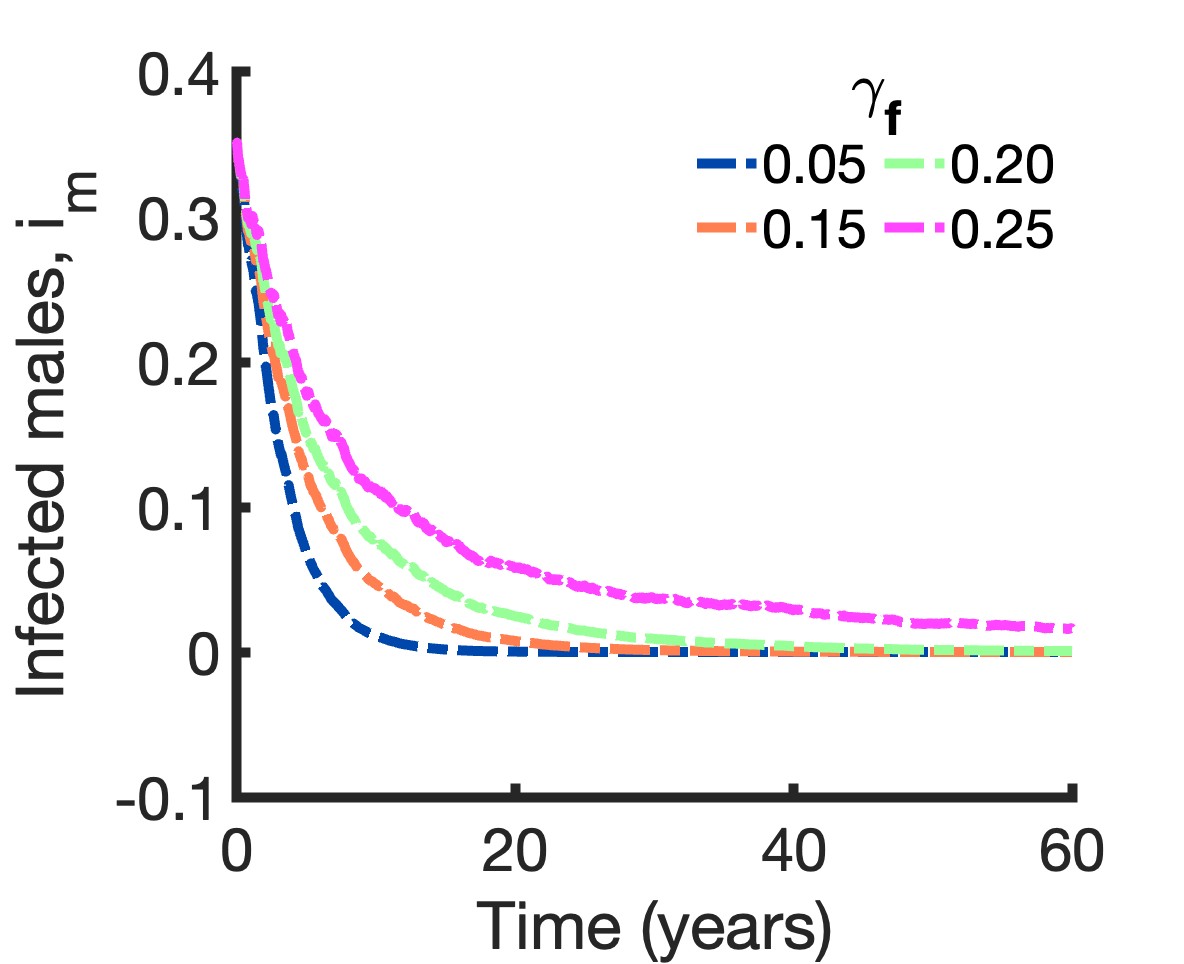}\label{fig:sub1111}}
	\subfigure[$i_f$ vs $t$.]{\includegraphics[width=0.4\linewidth]{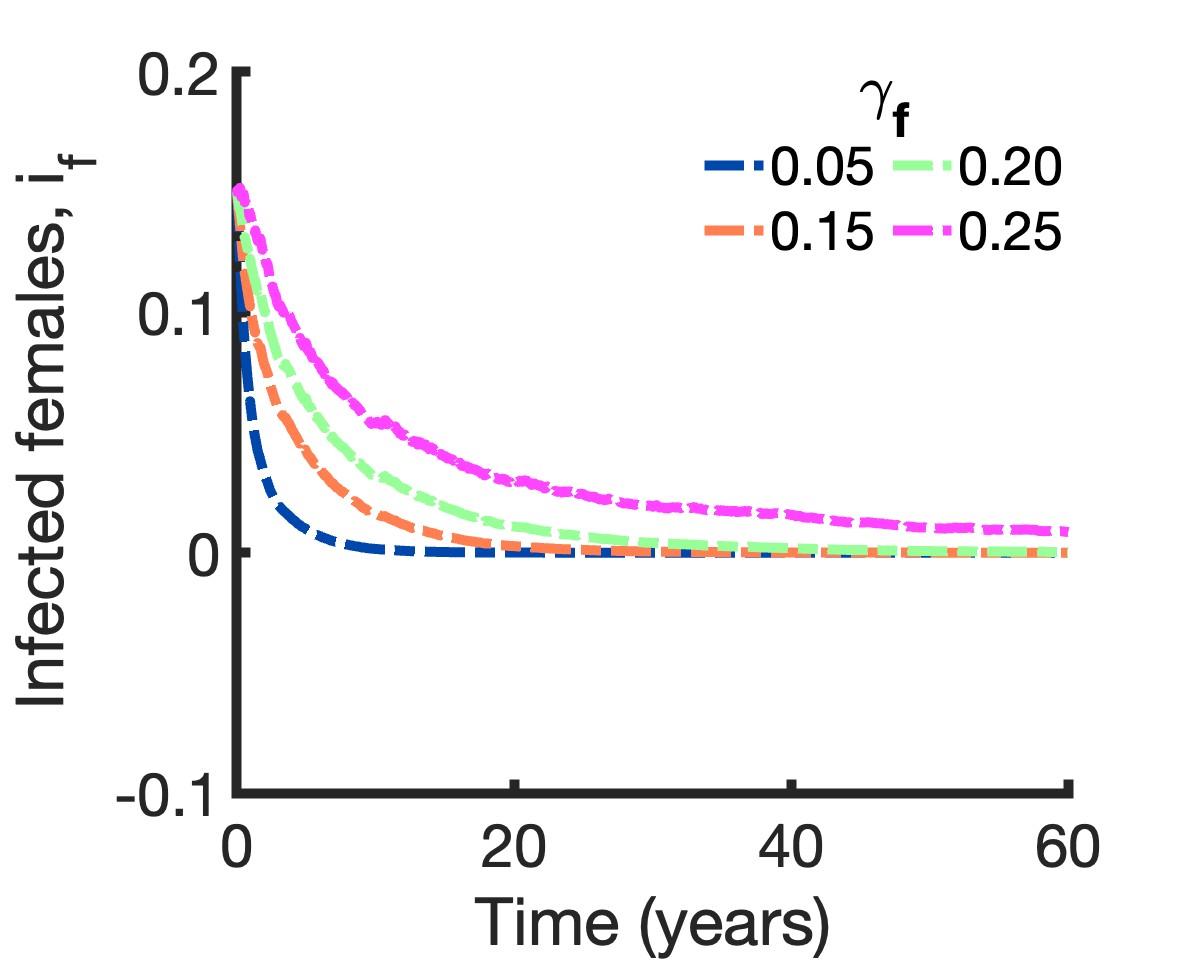}\label{fig:sub1121}}
	\subfigure[$t_f$ vs $t$.]{\includegraphics[width=0.4\linewidth]{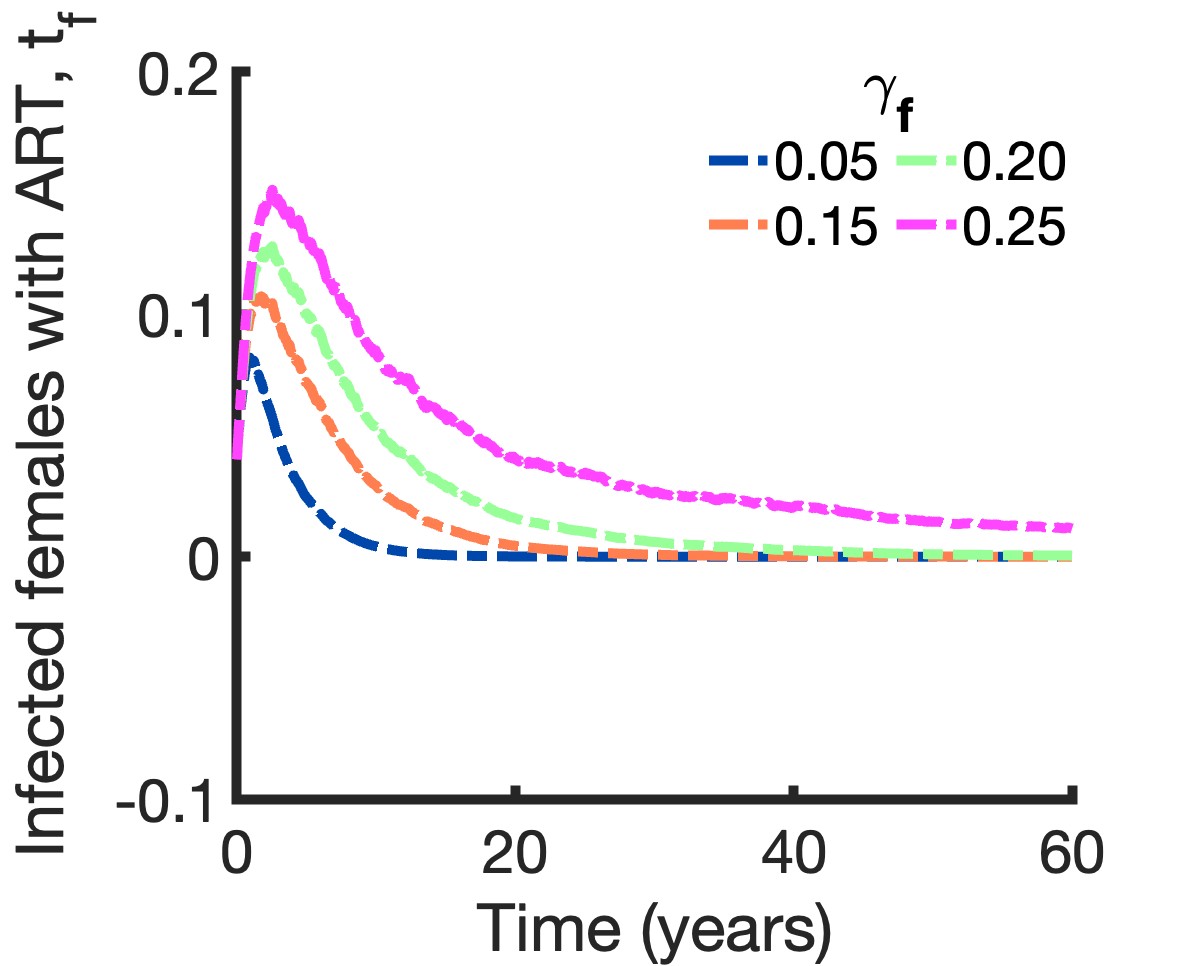}\label{fig:sub1131}}
	\caption{Disease-Free Equilibrium (DFE) of the stochastic Euler model for varying values of the probability of transmission by a female who is infected  ($\gamma_f$).}
	\label{fig:DFE_Euler_gamma_f}
\end{figure}

\noindent Figure \ref{fig:sub2111} illustrates the Disease-Free Equilibrium (DFE) of the stochastic Euler model for varying values of the probability of transmission by a male who is infected ($\gamma_m$). As $\gamma_m$ decreases, the number of infected males ($i_m$) decreases significantly. Figure \ref{fig:sub2121} shows a corresponding decrease in the number of infected females ($i_f$) with lower $\gamma_m$. Figure \ref{fig:sub2131} indicates that the number of infected females receiving ART ($t_f$) also decreases as $\gamma_m$ is reduced, emphasizing the importance of targeting male transmission rates to control the epidemic.

\begin{figure}[H]
	\centering
	\subfigure[$i_m$ vs $t$.]{\includegraphics[width=0.4\linewidth]{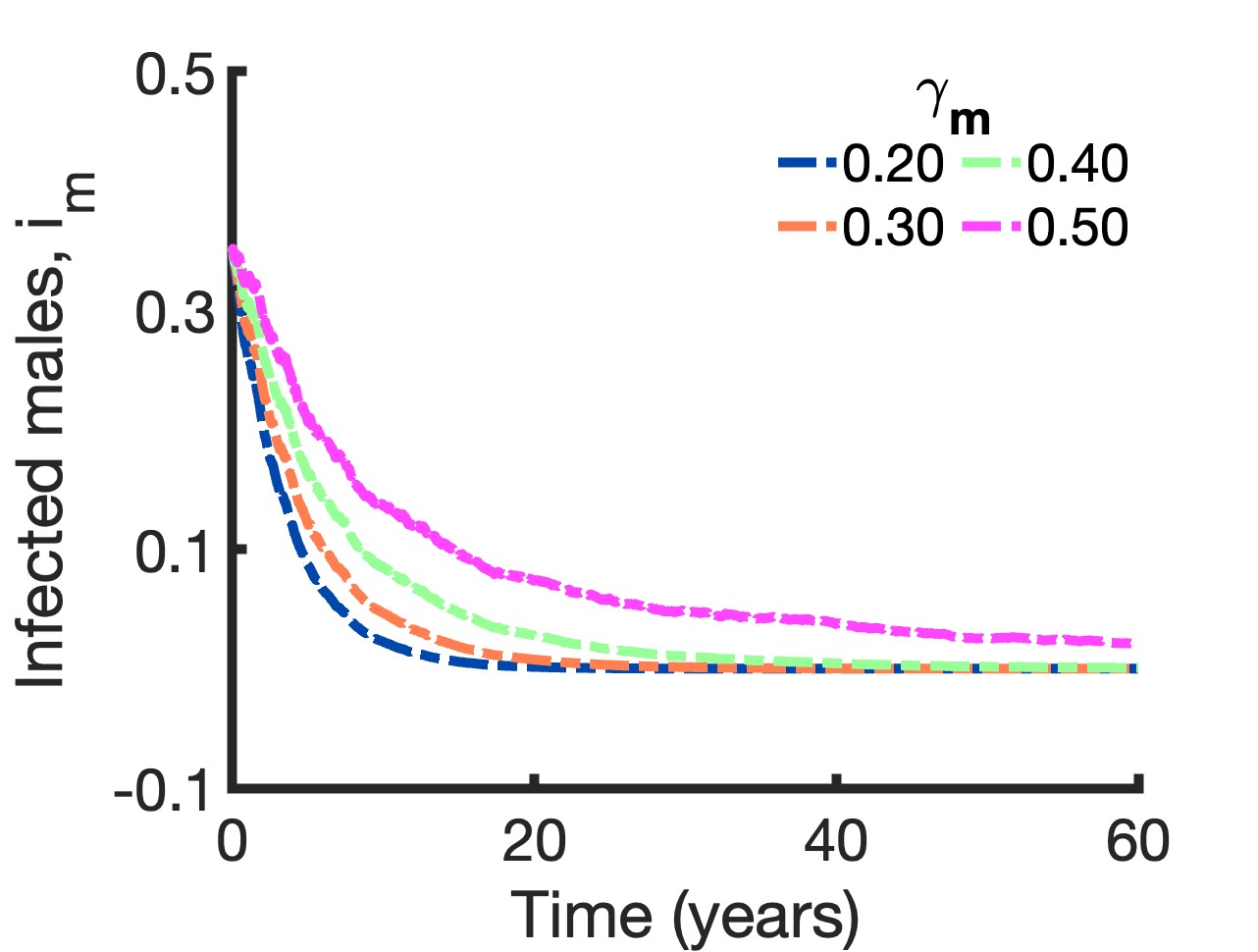}\label{fig:sub2111}}
	\subfigure[$i_f$ vs $t$.]{\includegraphics[width=0.4\linewidth]{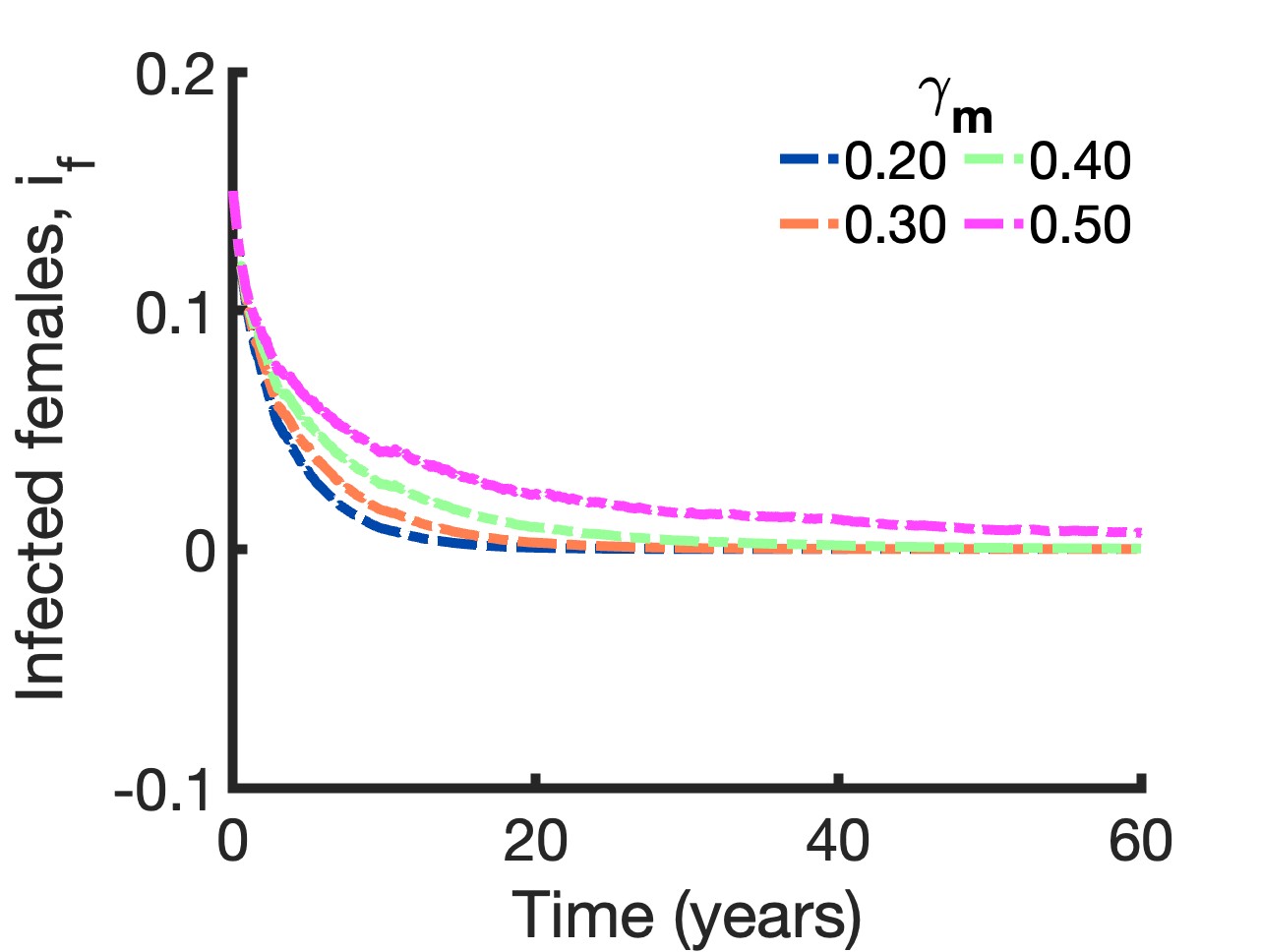}\label{fig:sub2121}}
	\subfigure[$t_f$ vs $t$.]{\includegraphics[width=0.4\linewidth]{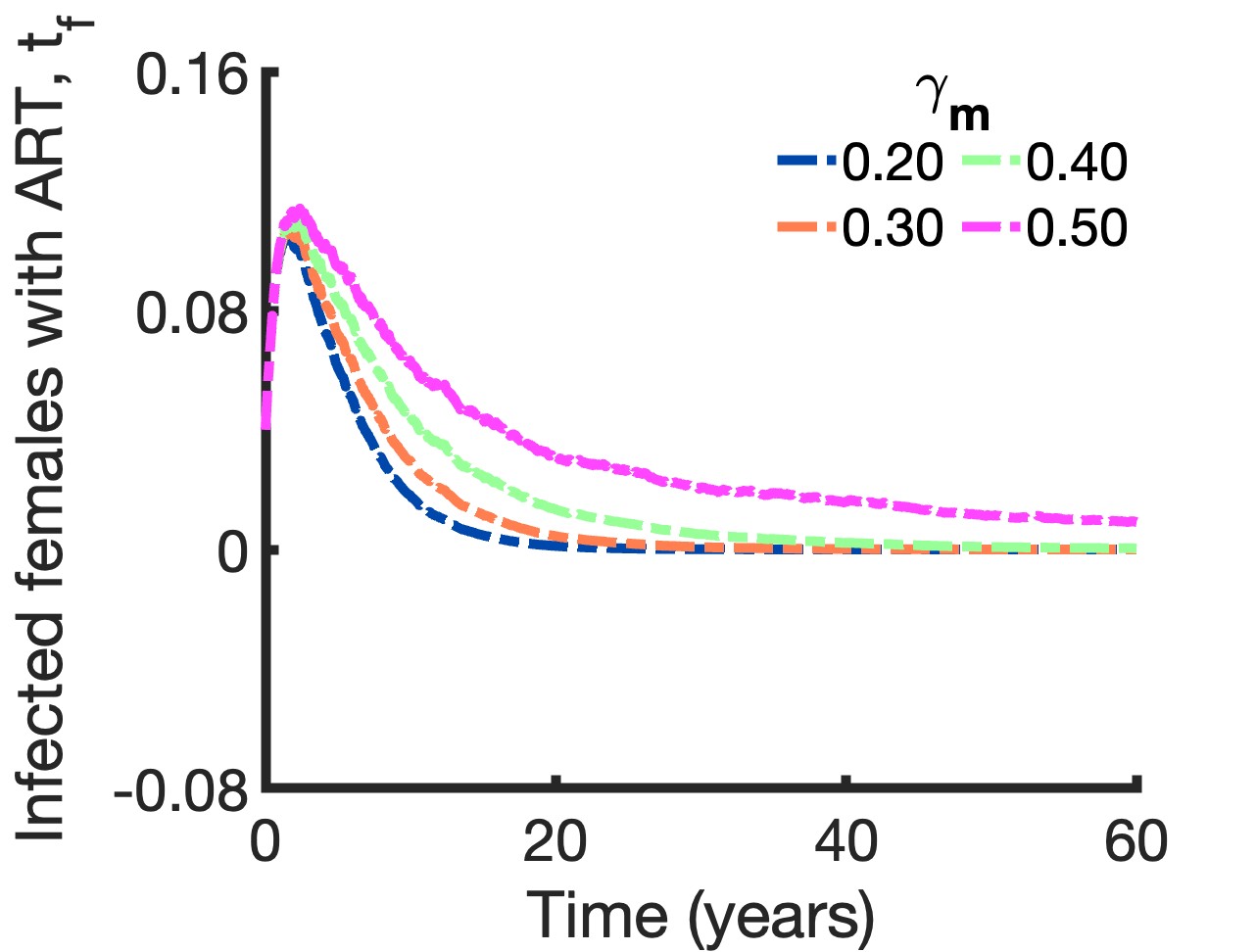}\label{fig:sub2131}}
	\caption{Disease-Free Equilibrium (DFE) of the stochastic Euler model for varying values of the probability of transmission by a male who is infected  ($\gamma_m$).}
	\label{fig:DFE_Euler_gamma_m}
\end{figure}

\noindent In figure \ref{fig:sub3111}, the Disease-Free Equilibrium (DFE) of the stochastic Euler model for varying values of the portion of infected females receiving ART ($\delta_f$) shows that increasing $\delta_f$ leads to a decrease in the number of infected males ($i_m$). Figure \ref{fig:sub3121} demonstrates that higher $\delta_f$ results in fewer infected females ($i_f$). Figure \ref{fig:sub3131} highlights that the number of infected females receiving ART ($t_f$) increases with higher $\delta_f$, underscoring the critical role of ART coverage in reducing overall infection rates and managing the epidemic.

\begin{figure}[H]
	\centering
	\subfigure[$i_m$ vs $t$.]{\includegraphics[width=0.4\linewidth]{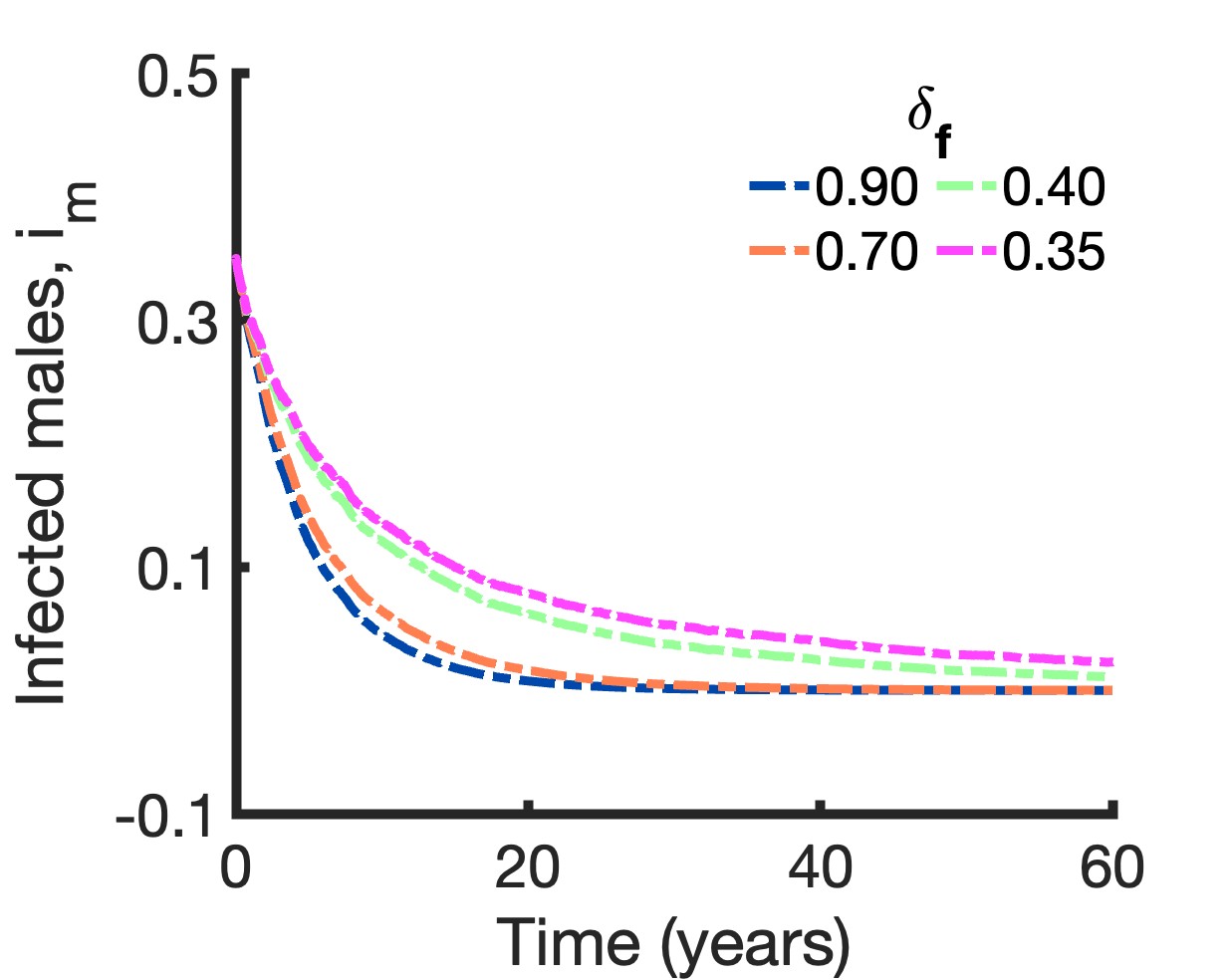}\label{fig:sub3111}}
	\subfigure[$i_f$ vs $t$.]{\includegraphics[width=0.4\linewidth]{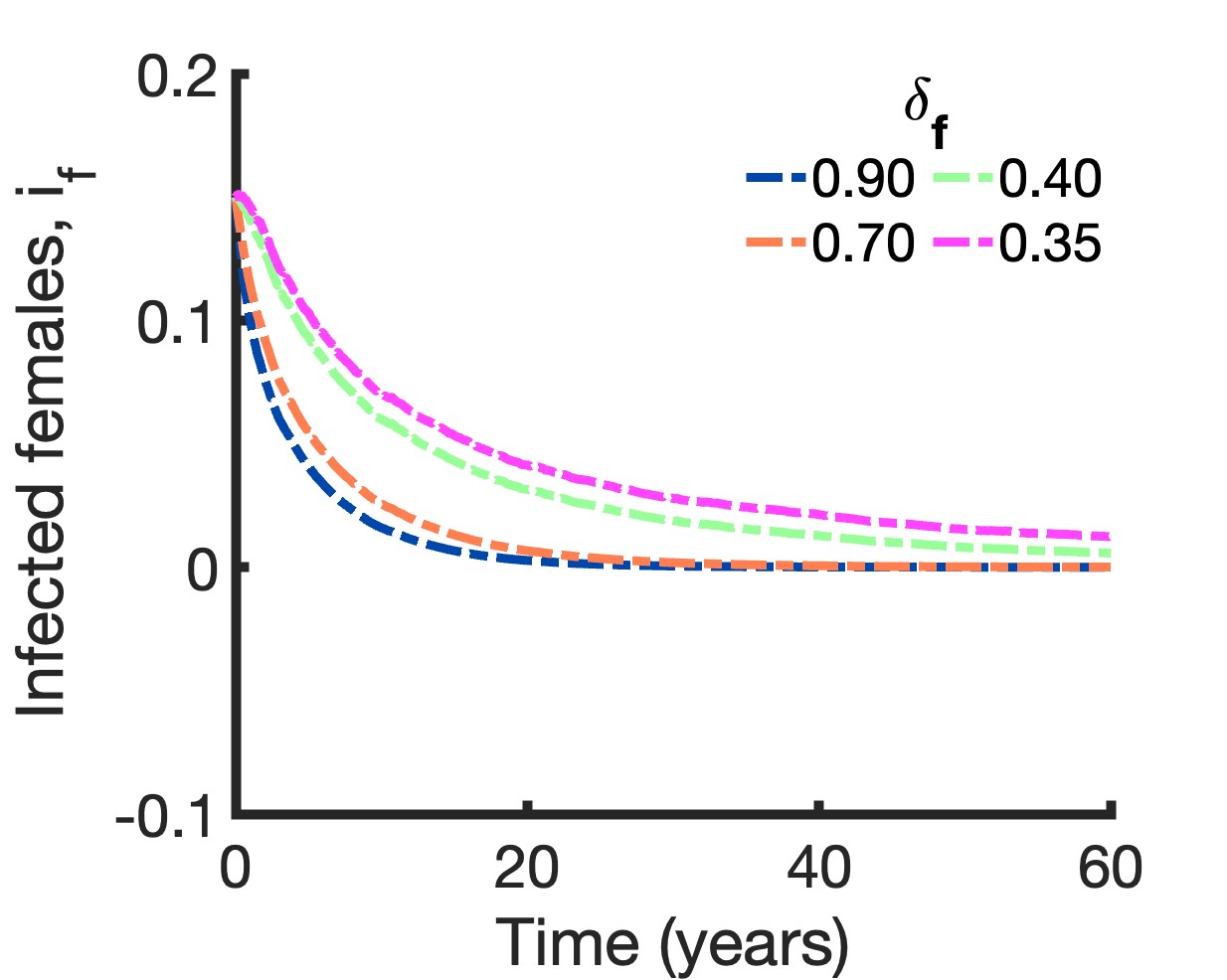}\label{fig:sub3121}}
	\subfigure[$t_f$ vs $t$.]{\includegraphics[width=0.4\linewidth]{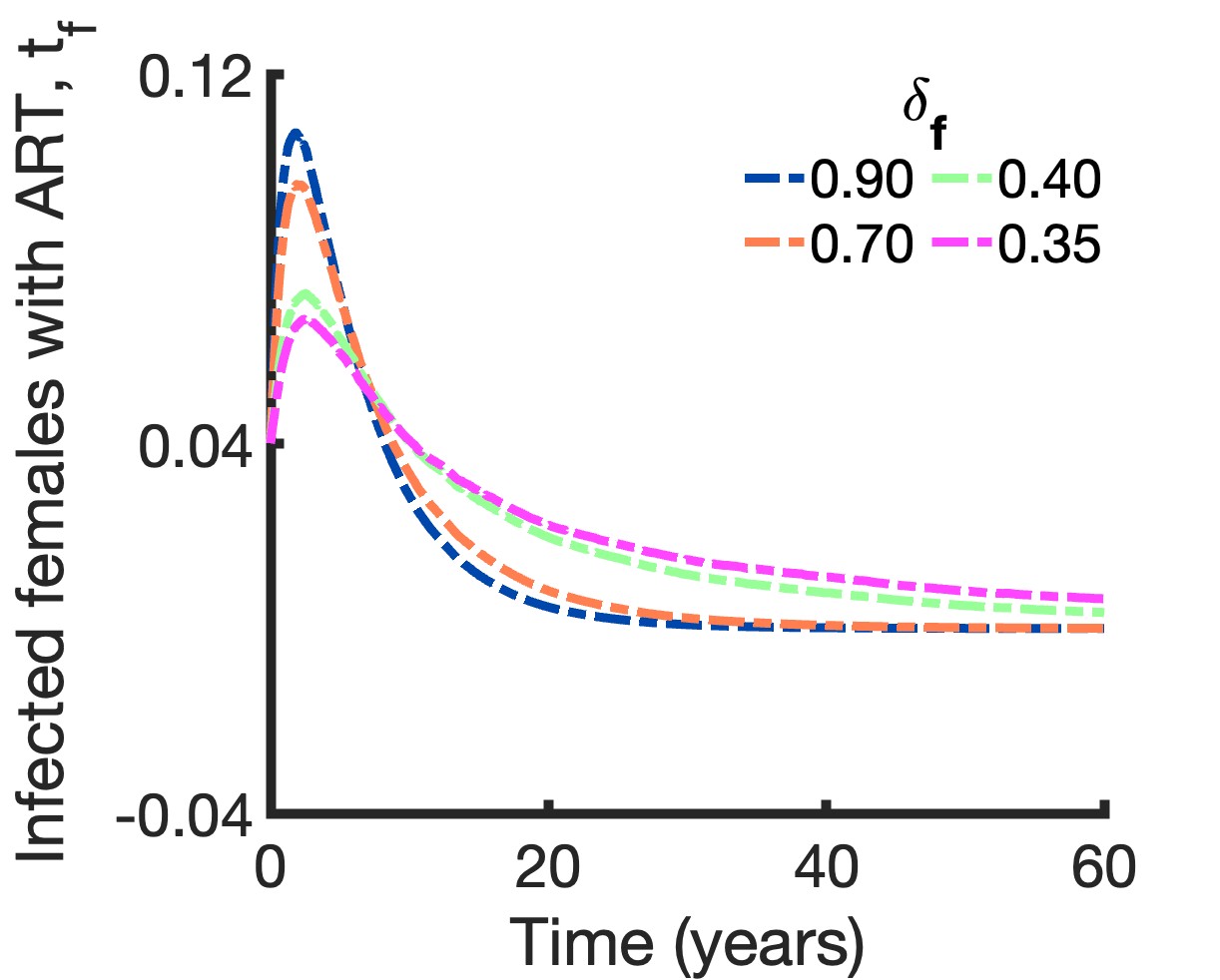}\label{fig:sub3131}}
	\caption{Disease-Free Equilibrium (DFE) of the stochastic Euler model for varying values of the portion of the infected females getting ART ($\delta_f$).}
	\label{fig:DFE_Euler_delta_f}
\end{figure}

\subsection{Stochastic Runge-Kutta}
In this subsection, we implemented the stochastic Runge-Kutta method to add complexity to our model. As in previous analyses, we examined the impact of the parameters $\gamma_f$, $\gamma_m$, and $\delta_f$ on the demographic groups ($i_m$, $i_f$, $t_f$). According to figure \ref{fig:sub421}, the Endemic Equilibrium (EE) of the stochastic Runge-Kutta model for varying values of the probability of transmission by a female who is infected ($\gamma_f$) shows that as $\gamma_f$ decreases, the number of infected males ($i_m$) decreases. Figure \ref{fig:sub422} similarly illustrates that a lower $\gamma_f$ leads to a reduction in the number of infected females ($i_f$). Figure \ref{fig:sub423} demonstrates that the number of infected females receiving antiretroviral therapy (ART) ($t_f$) also decreases with a lower $\gamma_f$, highlighting the interconnected dynamics of gender-specific transmission probabilities.
\begin{figure}[H]
	\centering
	\subfigure[$i_m$ vs $t$.]{\includegraphics[width=0.4\linewidth]{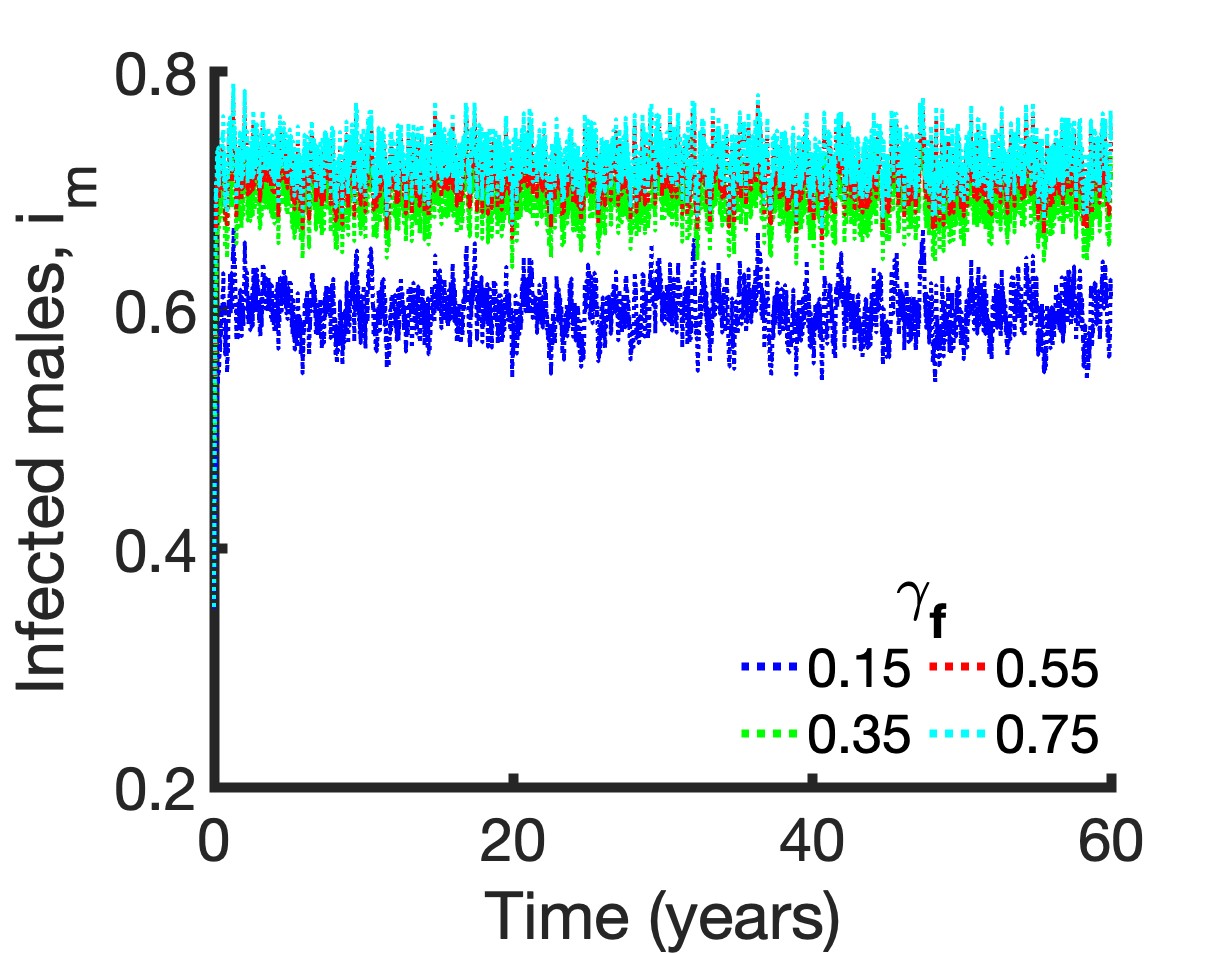}\label{fig:sub421}}
	\subfigure[$i_f$ vs $t$.]{\includegraphics[width=0.4\linewidth]{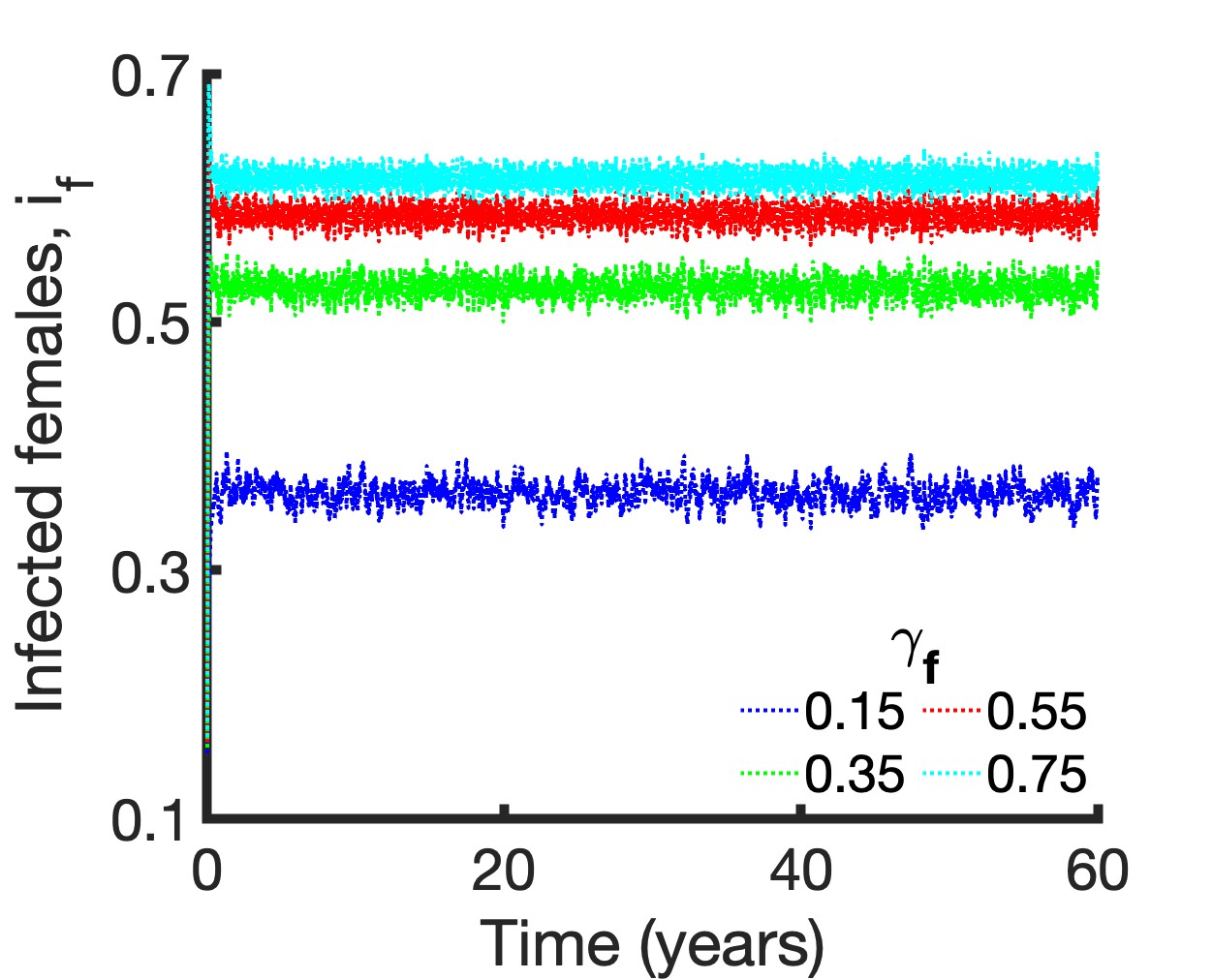}\label{fig:sub422}}
	\subfigure[$t_f$ vs $t$.]{\includegraphics[width=0.4\linewidth]{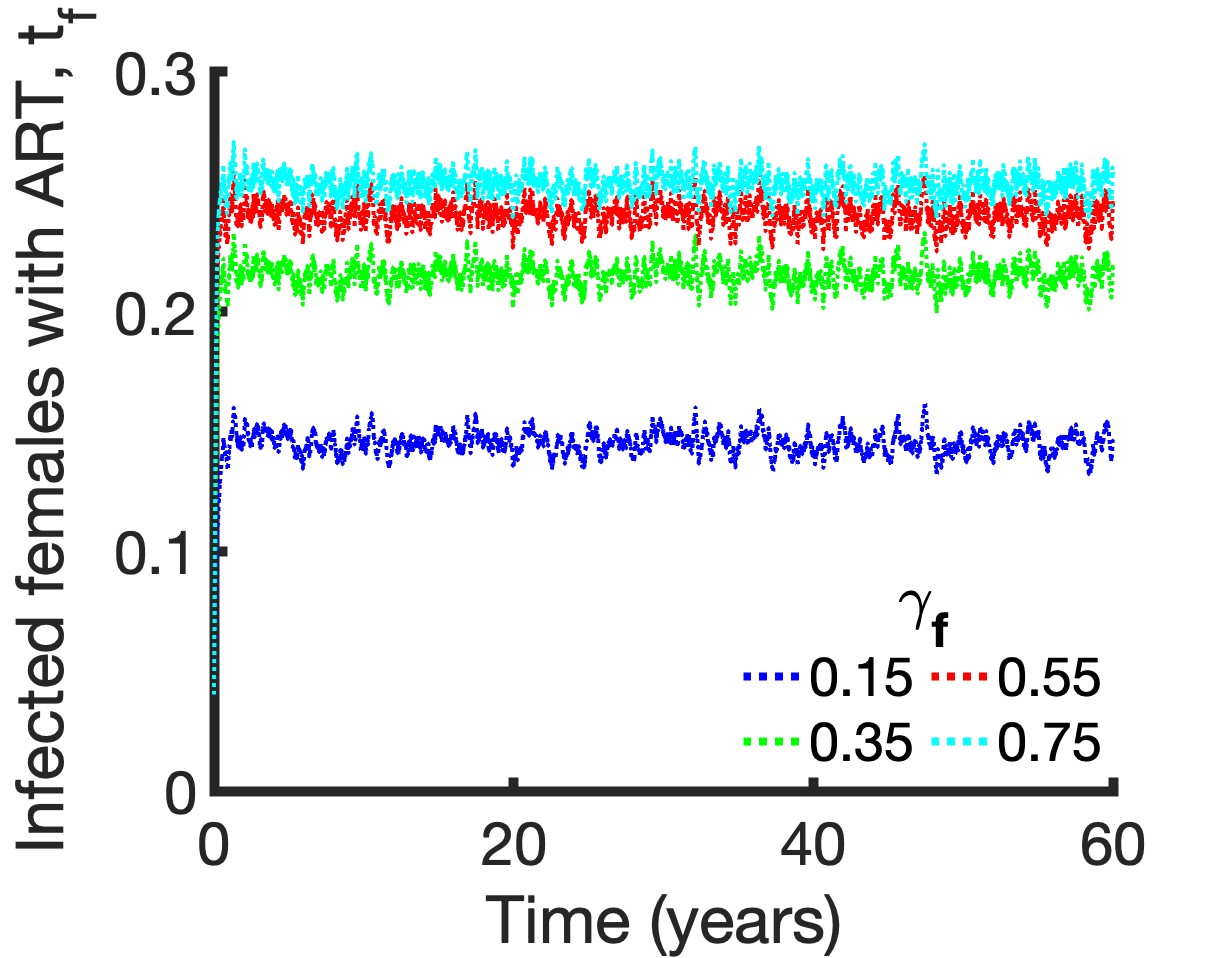}\label{fig:sub423}}
	\caption{Endemic Equilibrium (EE) of the stochastic Runge-Kutta model for varying values of the probability of transmission by a female who is infected  ($\gamma_f$).}
  \label{fig:EE_StoRK_gamma_f}
\end{figure}

\noindent According to figure \ref{fig:sub221}, the Endemic Equilibrium (EE) of the stochastic Runge-Kutta model for varying values of the probability of transmission by a male who is infected ($\gamma_m$) demonstrates that as $\gamma_m$ decreases, the number of infected males ($i_m$) also decreases. Figure \ref{fig:sub222} shows a similar trend for infected females ($i_f$), where a lower $\gamma_m$ leads to a reduced number of infected females. Figure \ref{fig:sub223} reveals that the number of infected females receiving antiretroviral therapy (ART) ($t_f$) also decreases as $\gamma_m$ is reduced, highlighting the interconnected dynamics of gender-specific transmission probabilities.

\begin{figure}[H]
	\centering
	\subfigure[$i_m$ vs $t$.]{\includegraphics[width=0.4\linewidth]{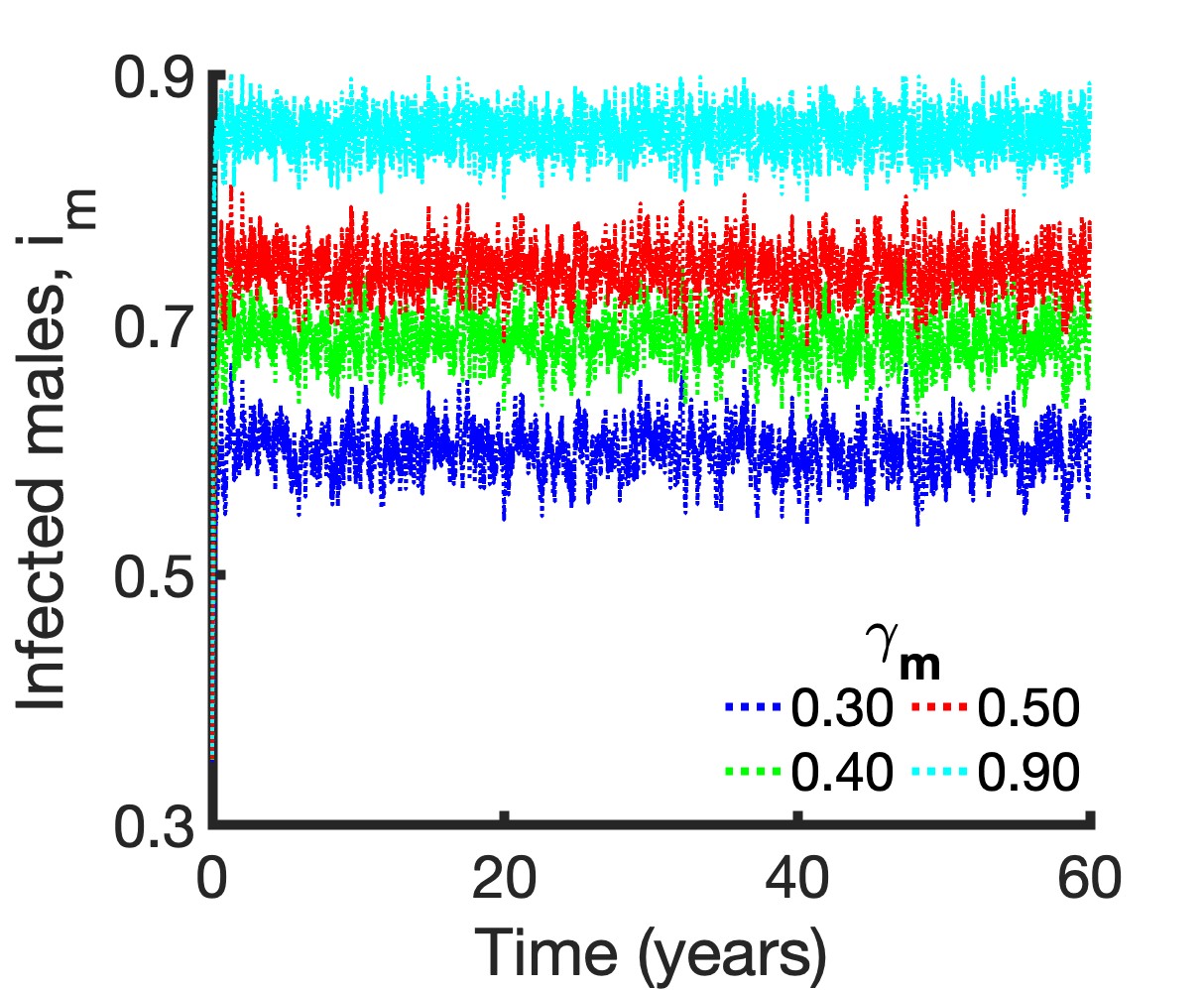}\label{fig:sub221}}
	\subfigure[$i_f$ vs $t$.]{\includegraphics[width=0.4\linewidth]{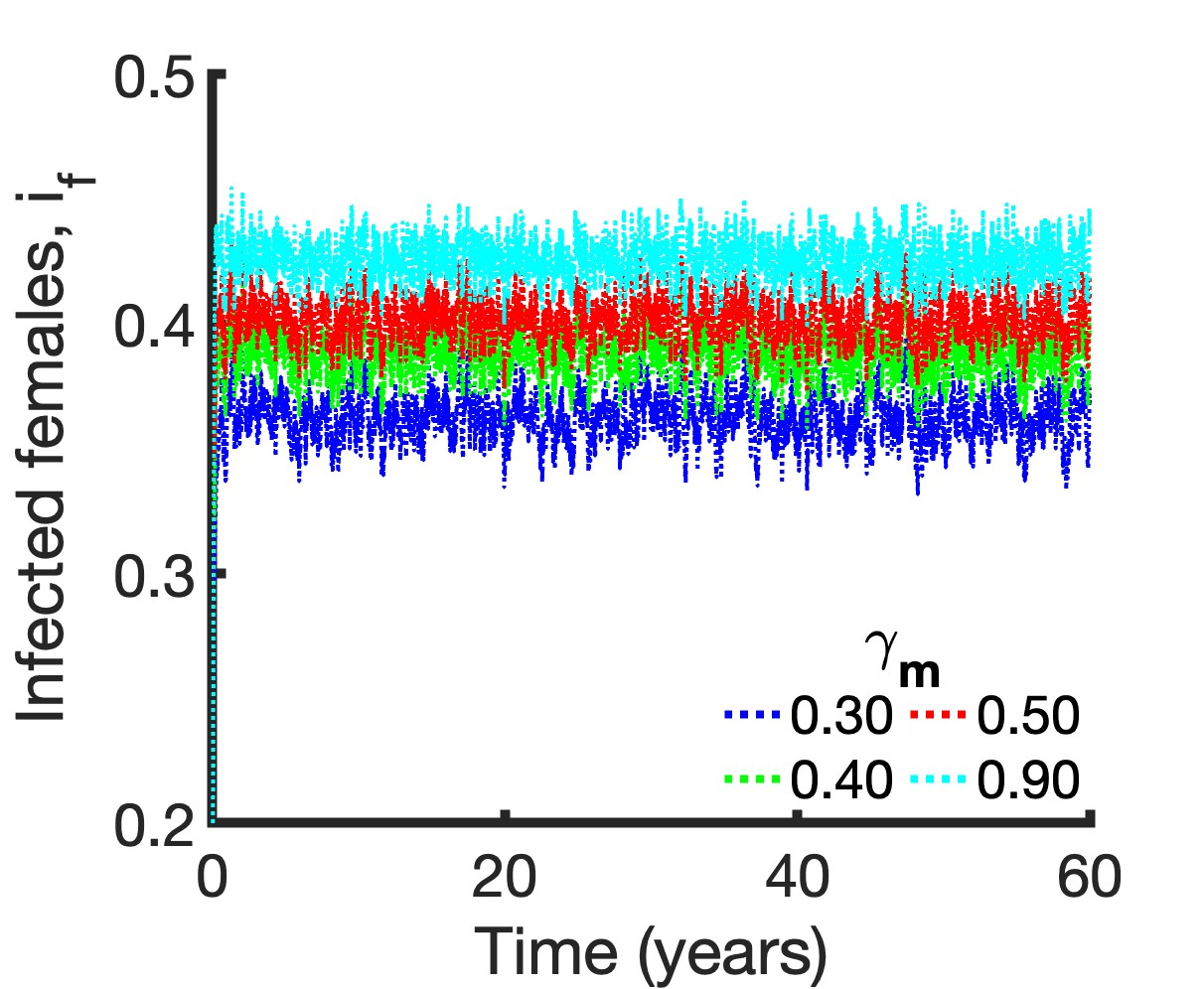}\label{fig:sub222}}
	\subfigure[$t_f$ vs $t$.]{\includegraphics[width=0.4\linewidth]{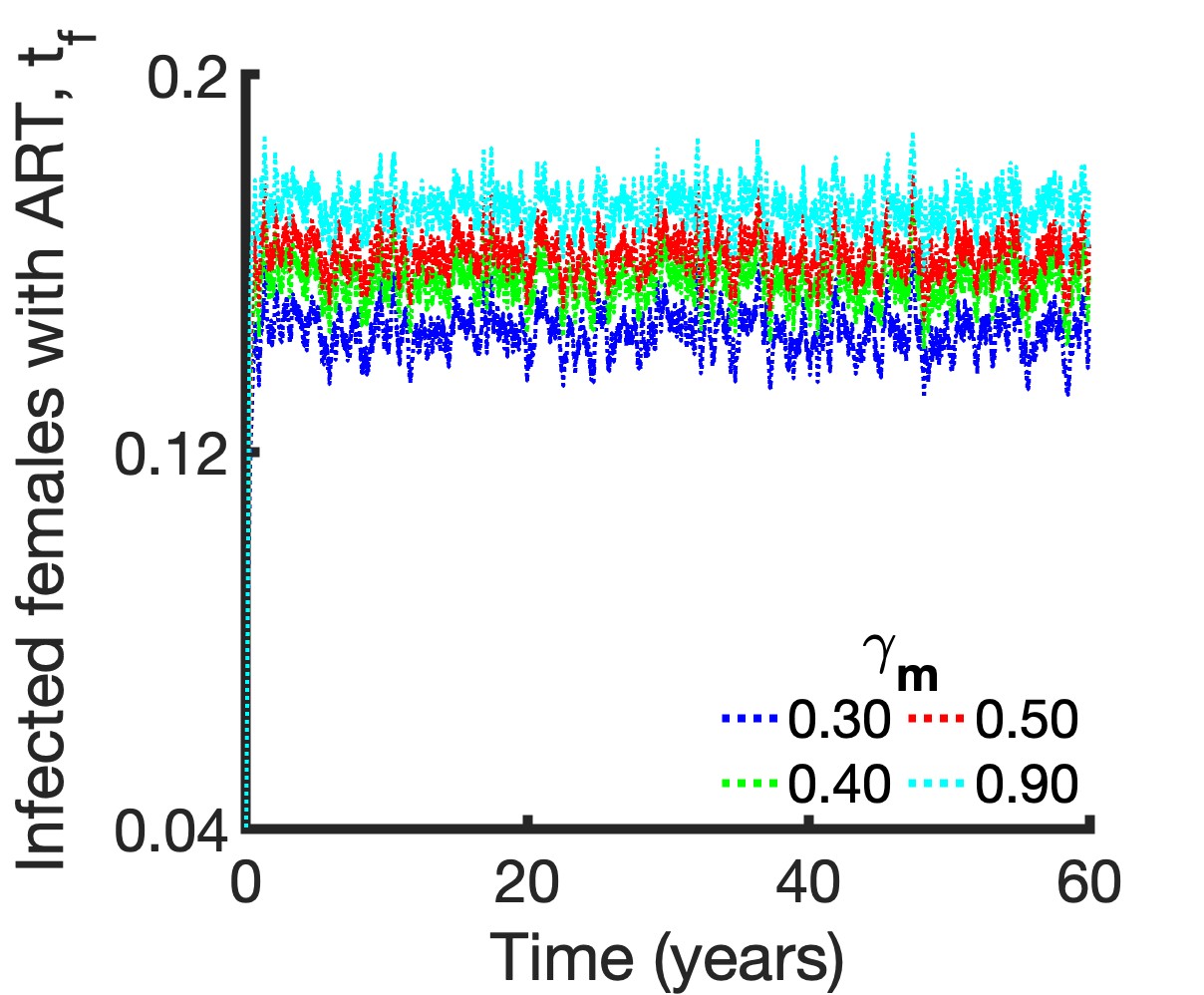}\label{fig:sub223}}
	\caption{Endemic Equilibrium (EE) of the stochastic Runge-Kutta model for varying values of the probability of transmission by a male who is infected  ($\gamma_m$).}
	\label{fig:EE_StoRK_gamma_m}
\end{figure}

\noindent As shown in figure \ref{fig:sub321}, the Endemic Equilibrium (EE) of the stochastic Runge-Kutta model for varying values of the portion of the infected females getting ART ($\delta_f$) indicates that increasing $\delta_f$ leads to a decrease in the number of infected males ($i_m$). Figure \ref{fig:sub322} illustrates that a higher $\delta_f$ results in fewer infected females ($i_f$). Figure \ref{fig:sub323} highlights that the number of infected females receiving ART ($t_f$) increases with higher $\delta_f$, underscoring the critical role of ART coverage in reducing overall infection rates and managing the epidemic. The key difference with the Runge-Kutta method is the increased fluctuation observed in the graphs, which highlights the method's sensitivity to capturing more nuanced variations in the data.

\begin{figure}[H]
	\centering
	\subfigure[$i_m$ vs $t$.]{\includegraphics[width=0.4\linewidth]{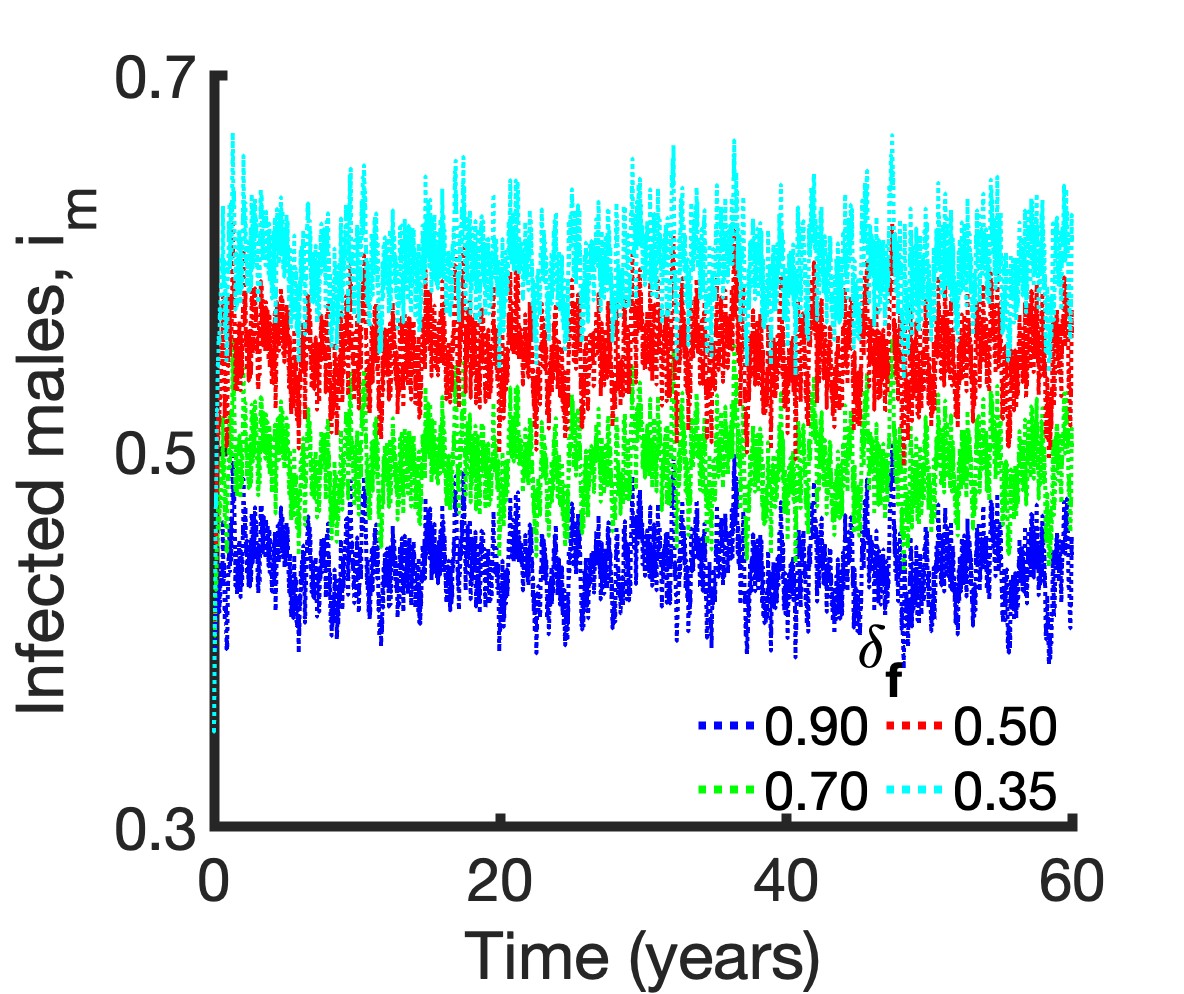}\label{fig:sub321}}
	\subfigure[$i_f$ vs $t$.]{\includegraphics[width=0.4\linewidth]{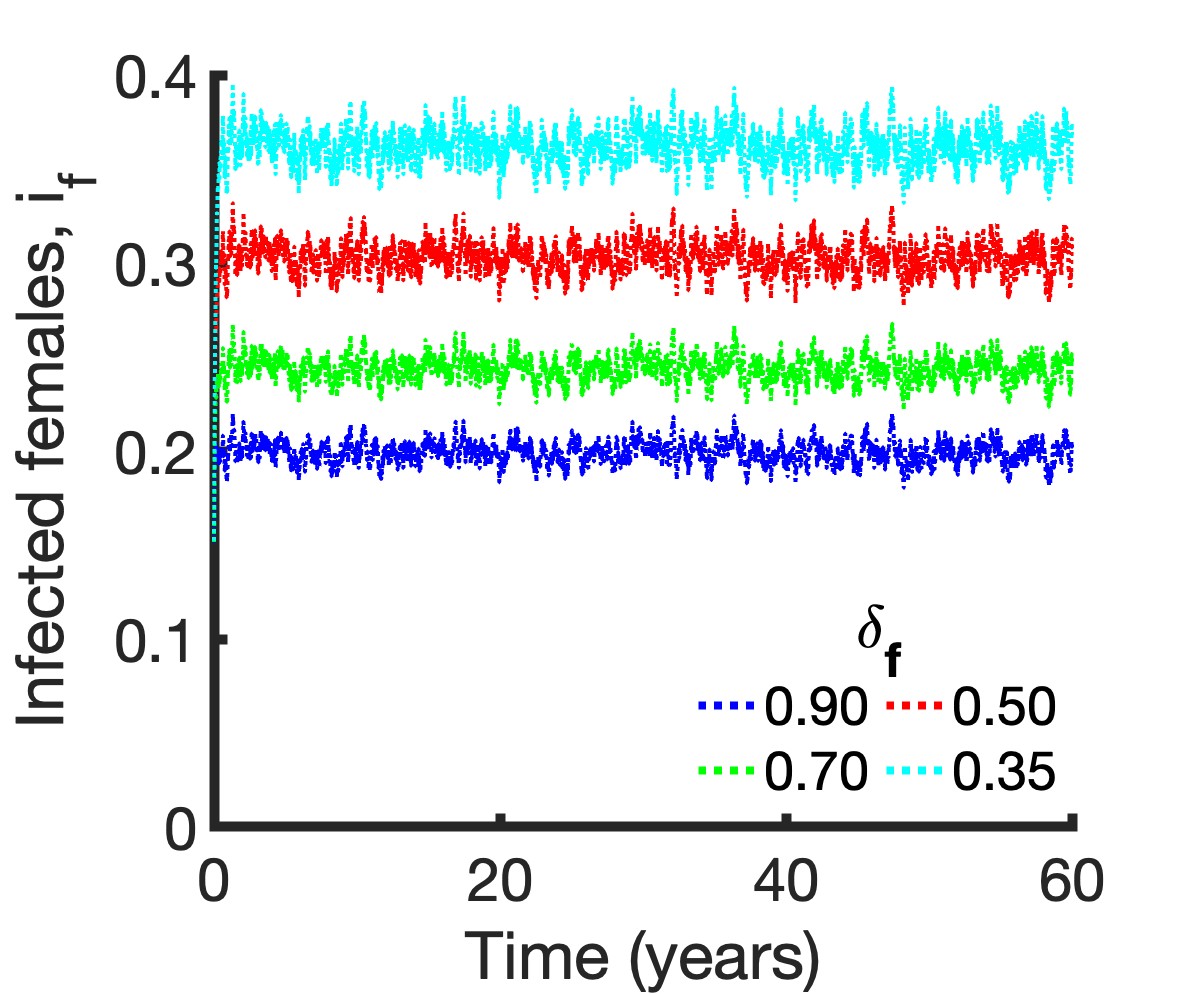}\label{fig:sub322}}
	\subfigure[$t_f$ vs $t$.]{\includegraphics[width=0.4\linewidth]{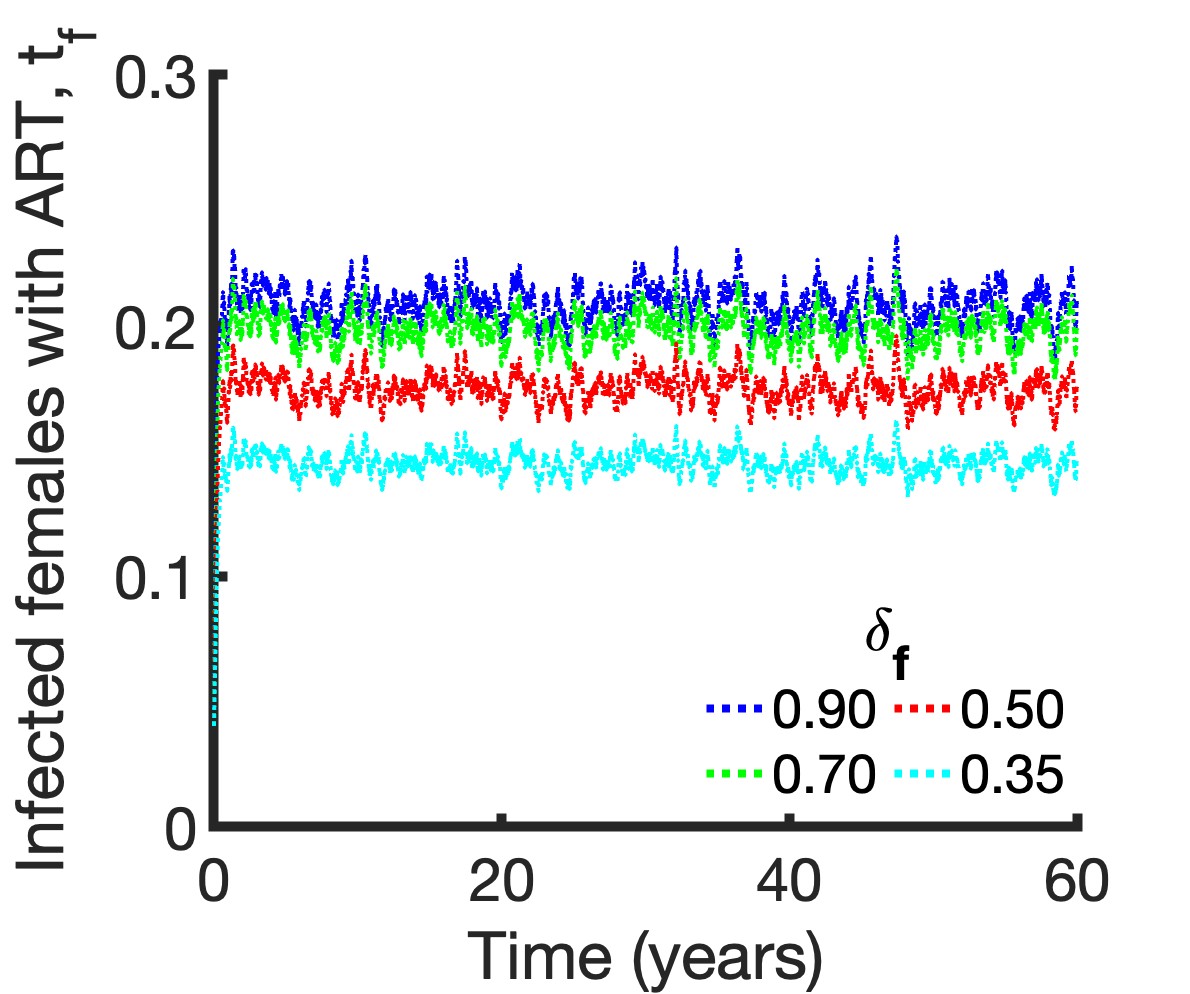}\label{fig:sub323}}
	  \caption{Endemic Equilibrium (EE) of the stochastic Runge-Kutta model for varying values of the portion of the infected females getting ART ($\delta_f$).}
	 \label{fig:EE_StoRK_delta_f}
\end{figure}

\noindent In our analysis using the Runge-Kutta method, the effect of step size on the accuracy of results for each population group ($i_m$, $i_f$, $t_f$) is evident from the graphs. Smaller step sizes lead to more precise outcomes, as they allow for a finer resolution of the dynamic changes within the model (see figures \ref{fig:sub411},  \ref{fig:sub412} and  \ref{fig:sub413}). This increased granularity helps capture subtle fluctuations and trends more accurately. Consequently, reducing the step size in the Runge-Kutta calculations enhances the fidelity of the model's predictions, aligning them more closely with expected theoretical behaviors and empirical data across all demographic groups. 

\begin{figure}[H]
	\centering
	\subfigure[$i_m$ vs $t$.]{\includegraphics[width=0.4\linewidth]{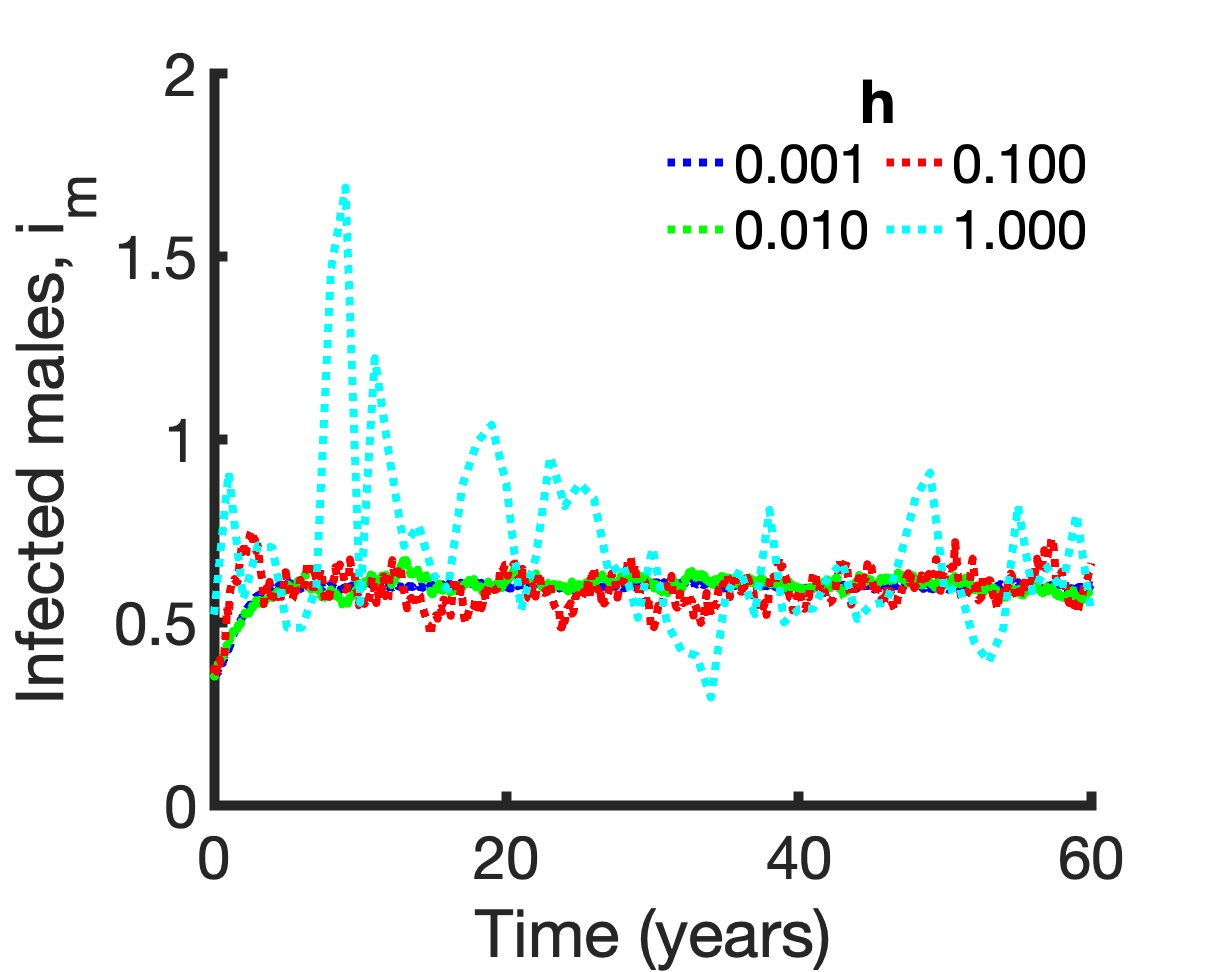}\label{fig:sub411}}
	\subfigure[$i_f$ vs $t$.]{\includegraphics[width=0.4\linewidth]{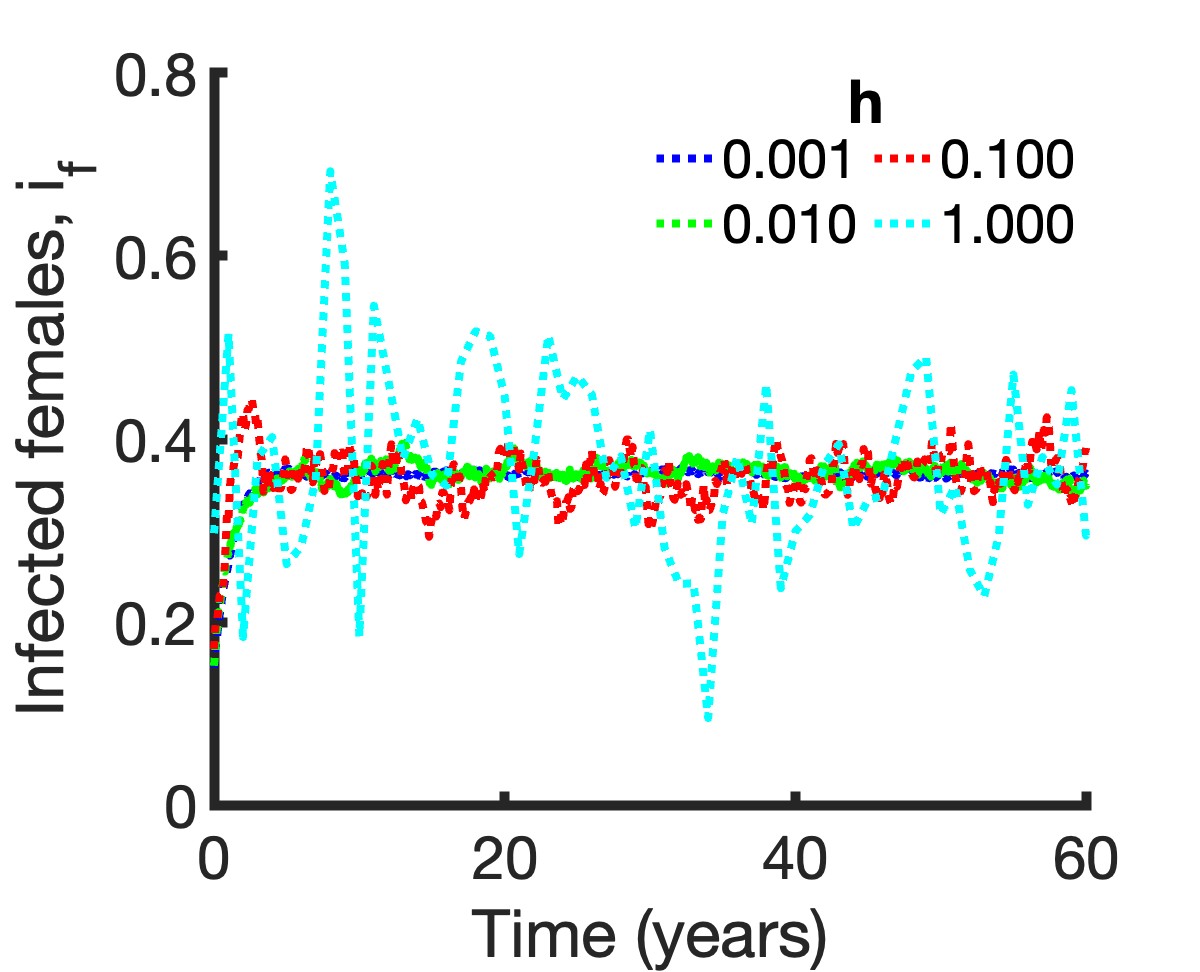}\label{fig:sub412}}
	\subfigure[$t_f$ vs $t$.]{\includegraphics[width=0.4\linewidth]{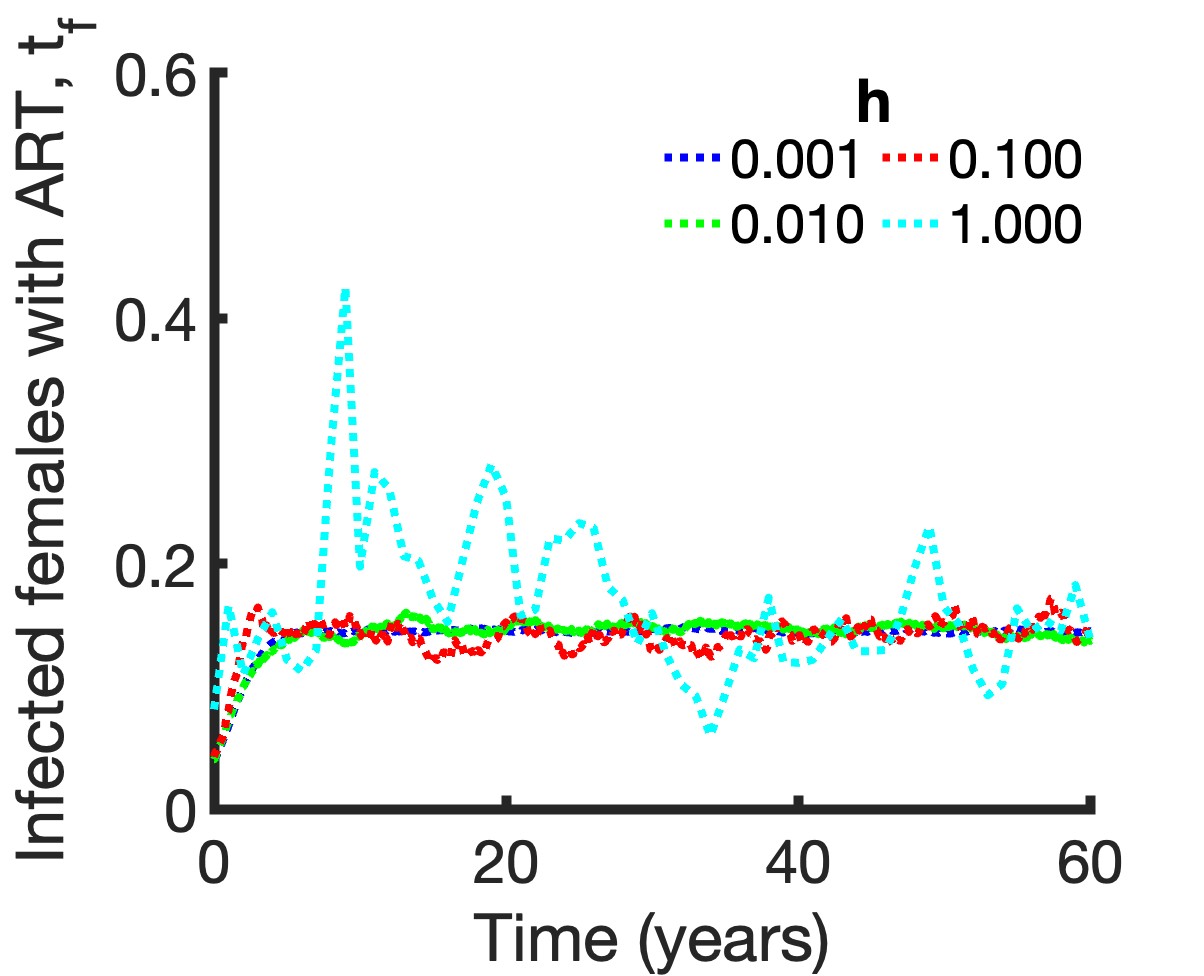}\label{fig:sub413}}
	\caption{Endemic Equilibrium (EE) of the stochastic Runge-Kutta model for varying values of step size $h$.}
	\label{fig:EE_StoRK_h}
\end{figure}

In this part, we utilised the stochastic Runge-Kutta approach to augment the intricacy of our model. In line with our previous studies, we examined the impact of the parameters $\gamma_f$, $\gamma_m$, and $\delta_f$ on the demographic groupings ($i_m$, $i_f$, $t_f$). The results, depicted in new figures that are similar to the ones stated before, confirm the comparable effects of these parameters as found in both the deterministic and Euler method-based models (refer to figures \ref{fig:sub511}, \ref{fig:sub512}, \ref{fig:sub513}, \ref{fig:sub521}, \ref{fig:sub522}, \ref{fig:sub523}, \ref{fig:sub531}, \ref{fig:sub532} and \ref{fig:sub533}). The main characteristic of the Runge-Kutta technique is its increased fluctuations in the graphs, which indicate the system's capacity to capture more intricate changes in the data.

\begin{figure}[H]
	\centering
	\subfigure[$i_m$ vs $t$.]{\includegraphics[width=0.4\linewidth]{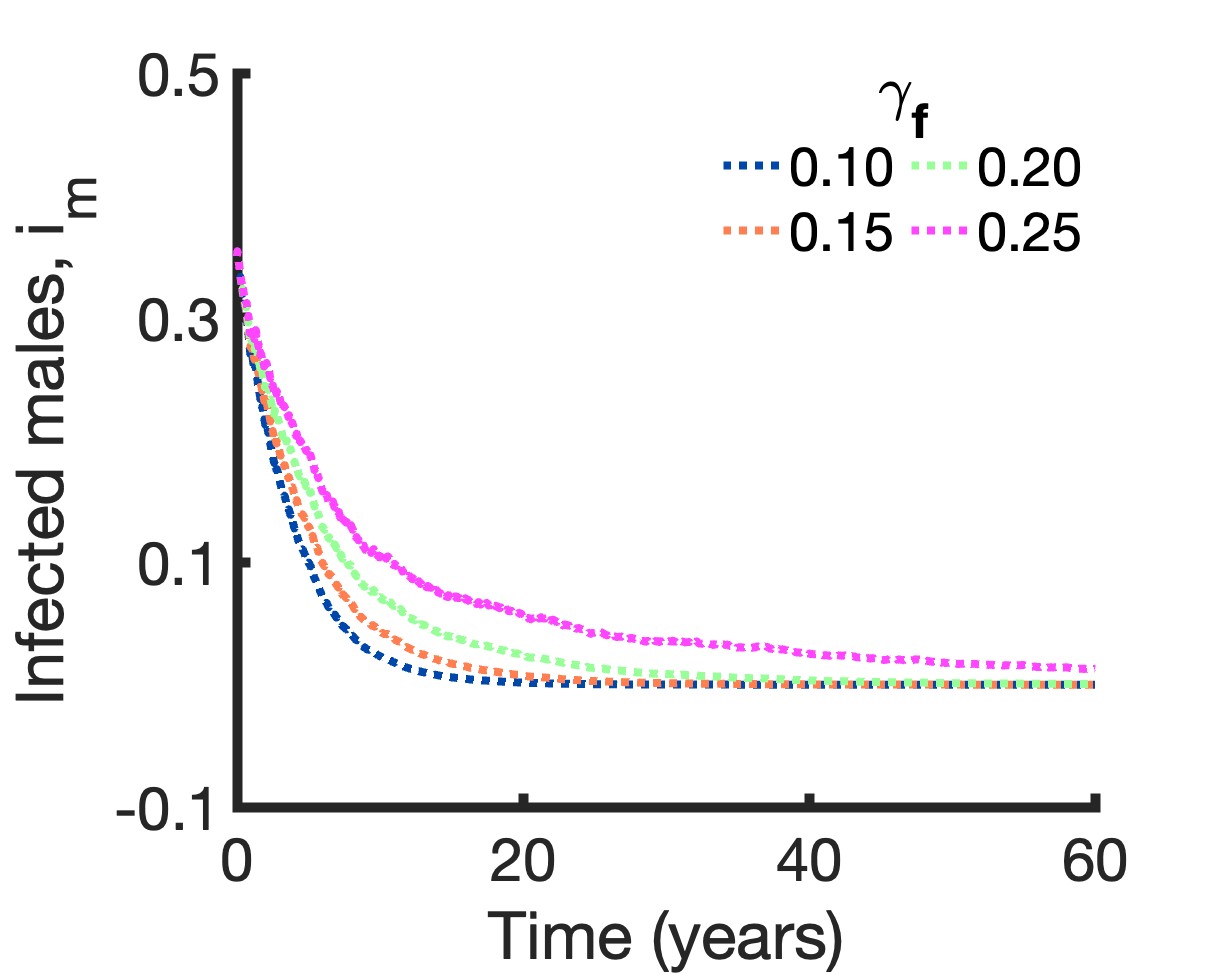}\label{fig:sub511}}
	\subfigure[$i_f$ vs $t$.]{\includegraphics[width=0.4\linewidth]{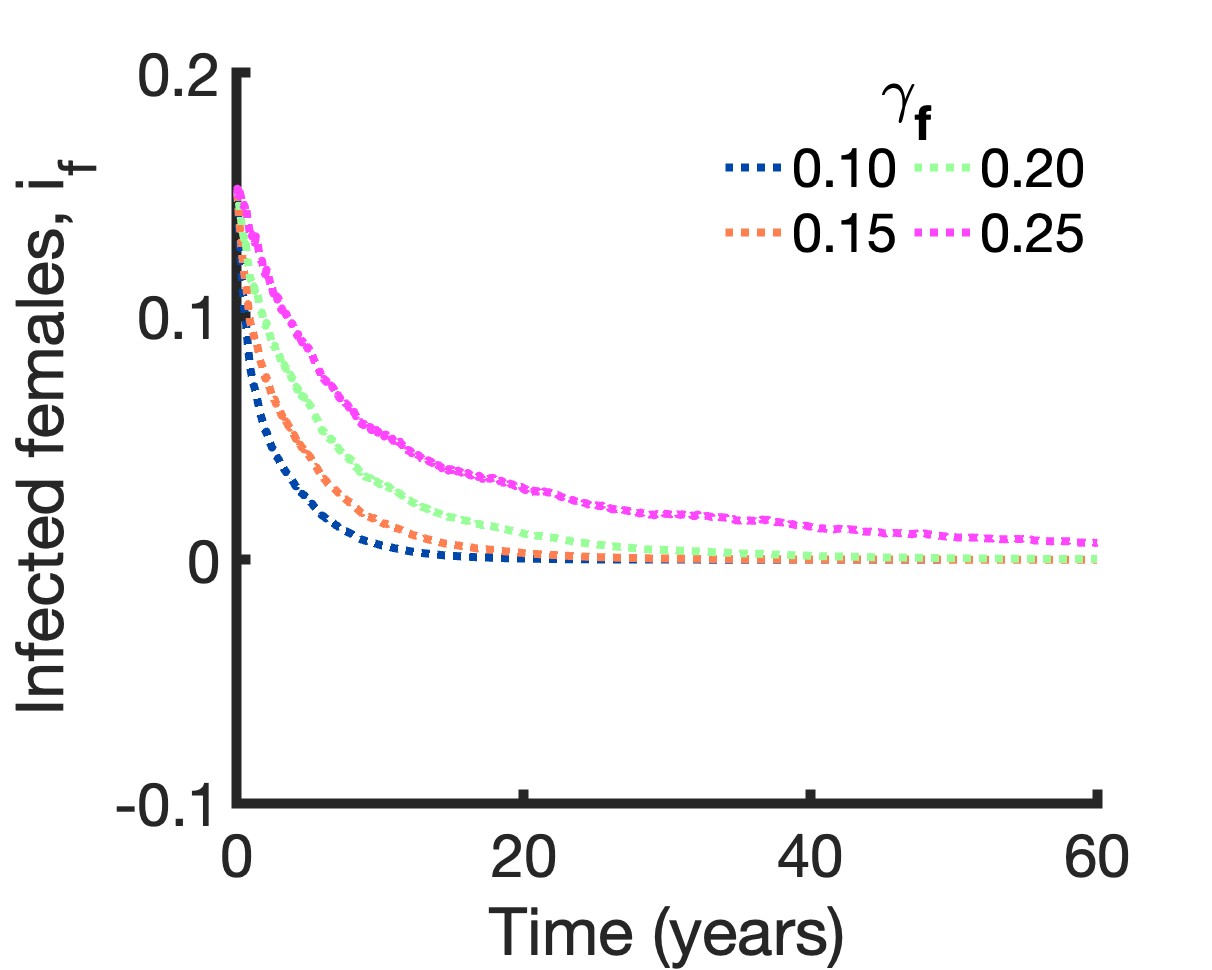}\label{fig:sub512}}
	\subfigure[$t_f$ vs $t$.]{\includegraphics[width=0.4\linewidth]{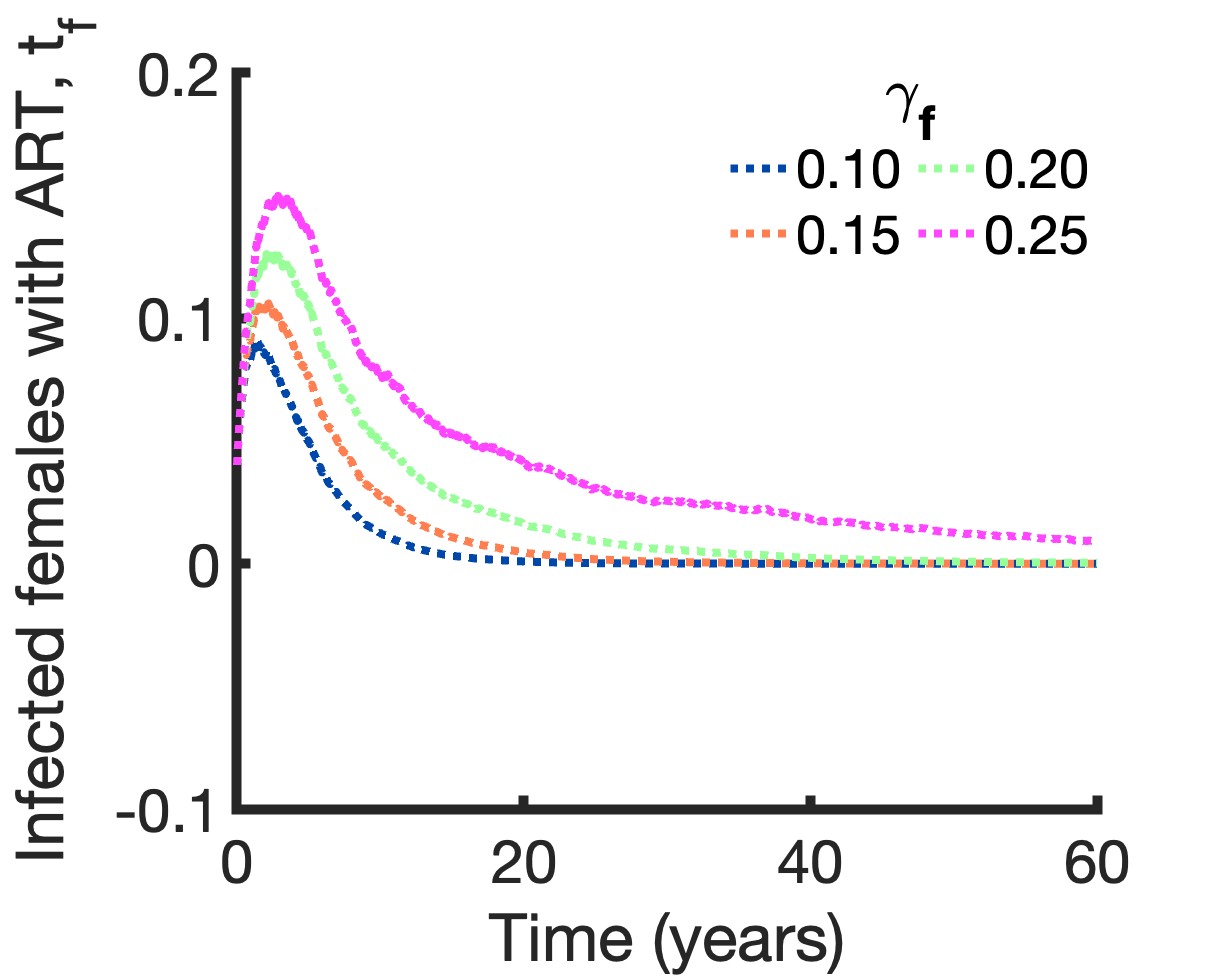}\label{fig:sub513}}
	\caption{Disease-Free Equilibrium (DFE) of the stochastic Runge-Kutta model for varying values of the probability of transmission by a female who is infected  ($\gamma_f$).}
	 \label{fig:DFE_StoRK_gamma_f}
\end{figure}

\begin{figure}[H]
	\centering
	\subfigure[$i_m$ vs $t$.]{\includegraphics[width=0.4\linewidth]{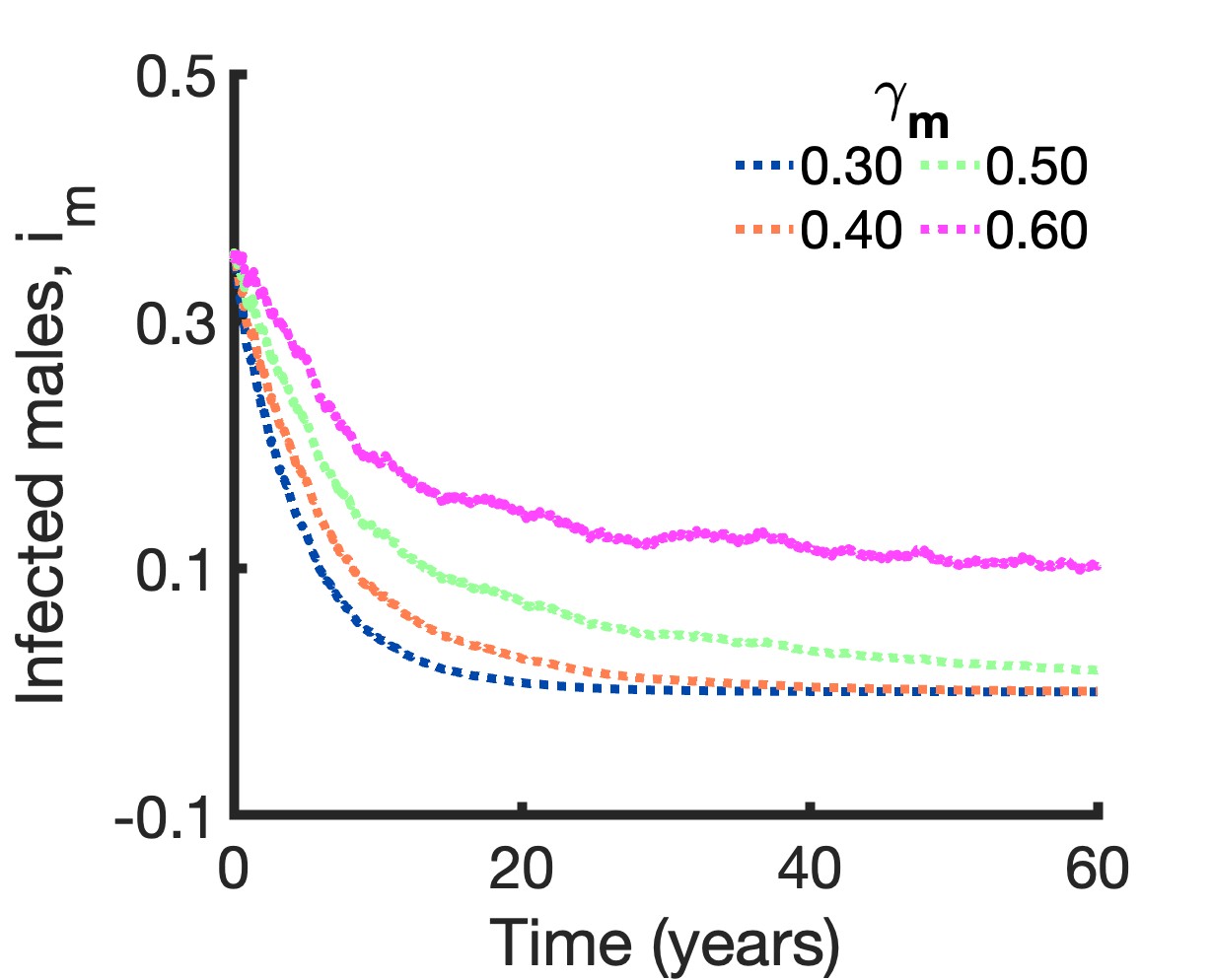}\label{fig:sub521}}
	\subfigure[$i_f$ vs $t$.]{\includegraphics[width=0.4\linewidth]{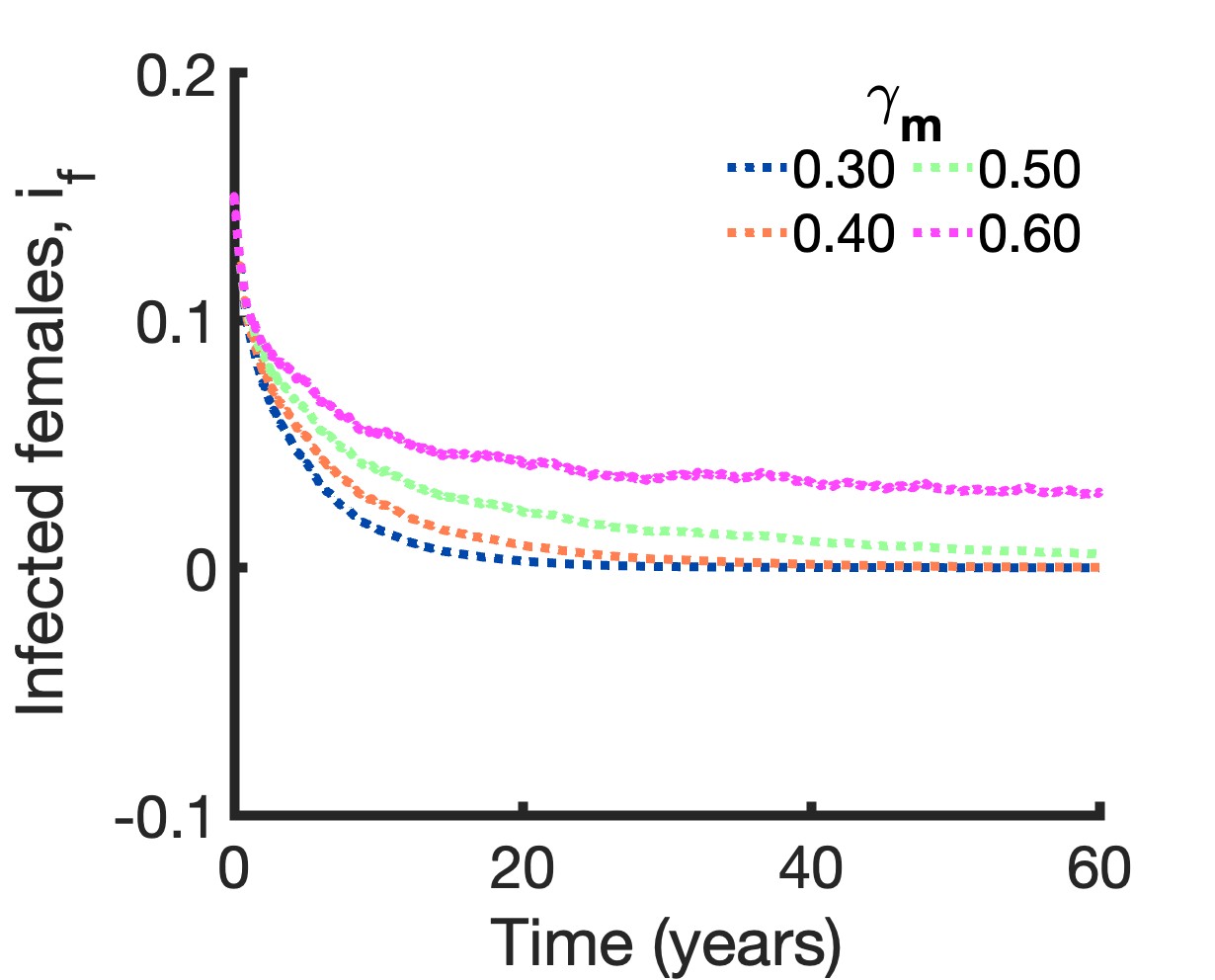}\label{fig:sub522}}
	\subfigure[$t_f$ vs $t$.]{\includegraphics[width=0.4\linewidth]{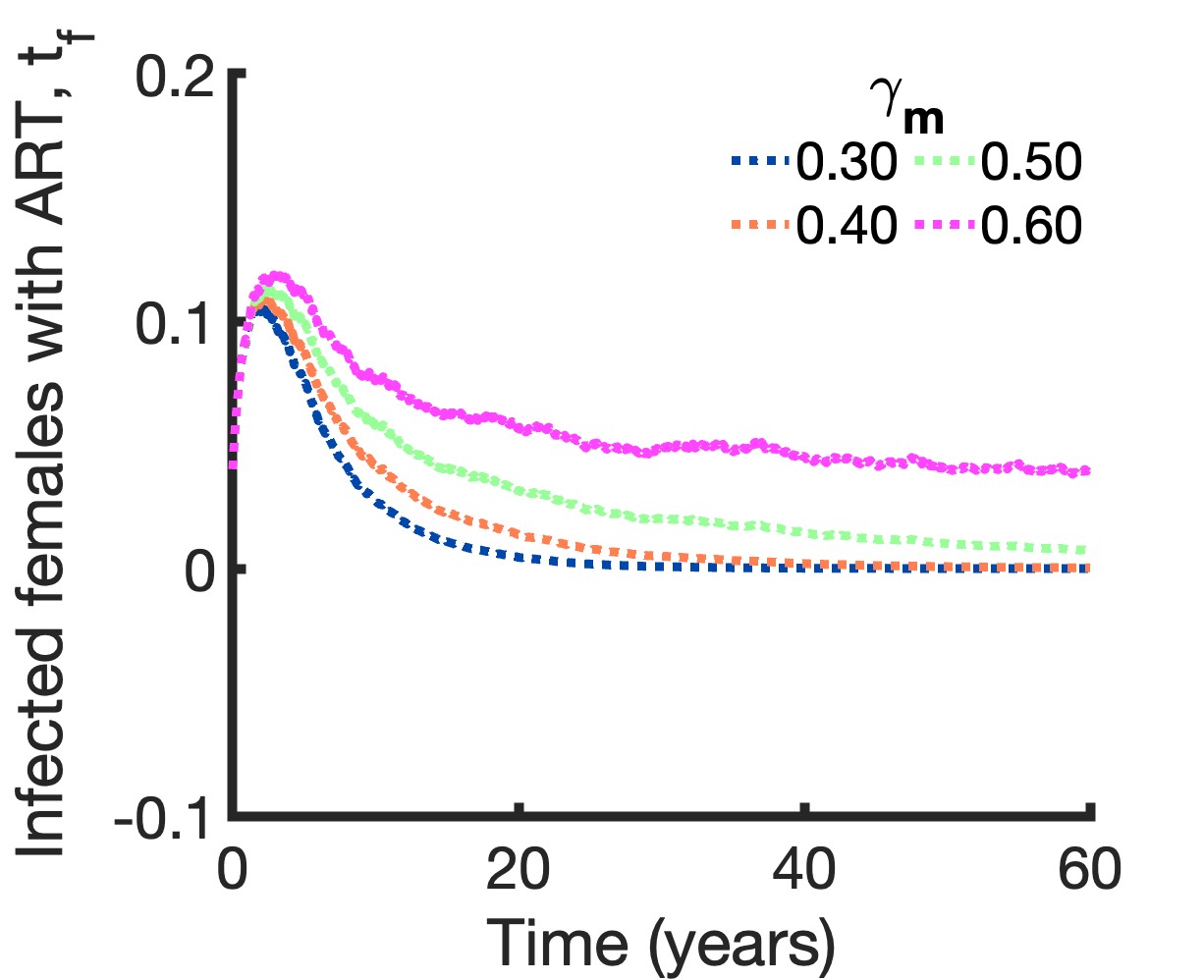}\label{fig:sub523}}
	\caption{Disease-Free Equilibrium (DFE) of the stochastic Runge-Kutta model for varying values of the probability of transmission by a male who is infected  ($\gamma_m$).}
	\label{fig:DFE_StoRK_gamma_m}
\end{figure}

\begin{figure}[H]
	\centering
	\subfigure[$i_m$ vs $t$.]{\includegraphics[width=0.4\linewidth]{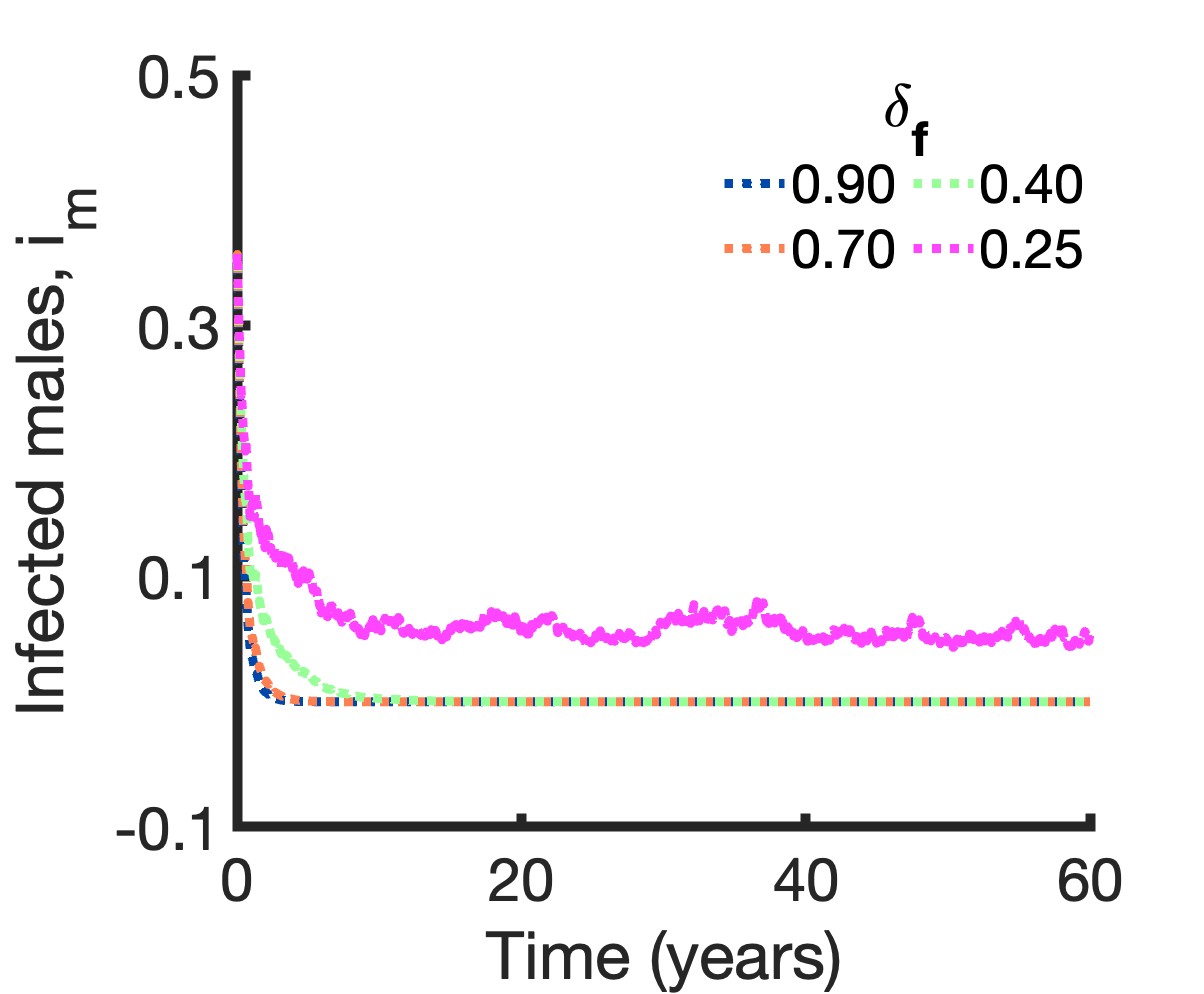}\label{fig:sub531}}
	\subfigure[$i_f$ vs $t$.]{\includegraphics[width=0.4\linewidth]{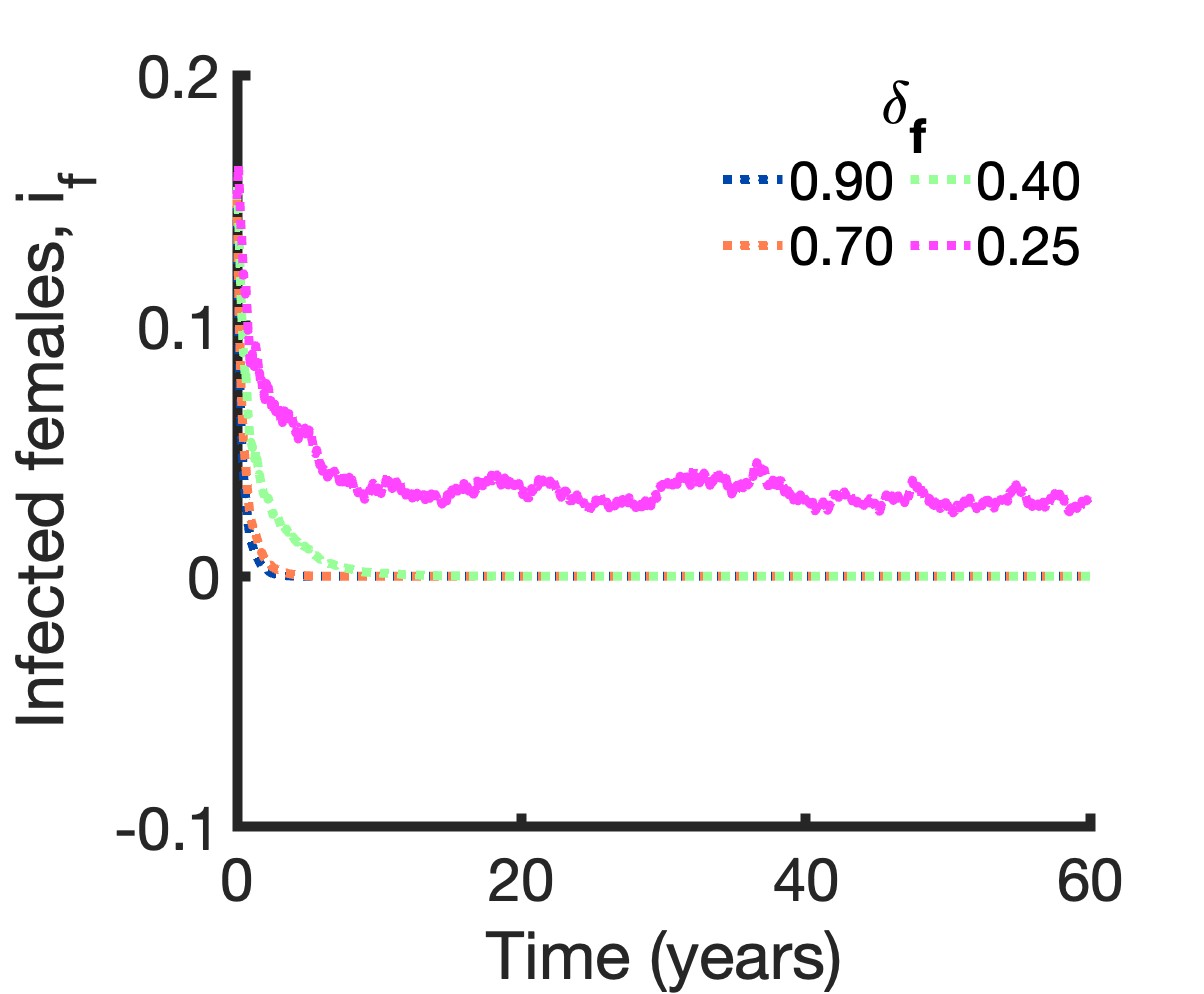}\label{fig:sub532}}
	\subfigure[$t_f$ vs $t$.]{\includegraphics[width=0.4\linewidth]{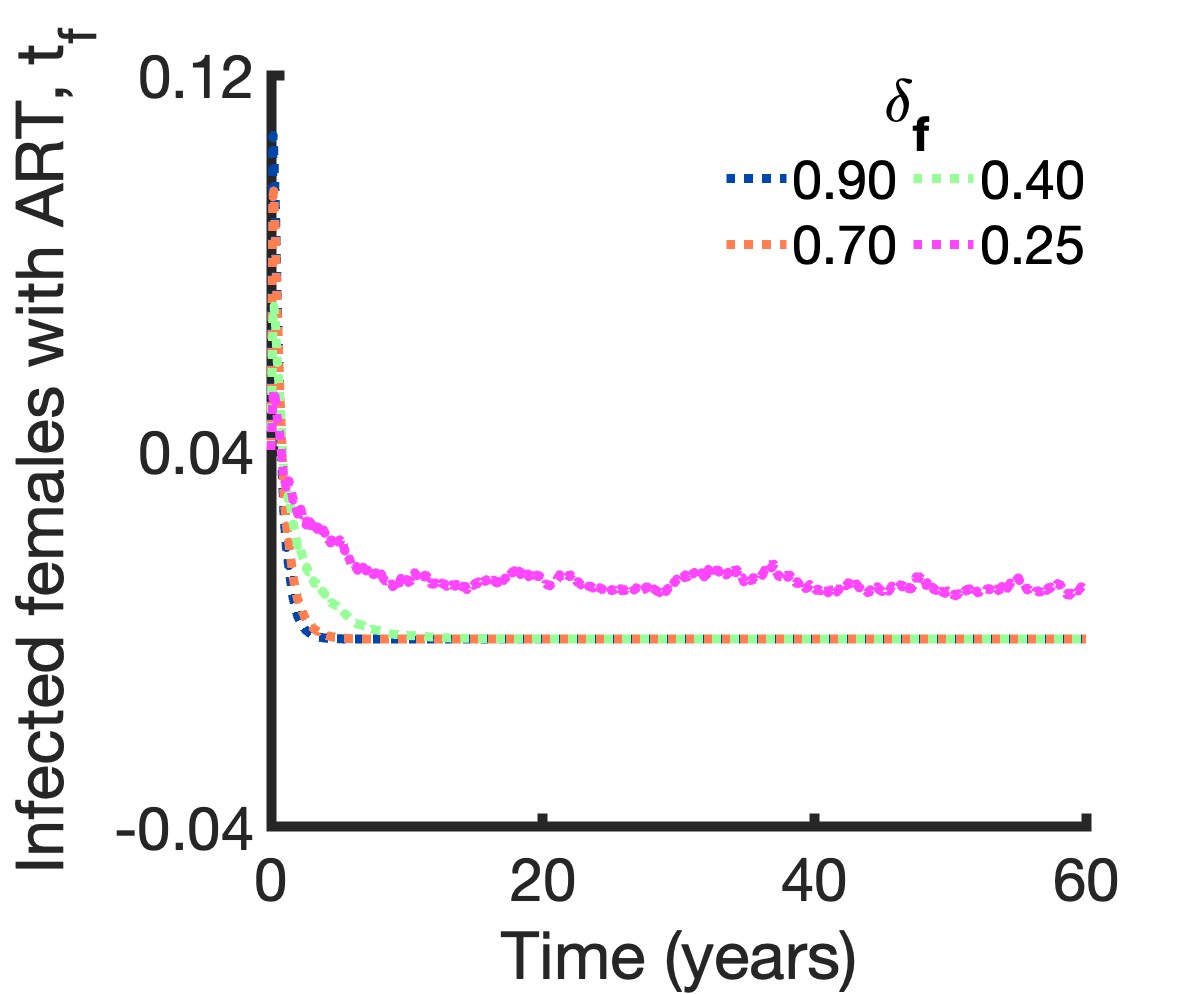}\label{fig:sub533}}
 \caption{Disease-Free Equilibrium (DFE) of the stochastic Runge-Kutta model for varying values of the portion of the infected females getting ART ($\delta_f$).}
  \label{fig:DFE_StoRK_delta_f}
\end{figure}

\noindent The plots (see figures \ref{fig:sub4112}, \ref{fig:sub4122} and \ref{fig:sub4132}) show that step size affects the accuracy of results for each population group ($i_m, i_f, t_f$) in our Runge-Kutta study. Smaller step sizes result in more accurate results by allowing for finer resolution of dynamic changes in the model. This granularity enhances the accuracy of capturing small variations and patterns. Reducing the step size in Runge-Kutta computations improves model fidelity, matching predictions with theoretical and empirical evidence across all demographics.

\begin{figure}[H]
	\centering
	\subfigure[$i_m$ vs $t$.]{\includegraphics[width=0.4\linewidth]{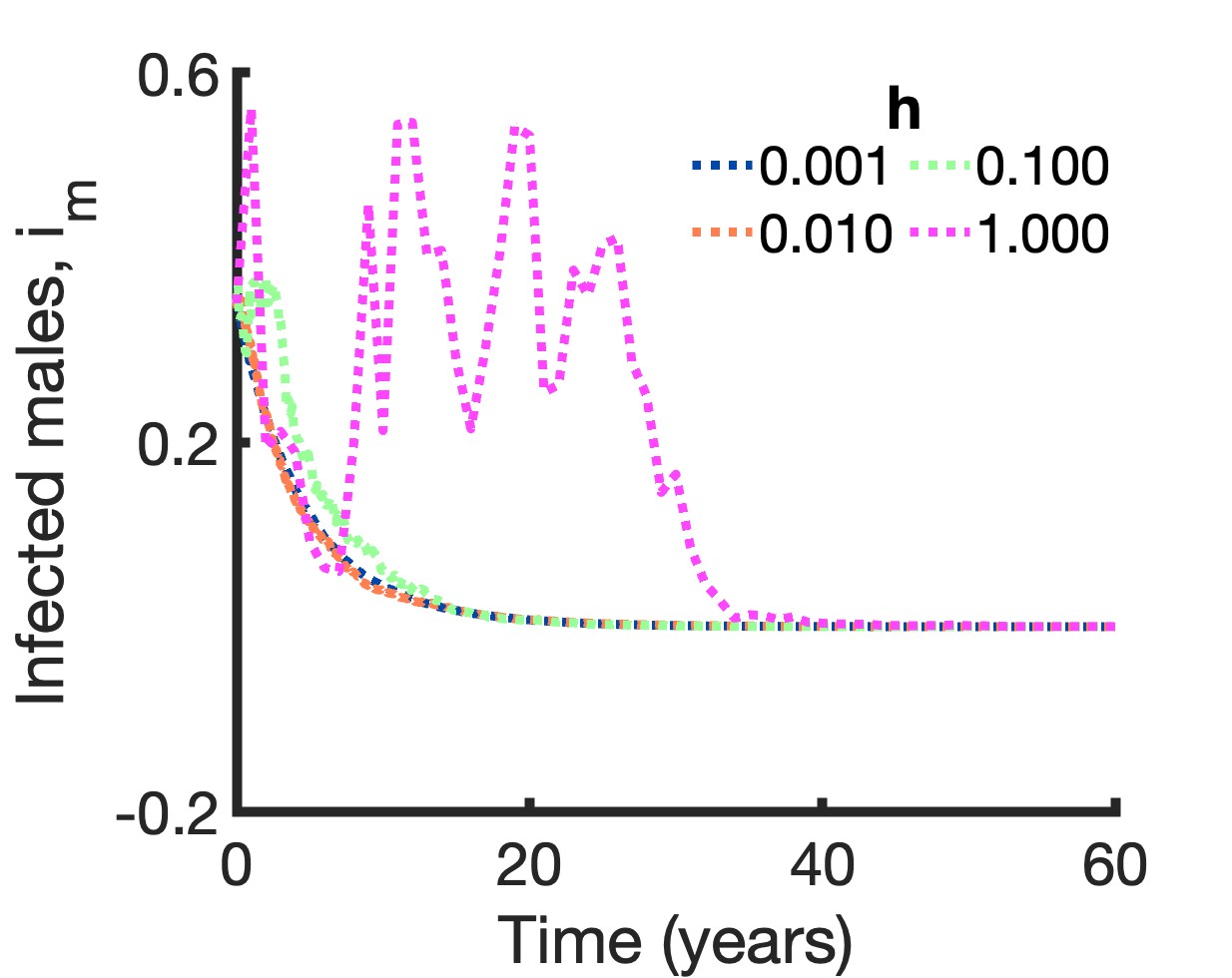}\label{fig:sub4112}}
	\subfigure[$i_f$ vs $t$.]{\includegraphics[width=0.4\linewidth]{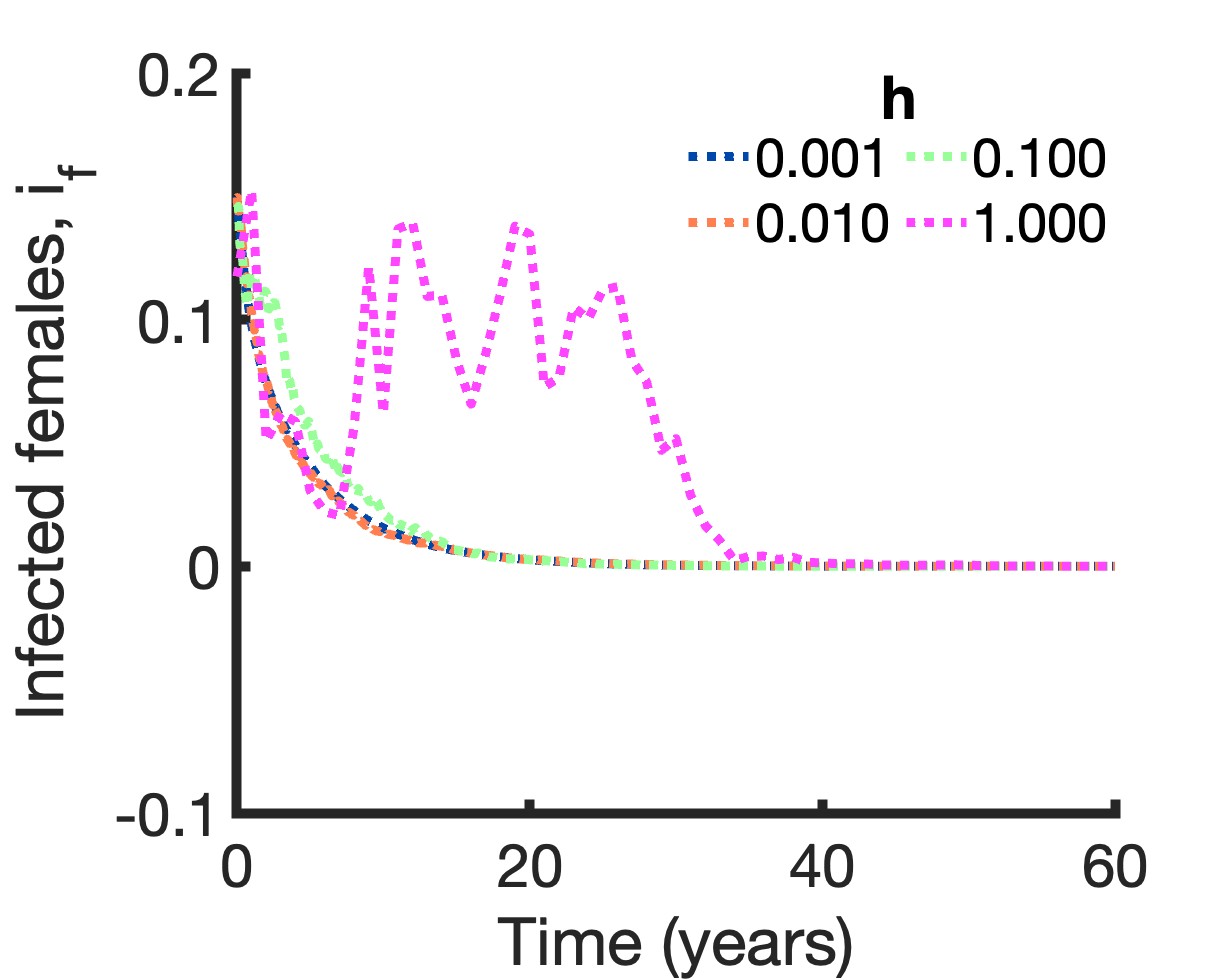}\label{fig:sub4122}}
	\subfigure[$t_f$ vs $t$.]{\includegraphics[width=0.4\linewidth]{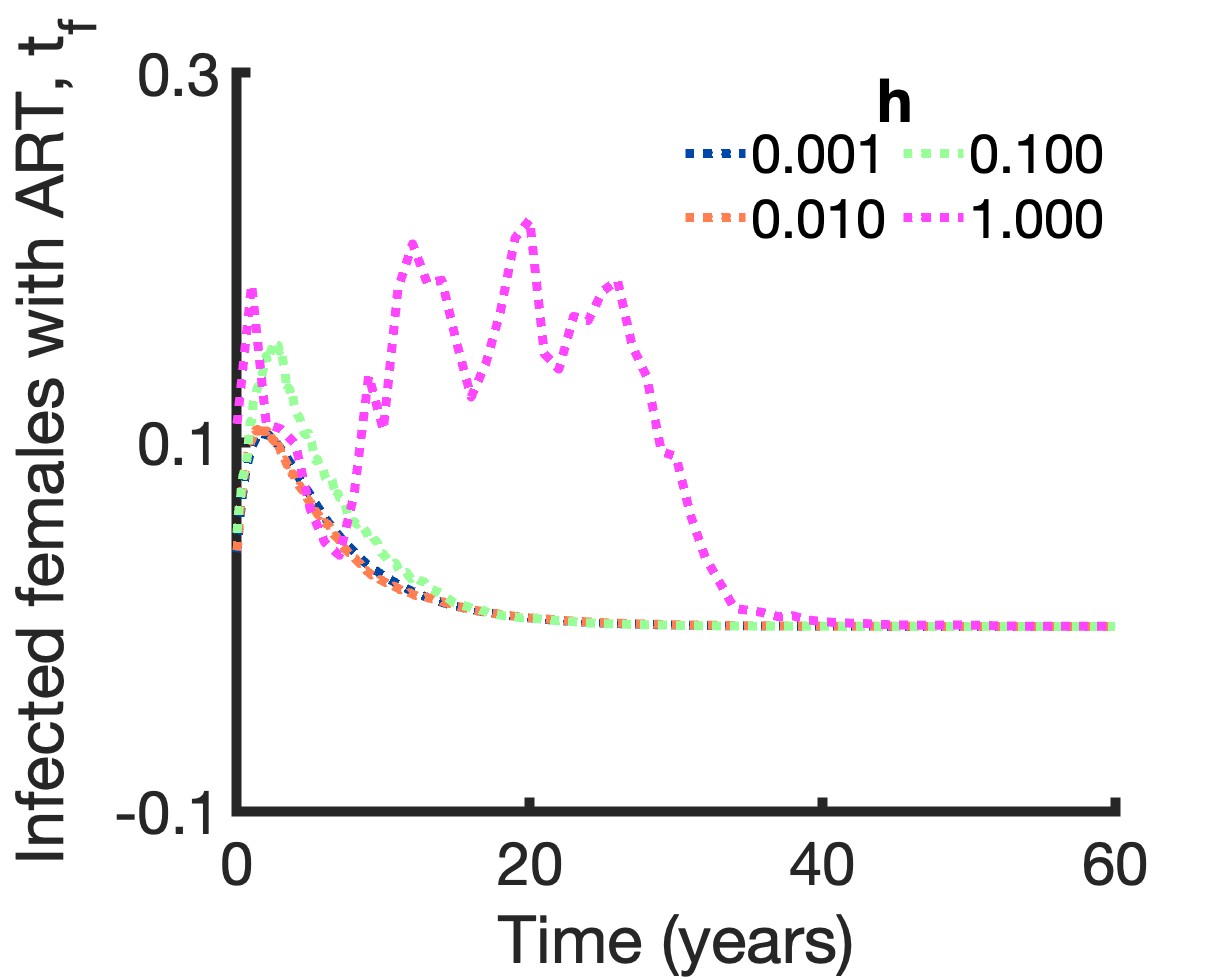}\label{fig:sub4132}}
	 \caption{Disease-Free Equilibrium (DFE) of the stochastic Runge-Kutta model for varying values of step size $h$.}
 \label{fig:DFE_StoRK_h}
\end{figure}

\subsection{Comparing Three Methods}
In comparing the deterministic, stochastic Euler, and stochastic Runge-Kutta methods applied to our model, several key differences emerge in their outcomes and computational characteristics. The deterministic method provides a clear, smooth prediction of population trends ($i_m$, $i_f$, $t_f$) without considering random fluctuations, making it straightforward but potentially oversimplified. The stochastic Euler method introduces randomness, adding realistic variability to the results, which depict how populations might actually behave under uncertain conditions. However, this method can sometimes oversimplify complex dynamics due to its basic numerical approach. The stochastic Runge-Kutta method, on the other hand, offers a more sophisticated numerical technique that captures a wider range of fluctuations and provides a finer resolution of the model’s dynamics (see figures \ref{fig:comparing} and \ref{fig:comparing_DFE}). \\

\noindent Specifically, figure \ref{fig:sub611} shows the number of infected males ($i_m$) over time, demonstrating how each method captures the fluctuations and trends in male infection rates. Figure  \ref{fig:sub612} illustrates the number of infected females ($i_f$) over time, highlighting the differences in infection dynamics across the methods. Figure  \ref{fig:sub613} focuses on the number of infected females receiving antiretroviral therapy (ART) ($t_f$), revealing the impact of ART coverage on the infection trends among females.

\begin{figure}[H]
	\centering
	\subfigure[$i_m$ vs $t$.]{\includegraphics[width=0.4\linewidth]{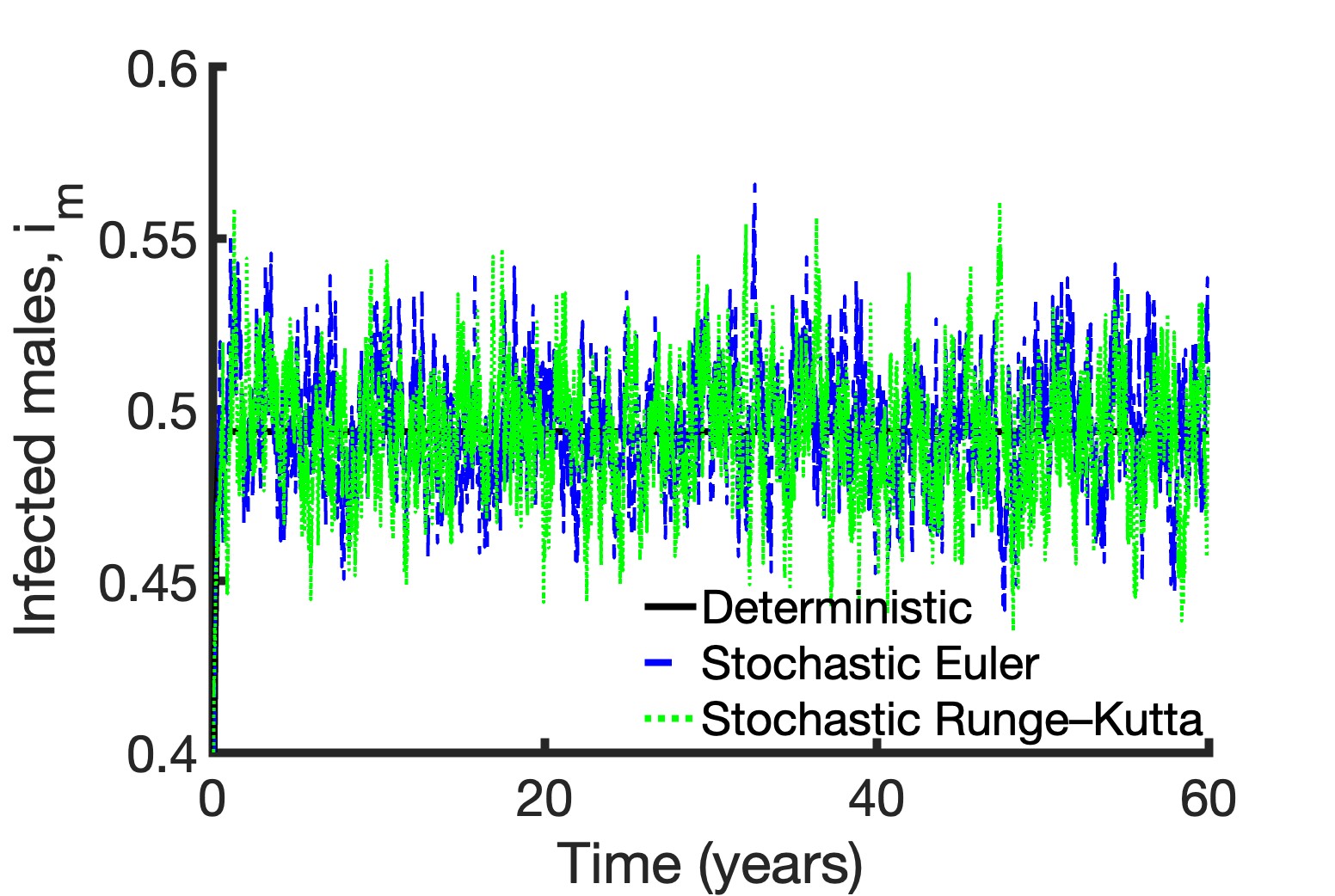}\label{fig:sub611}}
	\subfigure[$i_f$ vs $t$.]{\includegraphics[width=0.4\linewidth]{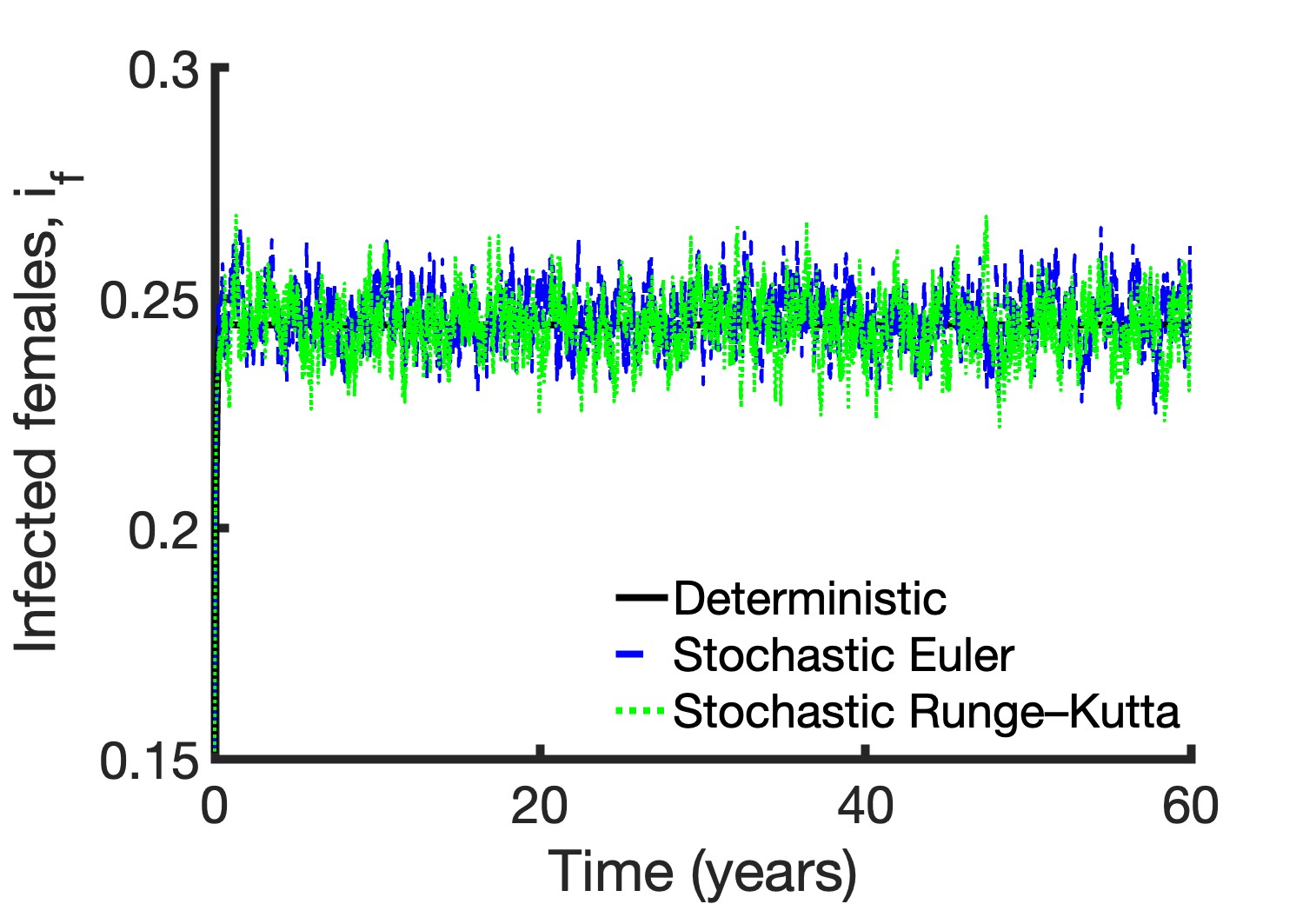}\label{fig:sub612}}
	\subfigure[$t_f$ vs $t$.]{\includegraphics[width=0.4\linewidth]{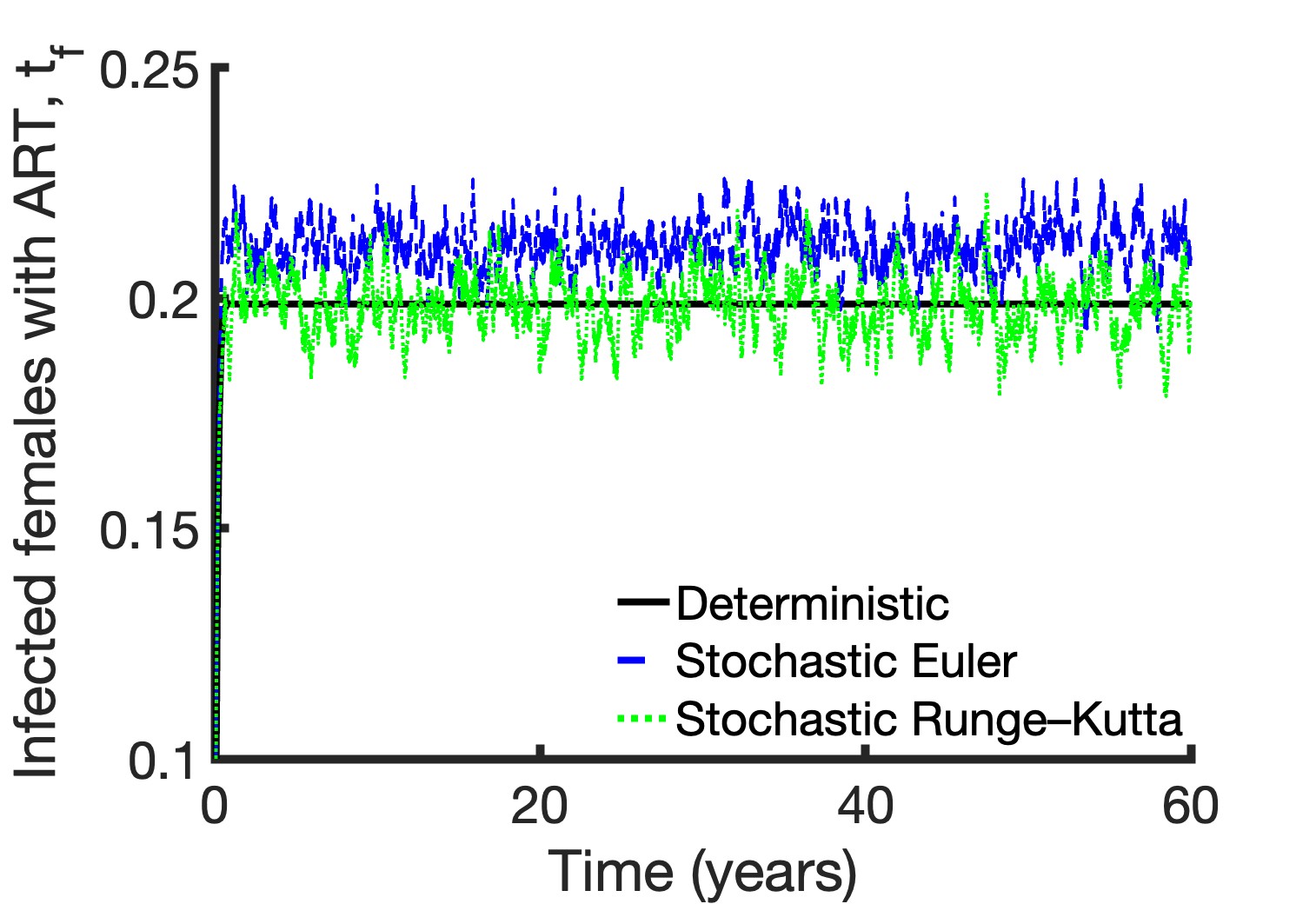}\label{fig:sub613}}
 \caption{Comparing three methods (Endemic Equilibrium).}
 \label{fig:comparing}
\end{figure}

\noindent  In the Disease-Free Equilibrium, figure \ref{fig:sub621} depicts the number of infected males ($i_m$), showing how each method approaches the disease-free state for the male population. Figure \ref{fig:sub622} presents the number of infected females ($i_f$), indicating the effectiveness of each method in reducing female infection rates. Lastly, figure \ref{fig:sub623} illustrates the number of infected females receiving ART ($t_f$), emphasizing the role of ART in achieving a disease-free state across the different modeling approaches.

\begin{figure}[H]
	\centering
	\subfigure[$i_m$ vs $t$.]{\includegraphics[width=0.4\linewidth]{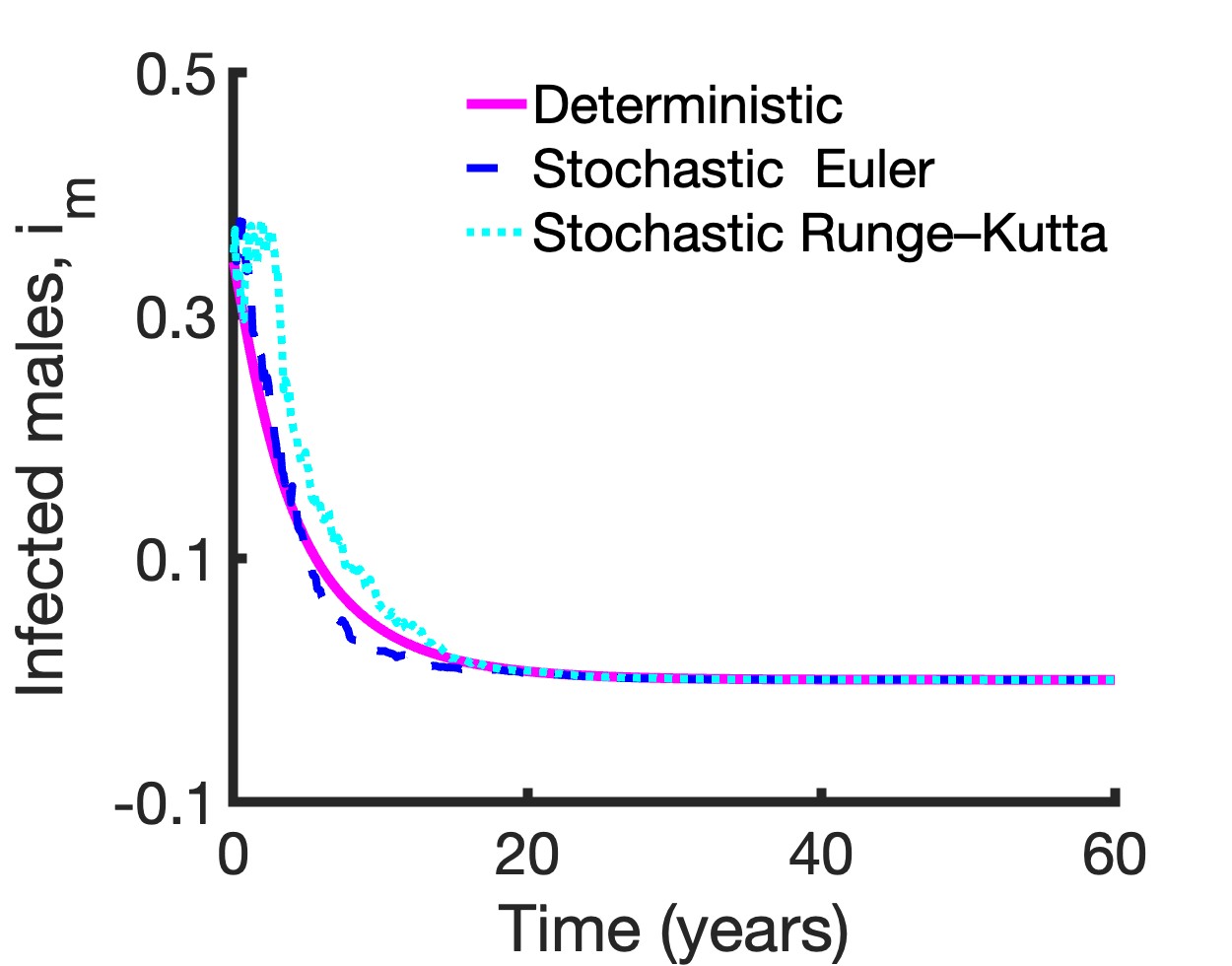}\label{fig:sub621}}
	\subfigure[$i_f$ vs $t$.]{\includegraphics[width=0.4\linewidth]{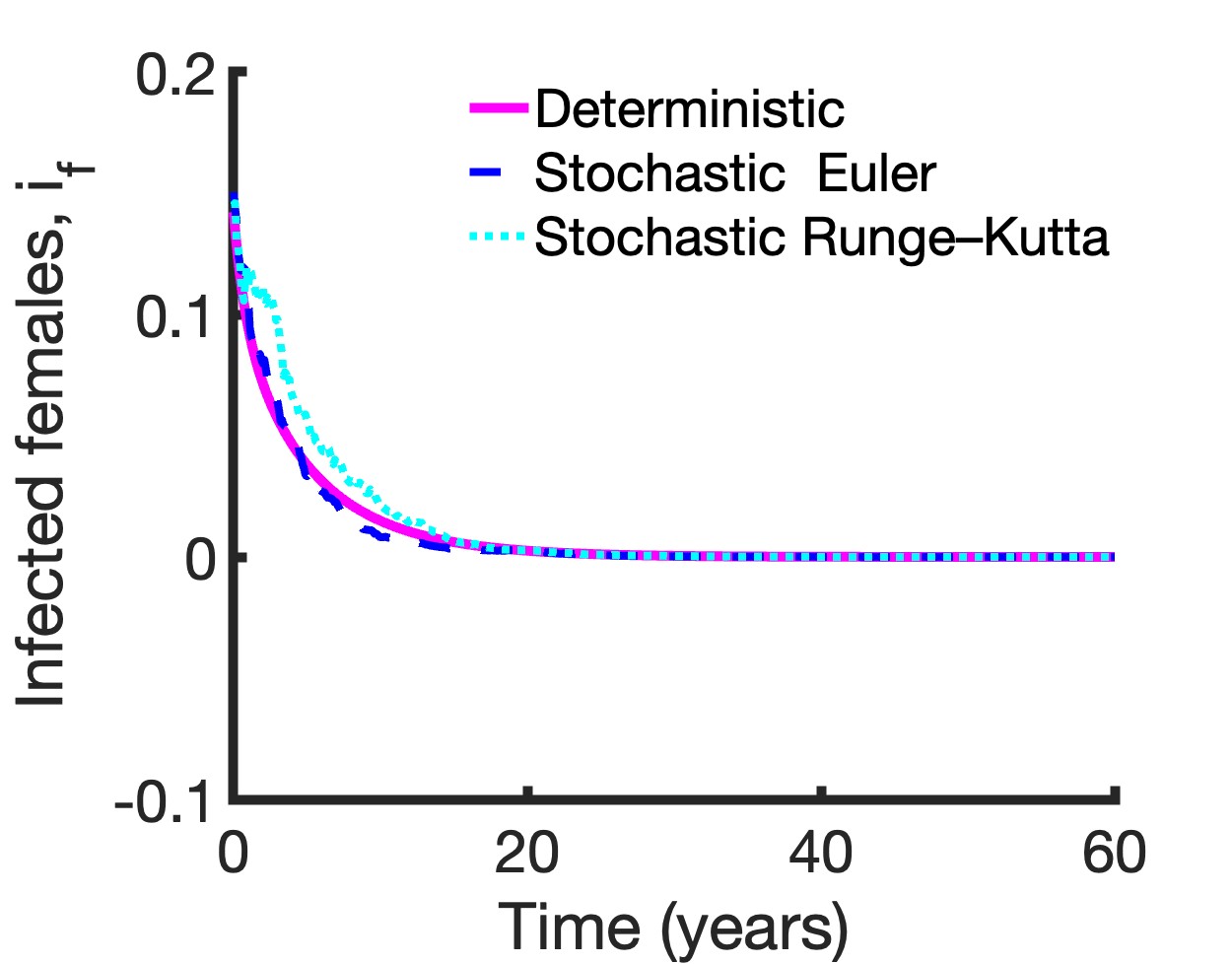}\label{fig:sub622}}
	\subfigure[$t_f$ vs $t$.]{\includegraphics[width=0.4\linewidth]{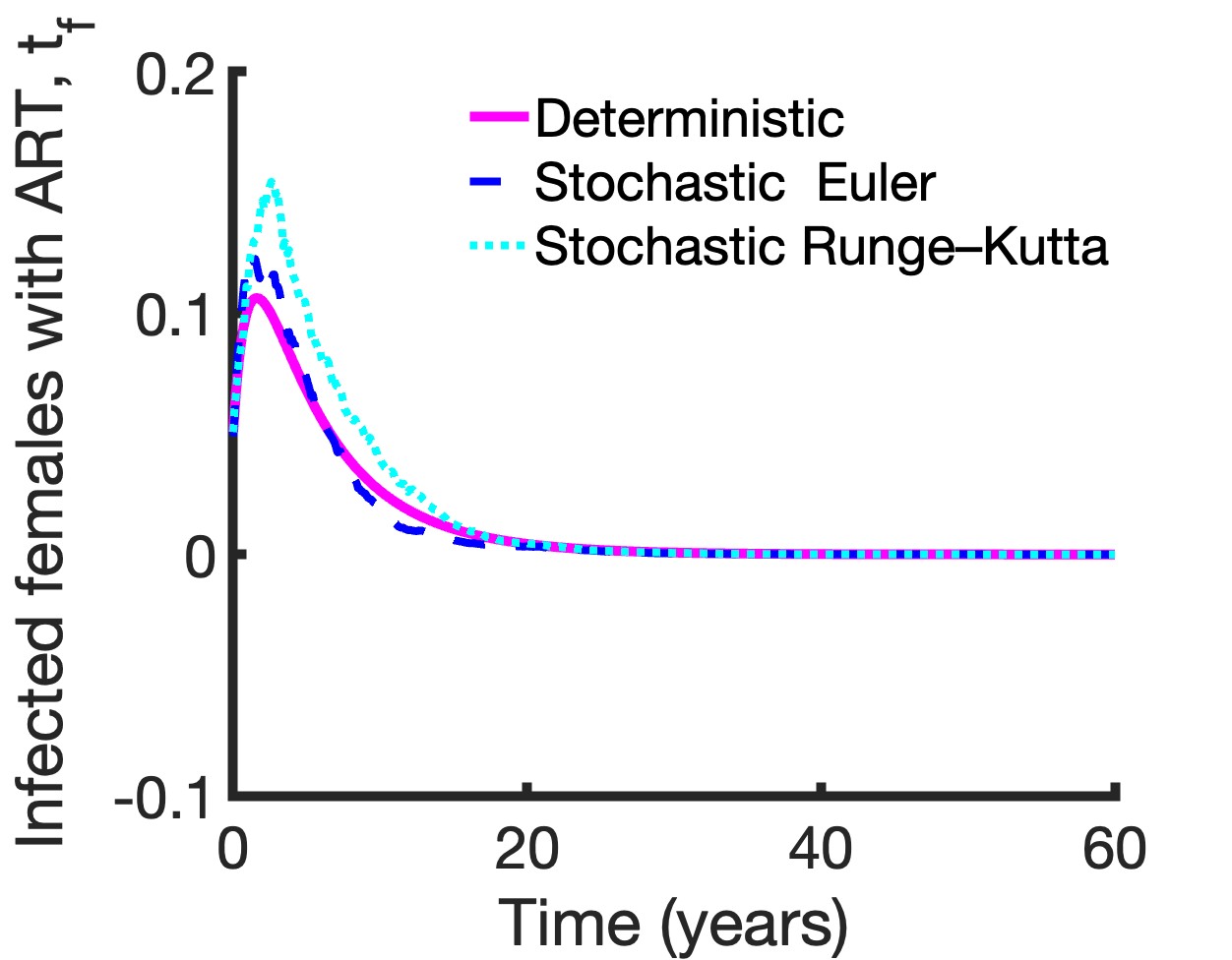}\label{fig:sub623}}
	\caption{Comparing three methods (Disease-Free Equilibrium).}
	\label{fig:comparing_DFE}
\end{figure}

\noindent It is particularly effective at yielding more accurate and detailed outcomes, especially useful in scenarios where precision is crucial. Each method has its advantages, depending on the required precision and the computational resources available.

\subsection{Correlation Coefficients}
\label{sec:cc}
\noindent Correlation coefficients measure the strength and direction of the relationship between two variables \cite{asuero2006correlation}. In this analysis, we explored how the parameter $\delta_f$, which denotes the proportion of infected females receiving antiretroviral treatment (ART), affects the relationships between \textbf{i)} infected males and females ($i_m$ and $i_f$), and \textbf{ii)} infected males and treated females ($i_m$ and $t_f$). For both Endemic and Disease-Free Equilibrium, these relationships are positive, indicating that an increase in $i_m$ corresponds to increases in both $i_f$ and $t_f$. Additionally, in the Endemic Equilibrium scenario, higher correlation coefficients at smaller $\delta_f$ values show that lower values of $\delta_f$ lead to stronger correlations. The pattern is similar in the Disease-Free Equilibrium, with no inverse relationships observed in either case (see figures \ref{fig_a} and \ref{fig_b}).
\subsubsection*{Endemic Equilibrium}
\begin{table}[H]
\centering
\begin{tabular}{cccc}
\toprule
\multicolumn{1}{c}{$\delta_f$} & \multicolumn{2}{c}{Values of Correlation Coefficients} & \multicolumn{1}{c}{Relationship} \\
\cmidrule(lr){2-3}
& \multicolumn{1}{c}{($i_m$, $i_f$)} & \multicolumn{1}{c}{($i_m$, $t_f$)} & \\
\midrule
0.35 & 0.9847 & 0.9817 & Direct \\
0.50 & 0.9737 & 0.9837 & Direct \\
0.70 & 0.9500 & 0.9737 & Direct \\
0.90 & 0.8707 & 0.9221 & Direct \\
\bottomrule
\end{tabular}
\caption{Correlation Coefficients for various values of $\delta_f$.}
\label{delta_f_cc_ee}
\end{table}

\subsubsection*{Disease-Free Equilibrium}
\begin{table}[H]
\centering
\begin{tabular}{cccc}
\toprule
\multicolumn{1}{c}{$\delta_f$} & \multicolumn{2}{c}{Values of Correlation Coefficients} & \multicolumn{1}{c}{Relationship} \\
\cmidrule(lr){2-3}
& \multicolumn{1}{c}{($i_m$, $i_f$)} & \multicolumn{1}{c}{($i_m$, $t_f$)} & \\
\midrule
0.35 & 0.9985 & 0.9651 & Direct \\
0.40 & 0.9990 & 0.9616 & Direct \\
0.70 & 0.9992 & 0.9402 & Direct \\
0.90 & 0.9963 & 0.9319 & Direct \\
\bottomrule
\end{tabular}
\caption{Correlation Coefficients for various values of $\delta_f$.}
\label{delta_f_cc_dfe}
\end{table}

\begin{figure}[H]
	\centering
	\subfigure[Endemic Equilibrium.]{\includegraphics[width=0.45\linewidth]{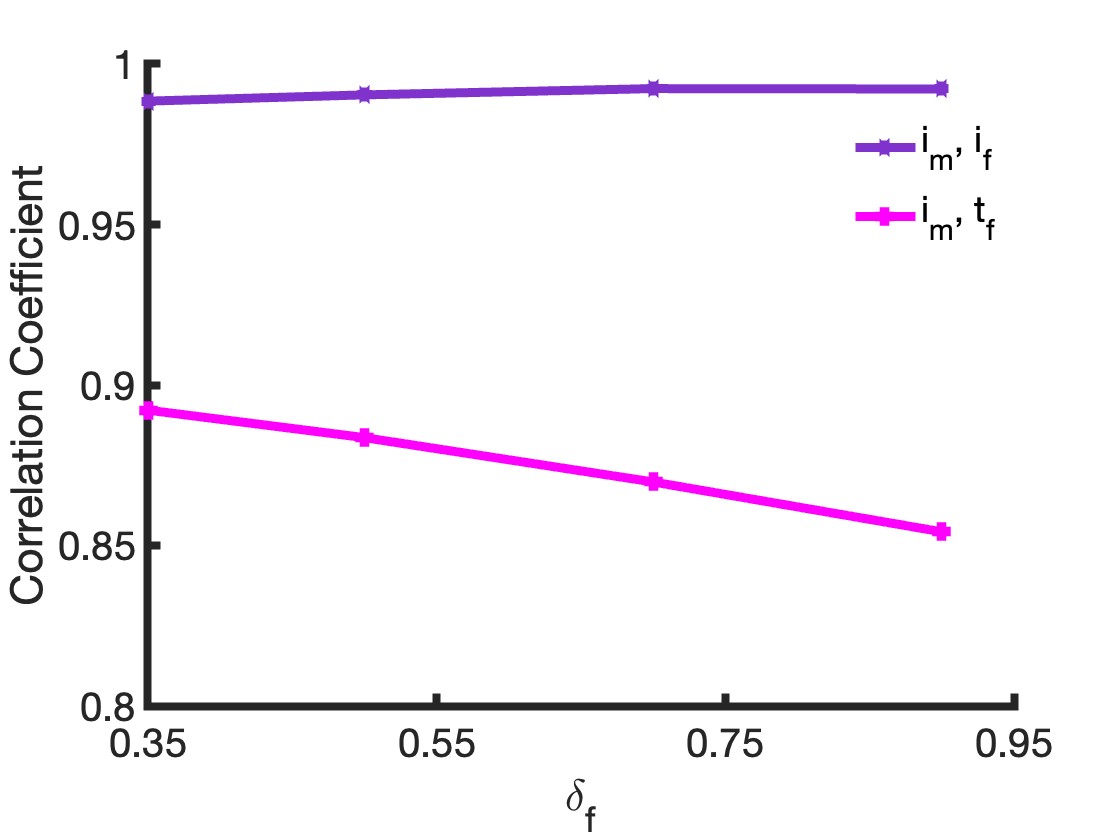}\label{fig_a}}
	\subfigure[Disease-Free Equilibrium.]{\includegraphics[width=0.45\linewidth]{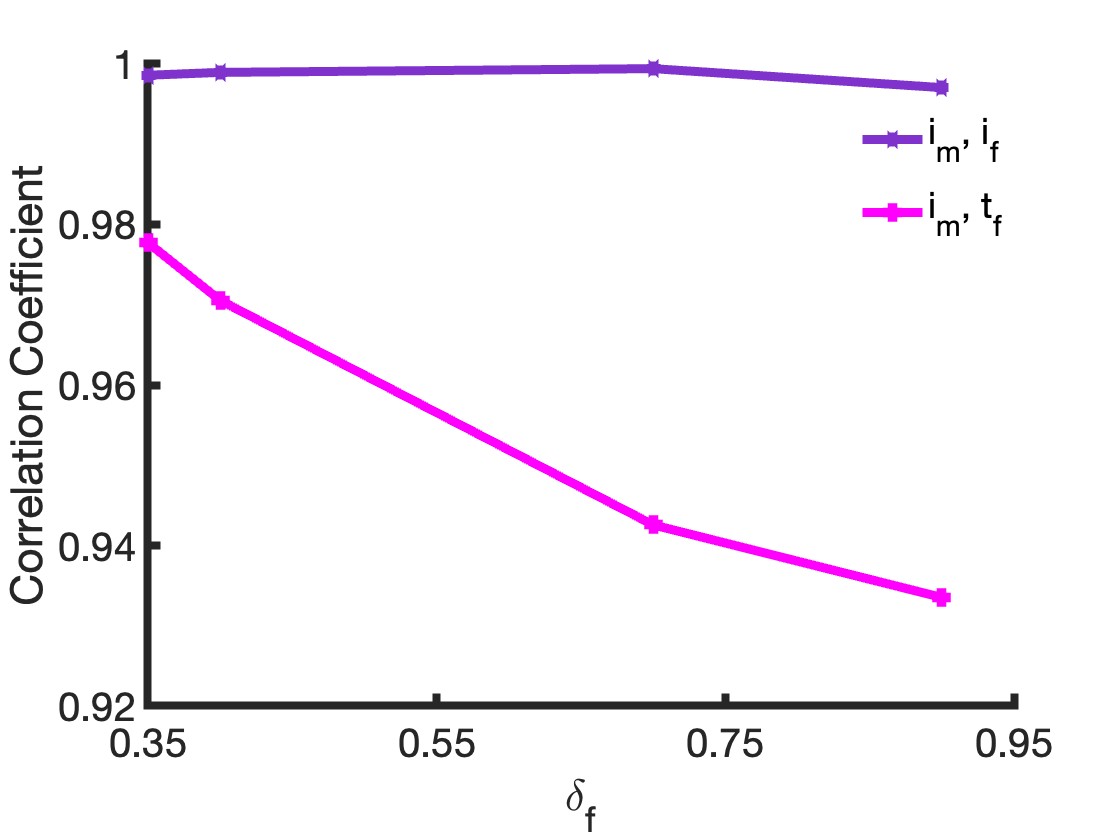}\label{fig_b}}
\caption{Correlation Coefficient vs $\delta_f$.}
\label{fig:both}
\end{figure}

\section{Results and Discussions}
\label{sec:results}
The findings of this study underscore the significant advantages that stochastic models offer over deterministic models in accurately representing the complex dynamics of the HIV epidemic. Stochastic models, particularly those based on the stochastic Runge-Kutta technique, are more effective at capturing the inherent variability and unpredictability present in real-world scenarios. These models account for random fluctuations in transmission and treatment processes, providing a more complex understanding of epidemic dynamics. This study revealed that reductions in transmission probabilities for both men and women led to a significant decline in the number of infected individuals, a consistent pattern across both the deterministic and stochastic models. Crucially, increasing the proportion of infected women receiving antiretroviral therapy (ART) resulted in a marked reduction in the number of infected men and women. This highlights the essential role that ART plays in controlling the spread of the virus and managing the epidemic. One of the key insights from the study is the superior performance of stochastic models, particularly those using the Runge-Kutta method, in capturing minor fluctuations and trends that deterministic models may overlook. Stochastic models provide a finer level of detail, which is vital for understanding the subtle dynamics of the epidemic. This level of granularity allows for a better appreciation of how small perturbations in the system can lead to significant changes in disease transmission, making stochastic models a powerful tool for public health planning. By integrating stochastic elements, these models offer a more realistic portrayal of the epidemic’s behavior, which can enhance the accuracy and effectiveness of public health interventions. The graphical results presented in figures \ref{fig:sub314}, \ref{fig:sub324}, and \ref{fig:sub334} illustrate the impact of increasing ART coverage among women on the overall infected population. These figures clearly demonstrate the benefit of higher ART uptake, as it leads to a substantial reduction in infection rates for both men and women. Furthermore, figures \ref{fig:DFE_StoRK_gamma_f}, \ref{fig:DFE_StoRK_gamma_m}, and \ref{fig:DFE_StoRK_delta_f} highlight the critical influence of transmission probabilities on infection rates, underscoring the importance of targeted interventions aimed at reducing these probabilities. These visual representations reinforce the study’s conclusion that ART, coupled with well-targeted interventions, is a powerful tool in epidemic management.

A thorough comparative analysis of the deterministic and stochastic models revealed that stochastic models, particularly those using the stochastic Runge-Kutta method, offer a higher resolution of dynamic changes within the epidemic. This higher resolution improves forecast accuracy and deepens the understanding of minor perturbations that can influence the course of the epidemic. By incorporating random variability, stochastic models better simulate the unpredictability of real-world disease transmission and treatment efficacy, providing a more accurate framework for developing public health strategies. The results of this study emphasize the critical role of stochastic models in designing effective public health interventions. By more accurately representing the uncertainties inherent in disease transmission, these models can help policymakers develop more responsive strategies to control the spread of HIV. This is particularly important in resource-limited settings, where efficient resource allocation can have a significant impact on the success of epidemic control efforts. Additionally, the study highlights the importance of ART in reducing infection rates and suggests that expanding ART coverage should be a key component of global health strategies aimed at reducing HIV prevalence and improving the quality of life for people living with AIDS. The findings of this study align with global health objectives to reduce HIV prevalence and enhance patient outcomes. By demonstrating the advantages of stochastic modeling, this research provides a comprehensive framework for managing the HIV epidemic more effectively. Future research should focus on further refining computational techniques for stochastic models and exploring their application to other infectious diseases. Such advancements could contribute to even more effective public health measures, allowing for better management of epidemics and improving the overall quality of health care in resource-constrained environments. Additionally, future work could explore how these models can be adapted to account for the evolving nature of viral epidemics, including the emergence of drug-resistant strains and the changing patterns of transmission due to behavioral, social, and economic factors \cite{khan2023vaccine}.

\section*{Acknowledgments}
The author M. Kamrujjaman acknowledged to the University Grants Commission (UGC), 	and  the  University of Dhaka, Bangladesh.

\section*{Conflict of interest}
The authors declare no conflict of interest.


\section*{Data sharing}
No human data used in this study. 

\section*{Ethical approval}
No consent is required to publish this manuscript.

\section*{CRediT authorship contribution statement}
Nuzhat Nuari Khan: Conceptualization, Data curation, Formal analysis, Methodology, Software, Validation, Writing – original
draft.\\
Md. Kamrujjaman: Conceptualization, Formal analysis, Funding
acquisition, Software, Supervision,  Writing – original draft, Writing – review \& editing.\\
Shohel Ahmed: Validation, Resources, Investigation,  Software, Writing – review \& editing.

\section*{Ethics Statement}
None. 

\clearpage
\appendix
\section{Appendix}\label{appen}
\subsection*{Stochastic Euler Method}
The equations for the stochastic Euler method are as follows \cite{raza2019reliable,arif2019reliable}:
\begin{align}
	i_{m}^{n+1}(t) &= i_{m}^{n}(t) + h [c_{m}\gamma_{f}i_{f}^{n}(t)(1 - i_{m}^{n}(t)) - b i_{m}^{n}(t) + \sigma_{0}i_{m}^{2n}(t) +\delta_{1}\Delta B_{n}i_{m}^{n}(t)] \tag{15} \\
	i_{f}^{n+1}(t) &= i_{f}^{n}(t) + h [c_{f}\gamma_{m}i_{m}^{n}(t)(1 - i_{f}^{n}(t) - t_{f}^{n}(t)) - (b + \delta_{f})i_{f}^{n}(t) + \sigma i_{f}^{n}(t)i_{f}^{n}(t) \notag & \\
	&\quad +\sigma_0 i_{f}^{2n}(t) + \sigma_{2}\Delta B_{n}i_{f}^{n}(t)] \tag{16} \\
	t_{f}^{n+1}(t) &= h_{f3}^{n}(t) + h [\sigma_{f}i_{f}^{n}(t) - (b + \sigma)t_{f}^{n}(t) + \sigma_{0}i_{f}^{n}(t)t_{f}^{n}(t) + \sigma t_{f}^{2n}(t) + \delta_{3}\Delta B_{n}t_{f}^{n}(t)] \tag{17}
\end{align}
where $h$ stands for the time step size and \(\Delta B_{n}\) is usually distributed between the drift and diffusion coefficients, that is, \(\Delta B_{n} \sim N(0, 1)\).

\subsection*{Stochastic Runge-Kutta Method}
The equations for the stochastic Runge-Kutta method are as follows \cite{raza2019reliable,arif2019reliable}:\\

\noindent \textit{Stage 01:}
\vspace{-1mm}
\begin{align*}
	P_1 &= h\left[c_{m}\gamma_{f}i_{f}^{n}(t)(1 - i_{m}^{n}(t)) - b i_{m}^{n}(t) 
	+ \sigma_{0}i_{m}^{2n}(t) + \delta_{1}\Delta B_{n}i_{m}^{n}(t)\right] \\
	Q_1 &= h\left[c_{f}\gamma_{m}i_{m}^{n}(t)(1 - i_{f}^{n}(t) - t_{f}^{n}(t)) - (b + \delta_{f})i_{f}^{n}(t) + \sigma i_{f}^{n}(t) t_{f}^{n}(t)+ \sigma_0 i_{f}^{2n}(t) + \delta_{2}\Delta B_{n}i_{f}^{n}(t)\right] \\
	R_1 &= h\left[\delta_{f}i_{f}^{n}(t) - (b + \sigma)t_{f}^{n}(t) + \sigma_{0}i_{f}^{n}(t)t_{f}^{n}(t) + \sigma t_{f}^{2n}(t)+ \delta_{3}\Delta B_{n}t_{f}^{n}(t)\right]
\end{align*}
\textit{Stage 02:}
\vspace{-1mm}
\begin{flalign*}
	&P_2 = h \bigg[c_{m}\gamma_{f}\left(i_{f}^{n}(t) + \frac{Q_1}{2}\right)\left(1 - \left(i_{m}^{n}(t) + \frac{P_1}{2}\right)\right) - b\left(i_{m}^{n}(t) + \frac{P_1}{2}\right)+ \sigma_{0}\left(i_{m}^{n}(t) + \frac{P_1}{2}\right)^{2} &\\
	&\quad + \delta_{1}\Delta B_{n}\left(i_{m}^{n}(t) + \frac{P_1}{2}\right)\bigg] &\\
	&Q_2 = h \bigg[c_{f}\gamma_{m}\left(i_{m}^{n}(t) + \frac{P_1}{2}\right)\left(1 - \left(i_{f}^{n}(t) + \frac{Q_1}{2}\right) - \left(t_{f}^{n}(t) + \frac{R_1}{2}\right)\right)- (b + \delta_{f})\left(i_{f}^{n}(t) + \frac{Q_1}{2}\right) &\\
	&\quad + \sigma \left(i_{f}^{n}(t) + \frac{Q_1}{2} \right) \left(t_{f}^{n}(t) + \frac{R_1}{2} \right)   + \sigma_{0}\left(i_{f}^{n}(t) + \frac{Q_1}{2}\right)^{2} + \delta_{2}\Delta B_{n}\left(i_{f}^{n}(t) + \frac{Q_1}{2}\right)\bigg] &\\
	&R_2 = h \bigg[\delta_{f}\left(i_{f}^{n}(t) + \frac{R_1}{2}\right) - (b + \sigma)\left(t_{f}^{n}(t) + \frac{R_1}{2}\right) + \sigma_{0}\left(i_{f}^{n}(t) + \frac{R_1}{2}\right)\left(t_{f}^{n}(t) + \frac{R_1}{2}\right) &\\
	&\quad +\sigma \left( t_{f}^{n}(t)+ \frac{R_1}{2}\right)^{2} +\delta_{3}\Delta B_{n}\left(t_{f}^{n}(t) + \frac{R_1}{2}\right)\bigg] &
\end{flalign*}
\textit{Stage 03:}
\vspace{-1mm}
\begin{flalign*}
	P_3 &= h \bigg[c_m\gamma_f \left(i_{f}^{n}(t) + \frac{Q_2}{2}\right)\left(1 - i_{m}(t) + \frac{P_2}{2}\right) - b\left(i_{m}^{n}(t) + \frac{P_2}{2}\right)+ \sigma_0\left(i_{m}^{n}(t) + \frac{P_2}{2}\right)^2 &\\
	&\quad + \delta_1\Delta B_n \left(i_{m}^{n}(t) + \frac{P_2}{2}\right)\bigg] &\\
	Q_3 &= h \bigg[c_f\gamma_m \left(i_{m}^{n}(t) + \frac{P_2}{2}\right)\times \left(1 - \left(i_{f}^{n}(t) + \frac{Q_2}{2}\right) - \left(t_{f}^{n}(t) + \frac{R_2}{2}\right)\right) - (b + \delta_f)\left(i_{f}^{n}(t) + \frac{Q_2}{2}\right)&\\
	&\quad + \sigma\left(i_{f}^{n}(t)+ \frac{Q_2}{2}\right)   \left(t_{f}^{n}(t) + \frac{R_2}{2}\right) + \sigma_0\left(i_{f}^{n}(t) + \frac{Q_2}{2}\right)^2+ \delta_2\Delta B_n \left(i_{f}^{n}(t) + \frac{Q_2}{2}\right) &\\
	R_3 &= h \bigg[\delta_f\left(i_{f}^{n}(t) + \frac{Q_2}{2}\right) - (b + \sigma)\left(t_{f}^{n}(t) + \frac{R_2}{2}\right) + \sigma_0\left(i_{f}^{n}(t) + \frac{Q_2}{2}\right)\left(t_{f}^{n}(t) + \frac{R_2}{2}\right)&\\
	&\quad + \sigma\left(t_{f}^{n}(t) + \frac{R_2}{2}\right)^2 + \delta_3\Delta B_n \left(t_{f}^{n}(t) + \frac{R_2}{2}\right)\bigg]  &
\end{flalign*}
\textit{Stage 04:}
\vspace{0.5mm}
\begin{flalign*}
	&P_4 = h [c_m \gamma_f (i_{f}^{n}(t) + Q_3)(1 - (i_{m}^{n}(t) + P_3)) - b(i_{m}^{n}(t) + P_3)]+ \sigma_0(i_{m}^{n}(t) + P_3)^2 + \delta_1\Delta B_n(i_{m}^{n}(t)  &\\
	&\quad + P_3] &\\
	&Q_4 = h [c_f \gamma_m (i_{m}^{n}(t) + P_3)(1 - (i_{f}^{n}(t) + Q_3) - (t_{f}^{n}(t) + R_3))- (b + \delta_f)(i_{f}^{n}(t) + Q_3) &\\
	&\quad + \sigma(i_{f}^{n}(t)) + Q_3)(t_{f}^{n}(t) + R_3) + \sigma_0 (i_{f}^{n}(t)) + Q_3)^2 + \delta_2\Delta B_n(t_{f}^{n}(t) + Q_3)] &\\
	&R_4 = h [\delta_{f}i_{f}^{n}(t) + Q_3) - (b + \sigma)(t_{f}^{n}(t) + R_3)+ \sigma_0 i_{f}^{n}(t) + Q_3)(t_{f}^{n}(t) + R_3)+ \sigma(t_{f}^{n}(t) + R_3)^2 &\\
	&\quad + \delta_3\Delta B_n (t_{f}^{n}(t) + R_3)] &
\end{flalign*}
\textit{Final Stage}
\vspace{0.5mm}
\begin{equation*}
	\begin{aligned}
		i^{n+1}_{m}(t) &= i^n_{m}(t) + \frac{1}{6}[P_1 + 2P_2 + 2P_3 + P_4] \\
		i^{n+1}_{f}(t) &= i^n_{f}(t) + \frac{1}{6}[Q_1 + 2Q_2 + 2Q_3 + Q_4] \\
		t^{n+1}_{f}(t) &= t^n_{f}(t) + \frac{1}{6}[R_1 + 2R_2 + 2R_3 + R_4] \\
	\end{aligned}
\end{equation*}
where the time step size is denoted by h and $\Delta B_n \sim N(0, 1)$.

\subsection*{Stochastic Non-Standard Finite Difference Method}
The equations for the stochastic Euler method are as follows \cite{raza2019reliable,arif2019reliable}:
\begin{align*}
	\displaystyle
	i^{n+1}_{m}(t) &= \frac{i^n_{m}(t) + \phi(h) \left[ c_{m} \gamma_f i^n_{f}(t) + \sigma_0 i^n_{m}(t)i^n_{m}(t) + \delta_{1} i^n_{m}(t) \Delta B_n \right]}{1 + \phi(h) \left[ 1 + c_{m}\gamma_f i^n_{f}(t) + b \right]} \\
	i^{n+1}_{f}(t) &= \frac{
		i^n_{f}(t) + \phi(h) \left[c_{f} \gamma_m i^{n+1}_{m} \left(1 - i^{n+1}_{f}\right) + \sigma i^n_{f} t^n_{f} + \sigma_0 i^n_{f} i^n_{f}\right] + \phi(h) \delta_2 i^n_{f}(t) \Delta B_n
	}{
		(1 + \phi(h) c_{f} \gamma_m i^{n+1}_{m} + \phi(h) (b + \delta_f))
	} \\
	{t^{n+1}_{f}(t)} &= \frac{
		t^n_{f}(t) + \phi(h) \left[
		\delta_{f} i^{n+1}_{f}(t) + \sigma_0 i^{n+1}_{f} t^{n}_{f}(t) + \sigma t^{n}_{f}(t) t^{n}_{f}(t) + \delta_3 t^{n}_{f}(t) \Delta B_n
		\right]
	}{
		\left[ 1 + \phi(h)(b + \sigma) \right]
	}
\end{align*}
where, $\phi(h)= 1- e^{-h}$ and $h$ is any time step size.\\

\noindent\textbf{Stability of Stochastic Non-Standard Finite Difference (SNSFD) Method:}\\
We consider \( K \), \( M \), and \( N \) as follows:
\begin{align*}
	K &= i_{m}(t) + \frac{h\left[c_{m}\gamma_f i_{f}(t) + \sigma_0 i^2_{m}(t)\right] + \delta_1 i_{m}(t)\Delta B_n}{\left[1 + hc_{m}\gamma_f i_{f}(t) + hb\right]} \\
	M &= i_{f}(t) + \frac{h\left[c_{f}\gamma_m i_{m}(t)(1 - t_{f}(t)) + \sigma i_{f}(t)t_{f}(t) + \sigma_0 i^2_{f}(t) + \delta_1 i_{f}(t)\Delta B_n\right]}{\left[1 + hc_{f}\gamma_m i_{m}(t) + h(b + \delta_f)\right]} \\
	N &= t_{f}(t) + \frac{h\left[\delta_{f}i_{f}(t) + \sigma_0 i_{f}(t)t_{f}(t) + \sigma t^2_{f}(t) + \delta_3 t_{f}(t)\Delta B_n\right]}{\left[1 + h(b + \sigma)\right]}
\end{align*}

\noindent The Jacobian matrix is:
\[
\renewcommand{\arraystretch}{1.3} 
J =
\left[
\begin{array}{ccc}
	\frac{\partial K}{\partial i_m} & \frac{\partial K}{\partial i_f} & \frac{\partial K}{\partial t_f} \\
	\frac{\partial M}{\partial i_m} & \frac{\partial M}{\partial i_f} & \frac{\partial M}{\partial t_f} \\
	\frac{\partial N}{\partial i_m} & \frac{\partial N}{\partial i_f} & \frac{\partial N}{\partial t_f}
\end{array}
\right]
\]
In order to achieve disease-free equilibrium, the Jacobian matrix can be obtained as follows:
\[(i_m(t),i_f(t),t_f(t))=(0,0,0).\]
The Eigenvalue ($\lambda$) of the Jacobian matrix is as follows:
\[
\lambda = \frac{1 + h\delta_3\Delta B_n}{1 + h(b + \sigma)} < 1.
\]
Minor noise disturbances, denoted by \(\delta_1, \delta_2,\) and \(\delta_3\), introduce stochasticity into the HIV/AIDS model. These disturbances are linked to Brownian motions \(B_k(t)\) (where, \(k = 1, 2, 3\)) across various compartments. Consequently, each stochastic term \(\delta_i\) (for \(i = 1, 2, 3\)) is less than \(b\) and \(\sigma\), where \(b\) represents the natural birth rate and \(\sigma\) signifies the human death rate, as detailed in \cite{raza2019reliable}.

\begin{lemma}
	For the quadratic equation $\lambda^2 - C\lambda + D = 0$, $|\lambda_i| < 1, i = 1, 2;$ if and only if succeeding conditions are satisfied \cite{chowell2009basic}:
	\begin{enumerate}
		\item[(i)] $1 - C + D > 0$
		\item[(ii)] $1 + C + D > 0$
		\item[(iii)]$D < 1$.
	\end{enumerate}
	where,\\ $C = \text{Jacobian matrix's Trace}$. \\
	$D = \text{Jacobian matrix's Determinant}.$   
\end{lemma}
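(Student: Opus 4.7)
The plan is to set $p(\lambda)=\lambda^{2}-C\lambda+D$, view $C$ and $D$ as the elementary symmetric functions of the (possibly complex) roots $\lambda_{1},\lambda_{2}$, and exploit the factorization identities
\[
p(1) = (1-\lambda_{1})(1-\lambda_{2}) = 1-C+D, \qquad p(-1) = (1+\lambda_{1})(1+\lambda_{2}) = 1+C+D,
\]
together with $D=\lambda_{1}\lambda_{2}$. These identities translate conditions (i)--(iii) directly into statements about where the roots sit relative to $\pm 1$, so the proof reduces to careful sign bookkeeping supplemented by a split on the discriminant.

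For necessity, I would assume $|\lambda_{1}|<1$ and $|\lambda_{2}|<1$ and split into two cases. If both roots are real, each factor $1\pm\lambda_{i}$ is strictly positive, so $p(\pm 1)>0$, and $|D|=|\lambda_{1}||\lambda_{2}|<1$ delivers (iii). If the roots form a complex-conjugate pair $\lambda_{2}=\overline{\lambda_{1}}$, then $p(\pm 1)=|1\mp\lambda_{1}|^{2}>0$ and $D=|\lambda_{1}|^{2}\in[0,1)$, so all three conditions hold.

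For sufficiency, I would assume (i)--(iii) and split on the discriminant $\Delta=C^{2}-4D$. When $\Delta<0$ the roots are complex conjugates, $4D>C^{2}\geq 0$ forces $D>0$, and (iii) then gives $|\lambda_{1}|^{2}=D<1$. When $\Delta\geq 0$ the roots are real, and I would rule out every way a root could escape $(-1,1)$: a root equal to $\pm 1$ would make $p(\pm 1)=0$, contradicting (i) or (ii); a single root of modulus greater than $1$ would flip the sign of exactly one linear factor and hence drive $p(1)$ or $p(-1)$ negative; and both roots of modulus greater than $1$ would force $|D|>1$, with $D>1$ directly violating (iii), and $D<-1$ (the case of one root $>1$ and the other $<-1$) excluded because summing (i) and (ii) yields $2(1+D)>0$, that is, $D>-1$.

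The main obstacle I foresee is precisely this last point: conditions (i) and (ii) secretly imply the lower bound $D>-1$, so although (iii) is only written as $D<1$, combining it with the sum of (i) and (ii) is what actually delivers the full $|D|<1$ needed to kill the split-real-roots configuration. Once this observation is made explicit, the remainder of the argument is routine case work on the sign of $\Delta$ and on the signs of the factored evaluations $p(\pm 1)$, with no further nontrivial step.
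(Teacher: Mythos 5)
Your proof is correct and complete: the factorizations $p(1)=(1-\lambda_1)(1-\lambda_2)=1-C+D$ and $p(-1)=(1+\lambda_1)(1+\lambda_2)=1+C+D$ together with $D=\lambda_1\lambda_2$, the split on the discriminant, and the observation that summing (i) and (ii) forces $D>-1$ give a clean, self-contained proof of the Jury/Schur--Cohn criterion for quadratics, and your case analysis in the sufficiency direction exhausts every way a real root could escape $(-1,1)$. This is, however, a genuinely different undertaking from what the paper places under the heading ``proof.'' The paper treats the lemma itself as a known result imported from the cited reference and never establishes the stated equivalence; what its proof environment actually contains is an attempted \emph{verification} that conditions (i)--(iii) hold for the specific characteristic polynomial of the Jacobian of the SNSFD scheme at the disease-free equilibrium, with $C$ and $D$ replaced by explicit expressions in $h$, $b$, $\delta_f$, $c_mc_f\gamma_m\gamma_f$ and the noise terms $\delta_i\Delta B_n$. (That verification is itself logically fragile: each part ends by reducing the target inequality to ``$h^2>0$'' or ``$h>0$ holds,'' which runs the implication in the wrong direction and tacitly assumes the sign of the bracketed coefficients.) In short, your argument supplies the general fact that the paper only cites, while the paper's argument addresses the model-specific application of that fact; the two are complementary rather than competing proofs, and yours is the one that actually proves the lemma as stated.
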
 
\begin{proof}
	\textbf{(i)}
	\[
	1 - C + D > 0
	\]
	\begin{flalign*}
		& (1 + hb)(1 + h(b + \delta_f)) - 2 - h\delta_1\Delta B_n - h\delta_2\Delta B_n - h^2 b \delta_1\Delta B_n - h^2\delta_f \delta_f\Delta B_n -h^2 b \delta_2\Delta B_n- 2hb  &\\
		& - h\delta_f + 1 + h\delta_1\Delta B_n + h\delta_2\Delta B_n + h^2\delta_1\delta_2\Delta B_n\Delta B_n - h^2 c_m c_f \gamma_m \gamma_f > 0 &\\
		&\Rightarrow h^2 \left[b^2 - b(\delta_1\Delta B_n + \delta_2\Delta B_n - \delta_f) - \delta_1\delta_f\Delta B_n + \delta_1\delta_2\Delta B_n\Delta B_n - c_m c_f \gamma_m \gamma_f\right] > 0 &\\
		&\Rightarrow h^2 > 0
	\end{flalign*}
	For any time step size $h$, the condition $h > 0$ holds.\\
	
	\textbf{(ii)}
	\[
	1 + C + D > 0
	\]
	Since \(1 > 0\) and \(C > 0\) so it is enough to show \(D > 0\)
	\begin{flalign*}
		&\Rightarrow \frac{1 + h\delta_1\Delta B_n + h\delta_2\Delta B_n + h^2\delta_1\delta_2\Delta B_n\Delta B_n- h^2 c_m c_f \gamma_m \gamma_f}{(1 + hb)(1 + h(b + \delta_f))} > 0 &\\
		&\Rightarrow 1+h\delta_1\Delta B_n+h\delta_2\Delta B_n+h^2\delta_1\delta_2\Delta B_n\Delta B_n-h^2 c_m c_f \gamma_m \gamma_f>0&\\
		&\Rightarrow h^2 \left[c_m c_f \gamma_m \gamma_f - \delta_1\delta_2\Delta B_n\Delta B_n\right] < 1 + h\left[\delta_1\Delta B_n + \delta_2\Delta B_n\right] &\\
		&\Rightarrow h^2 \left[c_m c_f \gamma_m \gamma_f - \delta_1\delta_2\Delta B_n\Delta B_n\right] - 1 - h\left[\delta_1\Delta B_n + \delta_2\Delta B_n\right]<0 &\\
		&\Rightarrow h^2 - \frac{h\left[\delta_1\Delta B_n + \delta_2\Delta B_n\right]} {c_m c_f \gamma_m \gamma_f - \delta_1\delta_2\Delta B_n\Delta B_n} - \frac{1}{c_m c_f \gamma_m \gamma_f - \delta_1 \delta_2 \Delta B_n \Delta B_n} < 0 &\\
		&\Rightarrow h^2 - 2h\left[\frac{\left(\delta_1\Delta B_n + \delta_2\Delta B_n\right)}{2(c_m c_f \gamma_m \gamma_f - \delta_1\delta_2\Delta B_n\Delta B_n)}\right] + \left[\frac{\left(\delta_1\Delta B_n + \delta_2\Delta B_n\right)}{2(c_m c_f \gamma_m \gamma_f - \delta_1\delta_2\Delta B_n\Delta B_n)}\right]^2  &\\
		&\qquad > \left[\frac{\left(\delta_1\Delta B_n + \delta_2\Delta B_n\right)}{2(c_m c_f \gamma_m \gamma_f - \delta_1\delta_2\Delta B_n\Delta B_n)}\right]^2 + \frac{1}{c_m c_f \gamma_m \gamma_f - \delta_1 \delta_2 \Delta B_n \Delta B_n} &\\
		&\Rightarrow \left[\frac{\left(\delta_1\Delta B_n + \delta_2\Delta B_n\right)}{2(c_m c_f \gamma_m \gamma_f - \delta_1\delta_2\Delta B_n\Delta B_n)} - h\right]^2 >  \left[\frac{\left(\delta_1\Delta B_n + \delta_2\Delta B_n\right)}{2(c_m c_f \gamma_m \gamma_f - \delta_1\delta_2\Delta B_n\Delta B_n)}\right]^2 &\\
		&\qquad + \frac{1}{c_m c_f \gamma_m \gamma_f - \delta_1\delta_2\Delta B_n\Delta B_n}  &
	\end{flalign*}
	For any time step size $h$, the condition $h > 0$ holds.\\
	
	\textbf{(iii)}
	\[
	D < 1
	\]
	\begin{flalign*}
		&\Rightarrow \frac{1 + h\delta_1\Delta B_n + h\delta_2\Delta B_n + h^2\delta_1\delta_2\Delta B_n\Delta B_n - h^2 c_m c_f \gamma_m \gamma_f} {(1 + hb)(1 + h(b + \delta_f))}< 1 &\\
		&\Rightarrow 1 + h\delta_1\Delta B_n + h\delta_2\Delta B_n + h^2\delta_1\delta_2\Delta B_n\Delta B_n - h^2 c_m c_f \gamma_m \gamma_f < (1 + hb)(1 + h(b + \delta_f))&\\
		&\Rightarrow h^2(\delta_1\delta_2\Delta B_n\Delta B_n + c_m c_f \gamma_m \gamma_f - b^2 - \delta_f b_0)+ h(\delta_1\Delta B_n + \delta_2\Delta B_n - 2b - \delta_f) > 0 &\\
		&\Rightarrow h^2 + h\left(\frac{\delta_1\Delta B_n + \delta_2\Delta B_n - 2b - \delta_f}{\delta_1\delta_2\Delta B_n\Delta B_n + c_m c_f \gamma_m \gamma_f - b^2 - \delta_f b}\right) > 0 &\\
		&\Rightarrow h^2 + 2h\left(\frac{\delta_1\Delta B_n + \delta_2\Delta B_n - 2b - \delta_f}{2(\delta_1\delta_2\Delta B_n\Delta B_n + c_m c_f \gamma_m \gamma_f - b^2 - \delta_f b)}\right) &\\
		&\qquad + \left(\frac{\delta_1\Delta B_n + \delta_2\Delta B_n - 2b - \delta_f}{2(\delta_1\delta_2\Delta B_n\Delta B_n + c_m c_f \gamma_m \gamma_f - b^2 - \delta_f b)}\right)^2  &\\
		&\qquad > \left(\frac{\delta_1\Delta B_n + \delta_2\Delta B_n - 2b - \delta_f}{2(\delta_1\delta_2\Delta B_n\Delta B_n + c_m c_f \gamma_m \gamma_f - b^2 - \delta_f b)}\right)^2 &\\
		&\Rightarrow \left(h + \frac{\delta_1\Delta B_n + \delta_2\Delta B_n - 2b - \delta_f}{2(\delta_1\delta_2\Delta B_n\Delta B_n + c_m c_f \gamma_m \gamma_f - b^2 - \delta_f b)}\right)^2 
		> \left(\frac{\delta_1\Delta B_n + \delta_2\Delta B_n - 2b - \delta_f}{2(\delta_1\delta_2\Delta B_n\Delta B_n + c_m c_f \gamma_m \gamma_f - b^2 - \delta_f b)}\right)^2 &\\
		&\Rightarrow h + \frac{\delta_1\Delta B_n + \delta_2\Delta B_n - 2b - \delta_f}{2(\delta_1\delta_2\Delta B_n\Delta B_n + c_m c_f \gamma_m \gamma_f - b^2 - \delta_f b)} 
		> \frac{\delta_1\Delta B_n + \delta_2\Delta B_n - 2b - \delta_f}{2(\delta_1\delta_2\Delta B_n\Delta B_n + c_m c_f \gamma_m \gamma_f - b^2 - \delta_f b)}
	\end{flalign*}
	For any time step size $h$, the condition $h > 0$ holds.
\end{proof}

\newpage
\bibliographystyle{unsrt} 
\bibliography{references} 

\end{document}